\documentclass[12pt]{article}
\usepackage{amssymb,amscd}
\usepackage{verbatim}

\usepackage{xcolor}

\usepackage{amsmath,amssymb,graphicx,mathrsfs}   
\usepackage{enumerate}

\usepackage{tikz}
\usetikzlibrary{matrix}

\usepackage[all]{xy}


\usepackage{amssymb,amsfonts,amsthm,amsmath,calligra}
\usepackage{slashed}
\usepackage{yfonts}
\usepackage{mathrsfs,pifont}
\usepackage{float}

\usepackage[all]{xy}

\usepackage{tikz-cd}
\tikzcdset{every label/.append style = {font = \tiny}}

\usepackage{graphicx}
\usepackage{xcolor}

\usepackage{amssymb,amsfonts,amsthm,amsmath}
\usepackage[all]{xy}
\usepackage{slashed}
\usepackage{yfonts}
\usepackage{mathrsfs,pifont}

\usepackage{accents}

\usepackage[left=1in,right=1in]{geometry}


\let\frak\mathfrak

\def\>{\relax\ifmmode\mskip.666667\thinmuskip\relax\else\kern.111111em\fi}
\def\<{\relax\ifmmode\mskip-.333333\thinmuskip\relax\else\kern-.0555556em\fi}
\def\vsk#1>{\vskip#1\baselineskip}
\def\vv#1>{\vadjust{\vsk#1>}\ignorespaces}
\def\vvn#1>{\vadjust{\nobreak\vsk#1>\nobreak}\ignorespaces}

  \let\ssize\scriptstyle
\let\sssize\scriptscriptstyle

\let\Medskip\medskip
\def\medskip{\par\Medskip}
\let\Bigskip\bigskip
\def\bigskip{\par\Bigskip}

\let\Maketitle\maketitle
\def\maketitle{\Maketitle\thispagestyle{empty}\let\maketitle\empty}

\newtheorem{thm}{Theorem}[section]
\newtheorem{cor}[thm]{Corollary}

\newtheorem{prop}[thm]{Proposition}

\newtheorem{defn}[thm]{Definition}

\theoremstyle{definition}                                  
\numberwithin{equation}{section}

\theoremstyle{definition}

\let\mc\mathcal
\let\nc\newcommand

\let\la\lambda

\let\phi\varphi

\let\der\partial
\let\Hat\widehat

\let\Tilde\widetilde

\let\geq\geqslant

\let\leq\leqslant

\let\on\operatorname

\let\bs\boldsymbol

\def\C{{\mathbb C}}
\def\Z{{\mathbb Z}}

\def\F{{\mathbb F}}   

\def\+#1{^{\{#1\}}}

\def\beq{\begin{equation}}
\def\eeq{\end{equation}}
\def\be{\begin{equation*}}
\def\ee{\end{equation*}}

\nc{\bea}{\begin{eqnarray*}}
\nc{\eea}{\end{eqnarray*}}
\nc{\bean}{\begin{eqnarray}}
\nc{\eean}{\end{eqnarray}}

\nc{\Il}{{\mc I_{\bs\la}}}
\nc{\bla}{{\bs\la}}
\nc{\Fla}{\F_\bla}
\nc{\tfl}{{T^*\Fla}}
\nc{\GL}{{GL_n(\C)}}
\nc{\GLC}{{GL_n(\C)\times\C^*}}

\let\sd s 

\def\ddk_#1{\kk_{#1}\<\>\frac\der{\der\<\>\kk_{#1}}}

\def\bul{\mathbin{\raise.2ex\hbox{$\sssize\bullet$}}}
\def\intt{\mathchoice
{\mathop{\raise.2ex\rlap{$\,\,\ssize\backslash$}{\intop}}\nolimits}
{\mathop{\raise.3ex\rlap{$\,\sssize\backslash$}{\intop}}\nolimits}
{\mathop{\raise.1ex\rlap{$\sssize\>\backslash$}{\intop}}\nolimits}
{\mathop{\rlap{$\sssize\<\>\backslash$}{\intop}}\nolimits}}



\let\kk q 
\let\cc c

\let\Ko K

\def\GZ/{Gelfand-Zetlin}
\def\KZ/{{\slshape KZ\/}}
\def\qKZ/{{\slshape qKZ\/}}
\def\XXX/{{\slshape XXX\/}}

\nc{\A}{{\mc A}}

\nc{\hsl}{\widehat{{\frak{sl}_2}}}

\nc{\BC}{{ \mathbb C}}
\nc{\lra}{\longrightarrow}
\nc{\CO}{{\mathcal{O}}}
\nc{\BZ}{{ \mathbb Z}}
\nc{\hfn}{\hat{\frak{n}}}
\nc\Zs{{\Z/p^s\Z}}
\nc\Zo{{\Zs[z]^0}}
\nc\gr{{\on{gr}}}

\nc\fD{{\frak D}}

\newcommand{\matC}{\mathbb{C}}
\newcommand{\matN}{\mathbb{N}}

\newcommand{\matP}{\mathbb{P}}

\newlength{\dhatheight}

\usepackage{tikz}
\usepackage{ytableau}

\usetikzlibrary{decorations}
\usetikzlibrary{decorations.pathmorphing}
\usetikzlibrary{calc} 

\usepackage{hyperref}

\title{Capped Vertex Functions for $\text{Hilb}^n(\mathbb{C}^2)$}

\author{Jeffrey Ayers and Andrey Smirnov}


\begin{document}
\maketitle
\begin{abstract}
   We obtain explicit formulas for the $K$-theoretic capped descendent vertex functions of $\text{Hilb}^n(\mathbb{C}^2)$ for descendents given by the exterior  algebra of the tautological bundle. This formula provides a one-parametric deformation of  the generating function for normalized Macdonald polynomials. In particular, we show that the capped vertex functions are rational functions of the quantum parameter. 
\end{abstract}


\setcounter{footnote}{0}
\renewcommand{\thefootnote}{\arabic{footnote}}

\section{Introduction}
K-theoretic capped vertex functions were defined in \cite{Ok17} as partition functions of relative quasimaps to Nakajima varieties. 
In this paper we consider capped vertex functions for the variety $X=\mathrm{Hilb}^{n}(\C^2)$ given by Hilbert scheme of $n$-points in the plane $\C^2$. We consider vertex functions with a special type of descendents given by the exteriors powers of the tautological bundle. In this case we obtain an explicit combinatorial formula for the capped vertex function. 

It was conjectured and proved in several special cases \cite{PaPi10,PaPi12,Sm16} that the partition functions corresponding to the capped vertex functions with descendents are Taylor series expansions of  rational functions. The explicit formula we obtain in this paper confirms this conjecture for this special choice of descendents. In this section we overview the main definitions and  results.

\subsection{Content of the Paper} 

In \cite{Ok17}, Okounkov introduced the capped vertex function with descendents as the following partition function:

\begin{defn}(Section 7 of \cite{Ok17})
    The capped vertex with descendent $\tau$ for  a Nakajima quiver variety $X$ is the following generating function:
    \bean \label{capverdefin}
    \widehat{\textup{V}}^{(\tau)}(z):=\sum_{d} {\textsf{\textup{ev}}}_{p_2,*}\left(\textsf{\textup{QM}}^d_{rel\,p_2},\hat{\mathcal{O}}^d_{vir} \otimes \textsf{\textup{ev}}^{*}_{p_1}(\tau) \right)z^d \in K_{G}\left(X\right)_{}[[z]]
    \eean 
\end{defn}
Here $\textsf{\textup{QM}}^d_{rel\,p_2}$ denotes the moduli space of quasimaps from $\matP^1$ to a Nakajima variety $X$ relative to $p_2=\infty\in \matP^1$. By $\hat{\mathcal{O}}^d_{vir} \in K_{G}(\textsf{\textup{QM}}^d_{rel\,p_2})$ we denote the symmetrized virtual structure sheaf of the moduli space $\textsf{\textup{QM}}^d_{rel\,p_2}$. This moduli space is equipped with a proper map ${\textsf{\textup{ev}}}_{p_2}:\, \textsf{\textup{QM}}^d_{rel\,p_2} \to X$ and ${\textsf{\textup{ev}}}_{p_2,*}$ denotes the corresponding push-forward of the  $G$-equivariant K-theory groups, where $G$ is a certain symmetry group acting on both the moduli space and $X$. For a point $p_1=0 \in \matP^2$ we also have a map ${\textsf{\textup{ev}}}_{p_1}:\, \textsf{\textup{QM}}^d_{rel\,p_1} \to [X]$ where $[X]$ is the quotient stack corresponding to Nakajima variety $X$ (we recall that $X$ is a GIT quotient, while $[X]$ is a categorical quotient). Using the pull-back map $\textsf{\textup{ev}}^{*}_{p_1}$ for any class $\tau \in K_{G}([X])$ we can construct a K-theory class $\textsf{\textup{ev}}^{*}_{p_1}(\tau)$ on the moduli space.  The resulting vertex function is a generating function of the corresponding push-forwards over all degrees~$d$. The parameter $z$ counting the degrees of quasimaps is usually referred to as the {\it K\"ahler parameter}.

In this paper we study the capped vertex function for $X=\textup{Hilb}^n(\mathbb{C}^2)$ with descendents 
\bean \label{descdef}
\tau= \Lambda^{\bullet}_u (\mathcal{V}) = \sum_{k=1}^{n}  (-1)^k \Lambda^{k}(\mathcal{V}) u^k
\eean 
where $\mathcal{V}$ is the tautological bundle on the Hilbert scheme.

Let $T=(\matC^{\times})^2$ be a two-dimensional torus acting on $\matC^2$ by $(x,y)\to (t_1 x, t_2 y)$. We consider the induced action of $T$ on $\textup{Hilb}^n(\mathbb{C}^2)$ and the corresponding $T$-equivariant K-theory groups.
Recall that the set of $T$-fixed points is labeled by partitions of $n$: 
$$
\textup{Hilb}^n(\mathbb{C}^2)^{T}=\{\lambda:  |\lambda|=n \}.
$$
The classes of torus fixed points $[\lambda]$ form a basis of equivariant K-theory $K_{T}\left(\textup{Hilb}^n(\mathbb{C}^2)\right)$.

It is well-known that the equivariant cohomology and K-theory of the Hilbert schemes are equipped with an action of the Heisenberg algebra, see Section 8 of \cite{NakajimaLectures1} or \cite{FT2011} for a construction. 
Using this action $T$-equivariant  $K$-theories of $\textup{Hilb}^n(\mathbb{C}^2)$ can be identified with an infinite dimensional polynomial space,  called the Fock space. The classes of the $T$-fixed points are identified with the Macdonald polynomial normalized as in the work of Haiman~\cite{Ha00}:

\begin{thm}(\cite{FT2011}, \cite{NakajimaLectures1},\cite{Ha00})
There is an isomorphism of graded vector spaces 
\bean \label{fockdef}
\bigoplus_{n=0}^{\infty}K_{T}\left(\textup{Hilb}^n(\mathbb{C}^2)\right)_{loc}= \textsf{\textup{Fock}}:=\mathbb{Q}[p_1,p_2,...]\otimes_{\mathbb{Z}}\mathbb{Q}(t_1,t_2)
\eean
where the grading on the left side is by $n$ and on the right is by $\textrm{deg}(p_k)=k$. Under this isomorphism the K-theory classes of the torus fixed points are identified with the Macdonald polynomials (in Haiman's normalization):
$$
[\lambda] \longrightarrow H_{\lambda}
$$
\end{thm}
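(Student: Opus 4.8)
The statement combines two known facts, so the plan is to recall the geometric constructions underlying each and indicate how they fit together, following \cite{FT2011, NakajimaLectures1, Ha00}. First I would recall the Nakajima--Grojnowski correspondences in their K-theoretic form. For each $k \geq 1$ one defines creation operators
$$ \mathfrak{a}_{-k}\colon K_T(\text{Hilb}^n(\matC^2))_{loc} \to K_T(\text{Hilb}^{n+k}(\matC^2))_{loc} $$
by pushing and pulling along the incidence locus of nested ideals $\{(I,I') : I'\subset I,\ \ell(I/I')=k,\ I/I' \text{ supported at one point}\}$, together with adjoint annihilation operators $\mathfrak{a}_{k}$. The first task is to verify that these operators satisfy the defining relations of a Heisenberg algebra, with central charge a rational function of $t_1,t_2$; this is the content of the K-theoretic lift of Nakajima's commutation theorem.

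Second, I would show that the total space $\bigoplus_n K_T(\text{Hilb}^n(\matC^2))_{loc}$ is an irreducible module for this Heisenberg algebra, cyclically generated by the vacuum $[\text{Hilb}^0]=1 \in K_T(\text{pt})_{loc}$ and annihilated by all $\mathfrak{a}_k$ with $k>0$. Irreducibility together with cyclicity pins the module down uniquely: it must be isomorphic to the polynomial Fock representation $\matQ[p_1,p_2,\dots]\otimes\matQ(t_1,t_2)$, under which $p_k$ is sent to $\mathfrak{a}_{-k}$ applied to the vacuum. Matching the gradings is then immediate, since internal degree $n$ on the left corresponds to total weight on the right and $\mathfrak{a}_{-k}$ raises $n$ by $k$, which is exactly $\deg p_k = k$.

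The remaining and most delicate step is to identify the fixed-point basis $\{[\lambda]\}$ with the Macdonald polynomials $H_\lambda$ in Haiman's normalization, for which I would invoke the characterization of $H_\lambda$ by triangularity and orthogonality. Concretely, I equip $K_T(\text{Hilb}^n(\matC^2))_{loc}$ with the localized equivariant Euler pairing and show that, under the Fock isomorphism above, this pairing becomes the Macdonald inner product with $(q,t)=(t_1,t_2)$ up to the standard conventions. Since distinct torus fixed points have disjoint support, the classes $[\lambda]$ are automatically orthogonal for this pairing, matching the orthogonality of the $H_\lambda$. One then checks that the transition matrix from $\{[\lambda]\}$ to the power-sum (equivalently Schur) basis is triangular with respect to the dominance order on partitions; combined with orthogonality and a leading-coefficient normalization, this uniquely determines both families and forces $[\lambda]=H_\lambda$, which is precisely Haiman's result \cite{Ha00}.

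The hard part will be the bookkeeping of normalizations: verifying that the geometric Euler pairing reproduces the Macdonald inner product exactly, and that the torus weights enter so as to yield Haiman's normalization $H_\lambda$ rather than one of the other standard renormalizations of the Macdonald basis. Once the pairing is matched and the triangularity in dominance order is established, the identification of fixed points with Macdonald polynomials is forced by the uniqueness built into the Macdonald characterization.
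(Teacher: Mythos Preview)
The paper does not prove this theorem at all: it is stated as a known result and attributed to \cite{FT2011,NakajimaLectures1,Ha00}, with no argument given. Your outline is a reasonable sketch of how the cited literature establishes the statement, so there is nothing to compare against in the paper itself.
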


\noindent
The main result of this paper is the explicit formula of the following generating function 

\bean \label{partfundef}
F(z,y)=\sum_n \widehat{\textup{V}}_{\text{Hilb}^n(\mathbb{C}^2)}^{(\tau)}(z) y^n \in \textsf{\textup{Fock}}[[z,y]]
\eean
where $\widehat{\textup{V}}_{\text{Hilb}^n(\mathbb{C}^2)}^{(\tau)}(z)$ is the capped vertex function for the Hilbert scheme ${\text{Hilb}^n(\mathbb{C}^2)}$ with descendent (\ref{descdef}).  In the notations of the Fock space (\ref{fockdef}) we obtain:
\begin{thm}(Theorem \ref{mainformula})
The generating function (\ref{partfundef}) is a Taylor series expansion of the following function:
\bean \label{mainformulaintro}
F(z,y) =  \exp\left( \sum\limits_{k=1}^{\infty} \dfrac{y^k  }{k (1-t_1^{2k})(1-t_2^{2k})} ( ( 1 -  u^k ) \hbar^{2 k} p_{k}+z^k \hbar^{ 2 k} q^{-k} \dfrac{\hbar^k - \hbar^{-k}}{1-(z\hbar/q)^k} p_k  )\right)
\eean
where $\hbar=t_1t_2$, and $q$ denoted the equivariant parameters of torus $\mathbb{C}^{\times}$ acting on the source of the quasimaps. 
\end{thm}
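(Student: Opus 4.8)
The plan is to evaluate $F(z,y)$ of \eqref{partfundef} by equivariant localization on the relative quasimap spaces $\qm^{d}_{\mathrm{rel}\,p_2}$, to organize the answer by quasimap degree $d$, and to recognize the resulting series as a Heisenberg coherent state in the Fock space of \eqref{fockdef}. Concretely, I would posit the ansatz $F(z,y)=\exp\!\bigl(\sum_{k\ge1}\tfrac{c_k}{k}\,p_k\bigr)\cdot 1$, where the $p_k$ act as the bosonic creation operators on the vacuum $1\in K_T(\mathrm{Hilb}^0(\C^2))$ and $c_k=c_k(y,z,u)$ are scalars. Since the $p_k$ freely generate the Fock space over the vacuum, the target identity \eqref{mainformulaintro} is equivalent to the two assertions that (i) $\log F$ is \emph{linear} in the power sums $p_k$, and (ii) the coefficient of $p_k$ is the stated rational function of $z$. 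Granting this form, rationality of each $y^n$-coefficient is immediate: every $c_k$ depends on $z$ only through $1/(1-(z\hbar/q)^k)$, so each Taylor coefficient is a finite product of such factors, confirming the conjecture of \cite{PaPi10,PaPi12,Sm16} for the descendent \eqref{descdef}.

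I would first isolate the classical, degree-zero term. For $d=0$ the quasimap is constant, both evaluation maps are the identity, and $\hat{\mathcal{O}}_{vir}$ restricts to the symmetrized structure sheaf of $X$, which up to the polarization twist is $\mathcal{O}_X$; hence $\widehat{V}^{(\tau)}_{d=0}$ is a twist of the descendent class $\Lambda^\bullet_u(\mathcal V)\in K_T(\mathrm{Hilb}^n(\C^2))$. Using $\mathcal V|_\lambda=\sum_{\square\in\lambda}\chi(\square)$ one gets $\Lambda^\bullet_u(\mathcal V)|_\lambda=\prod_{\square\in\lambda}(1-u\,\chi(\square))$, whose plethystic logarithm is linear in the box power sums $\sum_{\square}\chi(\square)^k$. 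Summing over $n$ and passing to the Fock space via $[\lambda]\mapsto H_\lambda$, the K-theoretic Lehn-type generating formula for the exterior powers of the tautological bundle expresses this sum as a coherent state and reproduces the $z$-independent part of the exponent in \eqref{mainformulaintro}. The squared equivariant variables $t_i^{2k}$ and the powers of $\hbar=t_1t_2$ appearing there are produced by the symmetrized (square-root) virtual structure sheaf and the induced polarization twist, which I would track uniformly in $k$.

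For the positive-degree contributions I would characterize $\widehat{V}^{(\tau)}(z)$ as the unique $z$-rational solution of the quantum difference equation of $\mathrm{Hilb}^n(\C^2)$ \cite{Ok17} whose $z\to0$ limit is the class just computed, or equivalently factor the capped vertex as the capping operator applied to the bare vertex. Localization on $\qm^{d}_{\mathrm{rel}\,p_2}$ writes the degree-$d$ term as a sum over fixed quasimaps, and the crucial computation is that, for the descendent \eqref{descdef}, these contributions resum over $d$ into a single geometric series $\sum_{d\ge0}(z\hbar/q)^{dk}=1/(1-(z\hbar/q)^k)$. Here $q$ enters precisely as the weight of the source-$\C^\times$ on the tangent line to $\PP^1$ at the relative point $p_2=\infty$, which is why the series is in $z\hbar/q$; carrying out the sum produces the quantum term $z^k\hbar^{2k}q^{-k}\,\tfrac{\hbar^k-\hbar^{-k}}{1-(z\hbar/q)^k}\,p_k$ of the exponent.

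The main obstacle is establishing the coherent-state form itself, that is, that the connected generating function $\log F$ is \emph{exactly} linear in the $p_k$ once the quantum corrections are included. Geometrically this is a strong factorization/vanishing statement, saying that only ``one-boson'' configurations contribute to $\log F$; I expect to prove it either through the free-field bosonization of the Fock space, in which $\Lambda^\bullet_u(\mathcal V)$ is realized as a single vertex operator so that the relevant vacuum expectation value is automatically an exponential after normal ordering, or through a rigidity argument forcing the higher connected pieces to vanish. Verifying that the explicit right-hand side of \eqref{mainformulaintro} satisfies the quantum difference equation, and controlling the $\hbar$-normalizations of the symmetrized sheaf uniformly in the degree, are the remaining technical crux.
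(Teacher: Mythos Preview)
Your proposal has a genuine gap: it lacks the key structural idea that makes the computation tractable. You plan to localize directly on the relative quasimap spaces $\qm^d_{\mathrm{rel}\,p_2}$, but these spaces are not amenable to direct torus localization in the way the nonsingular moduli $\qm^d_{\mathrm{ns}\,p_2}$ are; the relative condition involves bubbling and expanded degenerations, and there is no simple combinatorial description of the fixed loci. This is precisely why the capped vertex is hard and why the paper does \emph{not} attempt a direct localization. Your ``main obstacle'' --- showing that $\log F$ is linear in the $p_k$ --- is in fact the entire content of the theorem, and neither the free-field bosonization sketch nor an unspecified rigidity argument supplies a mechanism for the vanishing of higher connected pieces.

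The paper's route is completely different. It exploits large framing vanishing (Theorem~\ref{LargeFrame}): for the rank-$2$ instanton moduli $\mathcal{M}(n,2)$ the capped vertex with descendent $\tau$ has \emph{no} quantum corrections, so it is known explicitly. One then takes the limit $a\to 0$ of the framing torus $\mathbb{C}^\times_a$, under which $\mathcal{M}(n,2)$ degenerates to a union of products of Hilbert schemes. In this limit the bare vertex and the capping operator each factor (Propositions~\ref{vertexlimit} and~\ref{limitcapping}), the latter picking up the explicit fusion operator $Y(z)=J(z)J(0)^{-1}$ of the toroidal algebra $\mathcal{U}_\hbar(\ddot{\mathfrak{gl}}_1)$. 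This yields a linear equation $\tau\otimes 1 = Y(z)\bigl(\widehat{V}^{(\tau),(1)}(z)\otimes\cdots\bigr)$ in $\textsf{Fock}^{\otimes 2}$, which one inverts using the known Heisenberg action (\ref{heisact}). The linearity of $\log F$ in the $p_k$ is then not an ansatz to be verified but an automatic consequence of the fact that $Y(z)$ is an exponential of $\alpha^0_{-k}\otimes\alpha^0_k$ acting on the coherent state (\ref{genvert0}) for the classical descendent.
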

When $z=0$ and $u=0$ this formula describes the generating function of the classes of structure sheaves ${\mathcal{O}}_{\textup{Hilb}^{n}(\matC^2)} \in K_{T}(\textup{Hilb}^{n}(\matC^2))$, which can be expressed as a sum of Macdonald polynomials $H_{\lambda}$ in the Fock space. Our formula, therefore, may be understood as a $(z,u)$-deformation of the well-known identity for the Macdonald polynomials:
\begin{prop}(Proposition \ref{macidentity})
   $$ \sum_{\lambda}\frac{H_{\lambda}}{\Lambda^{\bullet}(T_{\lambda}\textup{Hilb}^{|\lambda|}(\mathbb{C}^2))}y^{|\lambda|}=\exp\left(\sum\limits_{k=1}^{\infty} \dfrac{ y^k \hbar^{2 k} p_{k}}{k (1-t_1^{2k})(1-t_2^{2k})}  \right) $$
\end{prop}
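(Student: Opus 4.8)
The plan is to identify the left-hand side as the fixed-point (localization) expansion of the generating function of structure sheaves, and then to evaluate it using the reproducing kernel for Macdonald polynomials. By the $K$-theoretic localization theorem, any class $\alpha\in K_T(\text{Hilb}^n(\C^2))$ expands as $\alpha=\sum_\lambda \frac{\iota_\lambda^*\alpha}{\Lambda^\bullet(T^*_\lambda)}[\lambda]$, where $\iota_\lambda$ is the inclusion of the fixed point. Applying this to $\alpha=\mathcal O_{\text{Hilb}^n}$, whose restriction to every fixed point is trivial, shows that $\sum_\lambda \frac{[\lambda]}{\Lambda^\bullet(T_\lambda)}$ agrees, up to the canonical dimension twist relating $\Lambda^\bullet(T_\lambda)$ and $\Lambda^\bullet(T^*_\lambda)$ (the $2|\lambda|$ tangent weights multiply to $\hbar^{2|\lambda|}$, so $\Lambda^\bullet(T_\lambda)=\hbar^{2|\lambda|}\Lambda^\bullet(T^*_\lambda)$), with the structure sheaf class. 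Under the isomorphism $[\lambda]\mapsto H_\lambda$ of the theorem, the left-hand side is therefore the image of $\sum_n \mathcal O_{\text{Hilb}^n}\,y^n$ in the Fock space.

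Next I would invoke Haiman's explicit formula for the tangent character at a fixed point, $T_\lambda=\sum_{\square\in\lambda}\bigl(t_1^{-l(\square)}t_2^{a(\square)+1}+t_1^{l(\square)+1}t_2^{-a(\square)}\bigr)$, which yields a closed arm–leg product for $\Lambda^\bullet(T_\lambda)$. The essential input is that, with respect to the geometric localized Euler pairing, the classes $H_\lambda$ form an orthogonal basis with $\langle H_\lambda,H_\lambda\rangle$ equal to this tangent Euler class (up to the same $\hbar$-twist). This is the $K$-theoretic avatar of Haiman's identification of the tangent Euler class with the Macdonald norm $\tilde w_\lambda$, under the parameter dictionary $q=t_1^2$, $t=t_2^2$ forced by the squared equivariant weights.

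With orthogonality in hand the computation reduces to a standard reproducing-kernel identity. For any orthogonal basis the kernel $\sum_\lambda \frac{H_\lambda[X]\,H_\lambda[Y]}{\langle H_\lambda,H_\lambda\rangle}$ is basis-independent and equals $\exp\bigl(\sum_{k\ge1}\frac{p_k[X]\,p_k[Y]}{k\,w_k}\bigr)$, where $w_k$ is fixed by $\langle p_k,p_k\rangle=k\,w_k$. I would then specialize the auxiliary alphabet $Y$ to a single variable $c$: since the coefficient of $s_{(n)}$ in $H_\lambda$ equals $1$, the principal specialization gives $H_\lambda[c]=c^{|\lambda|}$, so the double sum collapses to $\sum_\lambda \frac{H_\lambda[X]\,c^{|\lambda|}}{\langle H_\lambda,H_\lambda\rangle}$. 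Choosing $c=y\hbar^2$ converts this into the left-hand side of the Proposition and turns the kernel into $\exp\bigl(\sum_k \frac{y^k\hbar^{2k}p_k}{k(1-t_1^{2k})(1-t_2^{2k})}\bigr)$, as claimed.

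The main obstacle is not the structural identity — the reproducing-kernel step and the single-variable specialization are routine — but the careful matching of all equivariant normalizations. One must track the $\hbar$-twists coming from the symmetrized virtual structure sheaf together with the dimension factor $\hbar^{2|\lambda|}$, and verify that the geometric norm $\langle p_k,p_k\rangle$ reproduces exactly the coefficient $\tfrac{(1-t_1^{2k})(1-t_2^{2k})}{\hbar^{2k}}$ under the dictionary $q=t_1^2,\ t=t_2^2$. Arranging these bookkeeping factors so that the stray per-box monomials in $\Lambda^\bullet(T_\lambda)$ assemble into the clean symmetric exponential is where the real care is required.
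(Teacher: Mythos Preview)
Your proposal is correct and follows essentially the same route as the paper: both reduce the identity to the Cauchy/reproducing-kernel identity for the modified Macdonald basis and then specialize the second alphabet to a single variable. The paper's proof is a one-line citation of \cite{Me16} with $Y=1$, while you unpack the geometric localization interpretation and the orthogonality/kernel argument in more detail; your acknowledged bookkeeping of the $\hbar$-twist and the dictionary $q=1/t_1^2,\ t=1/t_2^2$ is exactly what is needed to match the paper's normalization.
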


Notice that the $y$-coefficients in the Taylor expansion of (\ref{mainformulaintro}) are a manifestly rational functions of $z$, and thus we get the following corollary

\begin{cor}(Corollary \ref{rationalcor})
The capped vertex function $\widehat{\textup{V}}_{\textup{Hilb}^n(\mathbb{C}^2)}^{(\tau)}(z)$ is a Taylor series expansion of a rational function in $z$:
    $$\widehat{\textup{V}}_{\textup{Hilb}^n(\mathbb{C}^2)}^{(\tau)}(z)\in \mathbb{Q}(t_1,t_2,z,q,u)$$
\end{cor}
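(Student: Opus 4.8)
The plan is to deduce the corollary directly from the closed formula for $F(z,y)$ in Theorem \ref{mainformula}, using only that $\matQ(t_1,t_2,z,q,u)$ is a field and hence closed under the finitely many operations that enter the extraction of a single $y^n$-coefficient. First I would rewrite the exponent in (\ref{mainformulaintro}) as $\sum_{k\ge 1}\frac{y^k}{k}\,c_k(z)\,p_k$, where
\[
c_k(z)=\frac{1}{(1-t_1^{2k})(1-t_2^{2k})}\left((1-u^k)\hbar^{2k}+z^k\hbar^{2k}q^{-k}\,\frac{\hbar^k-\hbar^{-k}}{1-(z\hbar/q)^k}\right).
\]
The key observation is that each coefficient satisfies $c_k(z)\in\matQ(t_1,t_2,z,q,u)$: the only $z$-dependent denominator is $1-(z\hbar/q)^k$, a polynomial in $z$, so $c_k$ has poles only along $z^k=(q/\hbar)^k$ and is otherwise regular. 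Thus the entire $z$-dependence of $F$ is funneled through this single countable family of rational functions, each multiplying a power sum $p_k$.

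Next I would extract the coefficient of $y^n$. Because the exponent is a sum over $k$ of terms proportional to $y^k p_k$, the standard expansion of the exponential gives
\[
\widehat{\textup{V}}^{(\tau)}_{\text{Hilb}^n(\matC^2)}(z)=[y^n]\,F(z,y)=\sum_{\mu\vdash n}\ \prod_{k\ge 1}\frac{c_k(z)^{m_k(\mu)}}{k^{m_k(\mu)}\,m_k(\mu)!}\,p_k^{m_k(\mu)},
\]
where $\mu$ ranges over partitions of $n$ and $m_k(\mu)$ denotes the number of parts of $\mu$ equal to $k$. This is a finite sum of finite products: only partitions of $n$ occur, and in each term only the factors $c_k(z)$ with $k\le n$ appear.

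Finally, since $\matQ(t_1,t_2,z,q,u)$ is a field containing every $c_k(z)$, each summand lies in $\matQ(t_1,t_2,z,q,u)[p_1,\dots,p_n]$, and therefore so does the finite sum. Reading off the coefficient of any monomial in the power sums (equivalently, passing to the fixed-point basis via the isomorphism (\ref{fockdef})) then exhibits $\widehat{\textup{V}}^{(\tau)}_{\text{Hilb}^n(\matC^2)}(z)$ as the Taylor expansion of a rational function with the advertised field of definition, which is the content of the corollary.

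I do not anticipate a genuine obstacle here, since all the analytic content is carried by Theorem \ref{mainformula}; the only point deserving a moment's care is to confirm that truncating to a single $y^n$-coefficient leaves only finitely many of the factors $1-(z\hbar/q)^k$ in play. This is exactly what guarantees that no infinite product of poles—which could accumulate and destroy rationality in the full series $F(z,y)$—ever appears in any fixed $\widehat{\textup{V}}^{(\tau)}_{\text{Hilb}^n(\matC^2)}(z)$.
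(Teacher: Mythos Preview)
Your proposal is correct and follows essentially the same approach as the paper: the paper simply observes that the $y$-Taylor coefficients of the explicit exponential formula in Theorem~\ref{mainformula} are manifestly rational in $z$, and you have spelled out this observation in detail by writing the exponent as $\sum_{k\ge 1}\frac{y^k}{k}c_k(z)p_k$ and noting that the $y^n$-coefficient involves only finitely many of the rational functions $c_k(z)$ with $k\le n$.
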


We remark also that our main result (\ref{mainformulaintro}) can be written as
\bean \label{ook}
F(z,y) =  \mathsf{S}^{\bullet}\left(
\dfrac{y \hbar^2 p_1 }{(1-t_1^{2})(1-t_2^{2})} ( ( 1 - u )+z q^{-1} \dfrac{\hbar - \hbar^{-1}}{1-(z\hbar/q)} ) \right)
\eean
where $\mathsf{S}^{\bullet}$ denotes the 
symmetric algebra or the {\it plethystic exponential} of the argument, for instance see Section 2.1 of \cite{Ok17} for notations and definitions. We note also that structurally, our formula for descendents look similar to explicit formulas for two-legs capped vertex function obtained in \cite{KOO2011} which also have a convenient presentation as plethystic exponential of expression linear in $p_1$'s, see Theorem 1 in \cite{KOO2011}. This property of (\ref{ook}) implies that the collection $\widehat{\textup{V}}_{\textup{Hilb}^n(\mathbb{C}^2)}^{(\tau)}(z)$
labeled by $n \in \mathbb{N}$, with descendents (\ref{descdef}) is {\it factorizable} see Section 5.3.1 of \cite{Ok17} for definition. It would be interesting to find a direct geometric argument explaining this property. 

\subsection{Idea of the proof}
Let us outline the main ideas used in the proof of (\ref{mainformulaintro}).  It is well known that the Hilbert scheme 
$\text{Hilb}^n(\mathbb{C}^2)$ is a special case of a more general family of smooth symplectic varieties $\mathcal{M}(n,r)$ labeled by $n,r \in \matN$, known as instanton moduli spaces. When the rank of the instantons is $r=1$, we obtain the Hilbert scheme $\mathcal{M}(n,1)=\text{Hilb}^n(\mathbb{C}^2)$. The moduli space $\mathcal{M}(n,r)$ is an example of a Nakajima quiver variety and therefore we may study the capped vertex functions associated to $\mathcal{M}(n,r)$. 
Central to us is the result of Okounkov, see Theorem 7.5.23 in \cite{Ok17}, on large framing vanishing.  This theorem says that when $r$ is sufficiently large  these capped vertex functions are classical, i.e., all terms in the sum (\ref{capverdefin}) with $d>0$ vanish. For the descendents we study in this paper (\ref{descdef}) this happens when $r\geq 2$. This means that  for $\mathcal{M}(n,r)$ with $r\geq 2$ we know the corresponding capped vertex explicitly:
$$
\widehat{\textup{V}}^{(\tau)}_{\mathcal{M}(n,r)}(z)= \tau \in K_{G}(\mathcal{M}(n,r))
$$
and the right side is independent of $z$. 

The moduli space of rank $r=2$ instantons is equipped with an action of the torus $\matC^{\times}_a \cong \matC^{\times}$ such that the corresponding set of fixed points has the form:
\bean \label{limdecomp}
\mathcal{M}(n,2)^{\mathbb{C}_a^{\times}}= \coprod_{n_1+n_2=n} \textup{Hilb}^{n_1}(\mathbb{C}^2) \times \textup{Hilb}^{n_2}(\mathbb{C}^2).
\eean
Our idea is to extract the information about the capped vertex function for the Hilbert scheme from the limit of $a\to 0$ of the $\matC^{*}_a$ -equivariant capped vertex function of $\mathcal{M}(n,2)$, where $a$ denotes the equivariant parameter associated with the torus $\mathbb{C}_a^*$.

The main technical tool for computing this limit is the following result:
\begin{thm}(Section 7.4 of \cite{Ok17}) \label{CappingVertexIntro} The capping operator factors into a product
$$
\widehat{\textup{V}}^{(\tau)}(z) = \Psi(z) \textup{V}^{(\tau)}(z)
$$
where $\textup{V}^{(\tau)}(z)$ is the bare vertex function with descendent $\tau$ and $\Psi(z)$ is the capping operator. 
\end{thm}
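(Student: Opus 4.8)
The plan is to prove the factorization geometrically, by degenerating the base curve $\matP^1$ of the quasimap problem and applying the resulting gluing formula in equivariant $K$-theory. Each of the three objects has a clean description in terms of boundary conditions at the two marked points. The capped vertex $\widehat{\textup{V}}^{(\tau)}(z)$ counts quasimaps from $\matP^1$ that are relative at $p_2=\infty$ and carry the descendent $\tau$ at $p_1=0$. The bare vertex $\textup{V}^{(\tau)}(z)$ counts quasimaps with the same descendent at $p_1$ but which are merely \emph{nonsingular} at $p_2$, so that $\textsf{ev}_{p_2}$ lands in $X$. The capping operator $\Psi(z)$ counts quasimaps relative at $p_2=\infty$ and nonsingular at $p_1=0$, with no descendent, viewed as an operator on $K_G(X)$. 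The content of the theorem is then that gluing a nonsingular--nonsingular descendent tube to a nonsingular--relative cap reconstructs the relative geometry of the capped vertex.

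First I would degenerate the target $\matP^1$ into a nodal chain $\matP^1 \cup_{\mathrm{pt}} \matP^1$, placing $p_1=0$ on one component and the relative point $p_2=\infty$ on the other, with the two components meeting at a single node. The degeneration formula of \cite{Ok17} for the relative quasimap theory expresses the pushforward defining $\widehat{\textup{V}}^{(\tau)}(z)$ as a convolution of contributions from the two components. The component carrying $0$ contributes quasimaps that are nonsingular at the node and carry $\tau$, reproducing the bare vertex; the component carrying $\infty$ contributes quasimaps relative at $\infty$ and nonsingular at the node, reproducing the capping operator. Since the K\"ahler parameter $z$ is multiplicative in the degree $d$, its weights combine correctly across the node.

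The crucial step is the gluing identity at the node. The two pieces are matched along their evaluation maps to $X$ at the node, the convolution being taken against the diagonal in $X\times X$. For the \emph{symmetrized} virtual structure sheaf $\hat{\mathcal{O}}^d_{vir}$ this gluing measure is exactly the Poincar\'e pairing on $K_G(X)$, so that summing over node conditions becomes honest operator composition $\Psi(z)\,\textup{V}^{(\tau)}(z)$. One verifies that the obstruction theories of the two pieces glue to that of the broken curve, with the node contribution canceling against the self-dual normalization built into $\hat{\mathcal{O}}^d_{vir}$, so that no residual edge factor survives. This is precisely where the symmetrization matters: it is the normalization that renders the node gluing operator the identity, which would fail for the naive $\mathcal{O}_{vir}$.

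I expect the main obstacle to be establishing this degeneration and gluing formula at the level of the symmetrized virtual structure sheaf, rather than anything specific to $\text{Hilb}^n(\mathbb{C}^2)$. One must check that $\hat{\mathcal{O}}^d_{vir}$ is compatible with the degeneration, that its square-root polarization restricts correctly to the two components and to the node, that all evaluation maps stay proper so the pushforwards are defined, and that the rubber contributions introduce no extra scaling. Granting the general gluing formula of \cite{Ok17}, the factorization is then formal, and the real work lies entirely in the virtual-class bookkeeping; this is why the statement is invoked as a cited result rather than reproved here.
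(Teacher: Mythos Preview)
The paper does not give its own proof of this statement; it is stated with attribution to Section 7.4 of \cite{Ok17} and used as a black box throughout. Your sketch via degeneration of the domain curve and the gluing formula for the symmetrized virtual structure sheaf is exactly the argument in the cited reference, and you correctly anticipate in your final paragraph that the paper invokes rather than reproves it.
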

The bare vertex function $\textup{V}^{(\tau)}(z)$ is defined similarly to (\ref{capverdefin}), with the moduli space $\textsf{\textup{QM}}^d_{rel\,p_2}$ replaced by the moduli space of quasimaps $\textsf{\textup{QM}}^d_{ns\,p_2}$ non-singular at $p_2$. The bare vertex functions are much simpler to compute using equivariant localization in K-theory. Using this explicit description of the bare vertex function we obtain the following result. 
\begin{prop}(Proposition \ref{vertexlimit})
We have the following limit of vertex functions:
\bean \label{limbare}
\lim\limits_{a\to 0} \, \CO(-1)\textup{V}^{\tau,(2)}(z)= \CO(-1)\textup{V}^{\tau,(1)}(z (-\hbar q)^{-1}) \otimes \CO(-1)\textup{V}^{\tau,(1)}(z (-\hbar q))
\eean
\end{prop}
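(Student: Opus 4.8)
The plan is to evaluate $\CO(-1)\textup{V}^{\tau,(2)}(z)$ by $K$-theoretic equivariant localization with respect to the full torus acting on $\mathcal{M}(n,2)$ and on the source $\matP^1$, and then to extract its $a\to 0$ asymptotics through the fixed-point decomposition (\ref{limdecomp}). First I would write the bare vertex of $\mathcal{M}(n,2)$ as a sum over torus-fixed quasimaps nonsingular at $p_2$. Because the $\matC^{\times}_a$-fixed locus of $\mathcal{M}(n,2)$ is $\coprod_{n_1+n_2=n}\textup{Hilb}^{n_1}(\matC^2)\times\textup{Hilb}^{n_2}(\matC^2)$, a $\matC^{\times}_a$-fixed quasimap to $\mathcal{M}(n,2)$ is a pair of quasimaps into the two factors; correspondingly the quasimap degree splits as $d=d_1+d_2$ and the localization sum reorganizes as a double sum over the combinatorial data of the two Hilbert-scheme factors.

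The second step is to split, at each fixed quasimap, the full integrand — the symmetrized virtual structure sheaf together with the descendent $\tau$ of (\ref{descdef}) and the $\CO(-1)$ prefactor — according to its $a$-weight. The virtual tangent complex and the polarization decompose into a diagonal part of $a$-weight $0$ and off-diagonal parts carrying $a^{\pm 1}$. The diagonal part should be, term by term, exactly the integrand of the two rank-one bare vertices $\CO(-1)\textup{V}^{\tau,(1)}$ for $\textup{Hilb}^{n_1}$ and $\textup{Hilb}^{n_2}$, since the tautological bundle $\mathcal{V}$ restricts to the sum of the two tautological bundles up to an $a$-weight. Here one must check that $\Lambda^{\bullet}_u(\mathcal{V})$ factorizes in the limit into the product of the two rank-one descendents with the same parameter $u$; this is what pins down the normalization of the $a$-weights carried by the two framing summands.

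The decisive step is the $a\to 0$ limit of the off-diagonal factors. Each off-diagonal $K$-theory weight carries a definite power of $a$, so after symmetrization these factors tend to monomials in $t_1,t_2$ and in the source parameter $q$. I would show that their combined effect is not an overall constant but precisely a rescaling of the Kähler variable in each factor: tracking how the off-diagonal tower pairs with the degree splitting $d=d_1+d_2$ and with the coupling $z^d$ should produce the factor $(-\hbar q)^{-1}$ per unit of degree in the first factor and $(-\hbar q)^{+1}$ per unit of degree in the second, i.e.\ the substitutions $z\mapsto z(-\hbar q)^{-1}$ and $z\mapsto z(-\hbar q)$. Assembling the diagonal and off-diagonal pieces then yields the claimed product.

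The main obstacle is exactly this last bookkeeping. One must verify that the off-diagonal part of the virtual complex contributes, in the $a\to 0$ limit, the clean factor $(-\hbar q)^{\mp d_i}$ and no spurious degree-independent constant. This relies on the self-duality of the chosen polarization and on the precise role of the $\CO(-1)$ normalization, which together should force the degree-independent pieces of the two off-diagonal weight towers to cancel while leaving a single geometric factor $(-\hbar q)^{\pm 1}$ attached to each power of $z$; getting the sign and this cancellation right is where the real work lies. By comparison, the factorization of the diagonal integrand and of the descendent is routine once the $a$-weights are pinned down.
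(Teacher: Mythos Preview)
The paper does not actually prove this proposition: in the body (Proposition~\ref{vertexlimit}) the entire proof is the sentence ``This is Theorem~4 in \cite{Sm16}.'' So there is no in-paper argument to compare against; your localization strategy is precisely the kind of computation that the cited reference carries out, and your identification of the main obstacle --- showing that the off-diagonal weight towers, together with the $\CO(-1)$ normalization, contribute cleanly the shifts $z\mapsto z(-\hbar q)^{\pm1}$ with no degree-independent residue --- is on target.

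There is, however, a concrete error in your second step. You assert that $\Lambda^{\bullet}_u(\mathcal{V})$ factorizes in the limit into the product of the two rank-one descendents with the \emph{same} parameter $u$. It does not. At a fixed point $(\lambda_1,\lambda_2)$ the Chern roots of $\mathcal{V}$ on $\mathcal{M}(n,2)$ split into those attached to one Hilbert-scheme factor, of $a$-weight $0$, and those attached to the other, which carry a nontrivial power of $a$; the latter tend to zero as $a\to 0$, so the corresponding factors $(1-u\,x_\square)$ tend to $1$. Hence the descendent collapses onto a single tensor factor,
\[
\lim_{a\to 0}\Lambda^\bullet_u(\mathcal{V})\;=\;\Lambda^\bullet_u(\mathcal{V})\otimes 1,
\]
exactly as in Proposition~\ref{prop4} for $c_k(\mathcal{V})$. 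This is why the body version of Proposition~\ref{vertexlimit} has descendent $c_k(\mathcal{V})$ on one factor and the trivial descendent $1$ on the other; the introduction's shorthand with $\tau$ on both tensor factors is imprecise, and attempting to verify it as stated would fail at exactly the point you flag.
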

Here and throughout the paper we use superscript $(r)$ to denote functions corresponding to $\mathcal{M}(n,r)$. For example, $V^{\tau,(2)}(z)$ in the left side of (\ref{limbare}) 
denotes the $\matC^{*}_a$-equivariant bare vertex function of for $\mathcal{M}(n,2)$.
The right side of (\ref{limbare}) corresponds to the product of the vertex functions for the Hilbert schemes $\mathcal{M}(n_i,1) = \textup{Hilb}^{n_i}(\C^2)$ given by the components of components the $\matC^*_a$-fixed set (\ref{limdecomp}). In words, this proposition says that in the limit $a\to 0$, the bare vertex function of $\mathcal{M}(n,2)$ factors, up to some shifts of the K\"ahler parameter $z$, to the product of the vertex functions for the Hilbert schemes. The appearance of the $\CO(-1)$, which denotes the operator of multiplication by the line bundle
$\CO(-1)$ in the equivariant K-theory, in the above formula is a result of a certain normalization we wish to have in the basis of fixed points. The vertex functions as defined in \cite{Ok17} normalized to that they have constant term of $\CO(1)$. For our purposes we wish for this power series to instead begin with the structure sheaf~$\CO_{\textup{Hilb}^{n}(\C^2)}$. The minor change in normalization is needed to ensure that the limits in the above proposition exist.\\

A similar factorization exists for the capping operator $\Psi(z)$. It is known that $\Psi(z)$ is a fundamental solution matrix of the quantum difference equation associated with the corresponding Nakajima variety \cite{OkSm}.  For the instanton moduli $\mathcal{M}(n,r)$ this equation is the quantum dynamical equation for the quantum toroidal algebra  $\mathcal{U}_{\hbar}({\ddot{\mathfrak{gl}_1}})$. Using this algebraic description, following \cite{Sm16} we obtain 
\begin{prop}(Proposition \ref{limitcapping})
The normalized capping operator $\Tilde{\Psi}^{(2)}(z)$ for the instanton moduli space $\mathcal{M}(n,2)$ has the following limit:
\begin{align}
 \lim\limits_{a\to 0} \,\Tilde{\Psi}^{(2)}(z)=Y(z)\,\Tilde{\Psi}^{(1)}(z\hbar^{-1})\otimes \Tilde{\Psi}^{(1)}(z\hbar)
\end{align}
where $\Tilde{\Psi}^{(1)}(z\hbar^{-1})\otimes \Tilde{\Psi}^{(1)}(z\hbar)$ in the right side denotes the tensor product of capping operators for the Hilbert schemes corresponding to the $\matC^{*}_a$ - fixed components (\ref{limdecomp}).
\end{prop}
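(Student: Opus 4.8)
The plan is to exploit the defining characterization of the capping operator as the fundamental solution matrix of the quantum difference equation (qDE) associated with $\mathcal{M}(n,r)$, as recalled above from \cite{OkSm}. Concretely, $\Tilde{\Psi}^{(r)}(z)$ is the unique solution, with the prescribed normalization, of a $q$-difference equation $\Tilde{\Psi}^{(r)}(zq)=\mathbf{M}^{(r)}(z)\,\Tilde{\Psi}^{(r)}(z)$ whose shift operator $\mathbf{M}^{(r)}(z)$ is built from the quantum multiplication operators of the quantum toroidal algebra $\mathcal{U}_{\hbar}(\ddot{\mathfrak{gl}}_1)$. Since, after multiplication by the scalar prefactor $Y(z)$, both sides of the claimed identity are candidate solutions of a single limiting difference equation, the strategy is threefold: (i) compute the $a\to 0$ degeneration of the operator $\mathbf{M}^{(2)}(z)$; (ii) check that the right-hand side solves the degenerate equation; and (iii) fix the remaining periodic ambiguity by matching asymptotics, which produces $Y(z)$.

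First I would restrict the qDE for $\mathcal{M}(n,2)$ to the $\matC^{\times}_a$-fixed locus. By equivariant localization the limit $a\to 0$ of $K_T(\mathcal{M}(n,2))_{loc}$ is governed by the decomposition (\ref{limdecomp}), so that the relevant space becomes $\bigoplus_{n_1+n_2=n} K_T(\textup{Hilb}^{n_1}(\matC^2))\otimes K_T(\textup{Hilb}^{n_2}(\matC^2))$. The essential input is the coproduct structure of $\mathcal{U}_{\hbar}(\ddot{\mathfrak{gl}}_1)$: the shift operator $\mathbf{M}^{(2)}(z)$ is assembled from the generating currents whose coproduct acts on this tensor product, and in the limit $a\to 0$ the off-diagonal interaction terms carrying positive powers of $a$ drop out, leaving a tensor product of two rank-$1$ shift operators. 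The crucial bookkeeping is that the $\matC^{\times}_a$-weights of the two summands in (\ref{limdecomp}) differ, and this weight difference enters the K\"ahler variable multiplicatively; tracking it through the degeneration is exactly what rescales $z\mapsto z\hbar^{-1}$ on the first factor and $z\mapsto z\hbar$ on the second, analogous to the shifts appearing for the bare vertex in Proposition \ref{vertexlimit}.

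Step (ii) is then immediate: because $\Tilde{\Psi}^{(1)}$ solves the rank-$1$ qDE, the tensor product $\Tilde{\Psi}^{(1)}(z\hbar^{-1})\otimes\Tilde{\Psi}^{(1)}(z\hbar)$ solves the degenerate equation obtained in step (i), the shifted arguments being precisely those needed for the two tensor factors. For step (iii) I would observe that $\lim_{a\to 0}\Tilde{\Psi}^{(2)}(z)$ also solves this equation, so the two solutions differ by a factor invariant under the shift $z\mapsto zq$, i.e.\ a $q$-periodic prefactor. To pin it down I would compare the leading $z\to 0$ behavior of both sides using the normalization of $\Tilde{\Psi}$ together with the factorization $\widehat{\textup{V}}=\Psi\,\textup{V}$ of Theorem \ref{CappingVertexIntro} and Proposition \ref{vertexlimit}; the mismatch between the chosen normalizations on $\mathcal{M}(n,2)$ and on the product of Hilbert schemes is a scalar series in $z$, and this series is the operator $Y(z)$.

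The main obstacle I expect is step (i): proving that the limit of the quantum difference operator factorizes cleanly and identifying the exact shifts $z\hbar^{\pm 1}$. This requires a careful analysis of the coproduct of $\mathcal{U}_{\hbar}(\ddot{\mathfrak{gl}}_1)$ and of how the framing weight $a$ enters the spectral parameter, and it is here that the argument genuinely uses the algebraic, rather than purely geometric, description of the capping operator following \cite{Sm16}. Determining the precise scalar $Y(z)$ in step (iii) is a secondary but still delicate point, since it demands control of the subleading normalization factors on both sides of (\ref{limdecomp}).
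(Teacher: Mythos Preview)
Your overall strategy---using the qDE characterization of the capping operator and passing to the $a\to 0$ limit---is precisely the approach of \cite{Sm16}, which the paper's own proof simply invokes as a black box (together with the explicit identification $Y(z)=(J(z)J(0)^{-1})^{-1}$ from (\ref{JJop})). So in spirit you are reconstructing the cited argument rather than proposing a different one.

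There is, however, a genuine gap in step (iii), and its root is already in step (i). You assert that in the limit the shift operator $\mathbf{M}^{(2)}(z)$ factors exactly as a tensor product of rank-$1$ shift operators, and therefore that $\lim_{a\to 0}\Tilde{\Psi}^{(2)}(z)$ and $\Tilde{\Psi}^{(1)}(z\hbar^{-1})\otimes\Tilde{\Psi}^{(1)}(z\hbar)$ solve the same qDE and differ by a $q$-periodic ``scalar series'', which you then identify with $Y(z)$. This cannot be correct on two counts. First, from (\ref{yoperdef}) the operator $Y(z)$ involves $\alpha^{0}_{-k}\otimes\alpha^{0}_{k}$ and is a genuine operator on $\textsf{Fock}^{\otimes 2}$, not a scalar. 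Second, $Y(z)$ is manifestly \emph{not} $q$-periodic: the coefficients $\dfrac{1}{1-z^{-k}K^{-k}\otimes K^{k}}$ change under $z\mapsto zq$. Worse, both candidate solutions are normalized to the identity at $z=0$, so if they really solved the \emph{same} qDE, uniqueness would force $Y(z)\equiv 1$.

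The correct mechanism is that the $a\to 0$ limit of $\mathbf{M}^{(2)}(z)$ does \emph{not} coincide with $\mathbf{M}^{(1)}(z\hbar^{-1})\otimes\mathbf{M}^{(1)}(z\hbar)$. The wall-crossing operators $B_w(z)$ in (\ref{wallcross}) have nontrivial coproduct, and what one obtains in the limit is an operator \emph{conjugate} to the tensor product: there is an intertwining relation of the shape
\[
\mathbf{M}^{(2)}_{\lim}(z)\,Y(z)\;=\;Y(zq)\,\bigl(\mathbf{M}^{(1)}(z\hbar^{-1})\otimes\mathbf{M}^{(1)}(z\hbar)\bigr).
\]
It is this identity, not a normalization comparison, that forces the operator $Y(z)$ to appear on the left in the factorization. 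Establishing it (via the coproduct of the Heisenberg generators and the explicit form of the $B_w$) is the actual content of the computation in \cite{Sm16}, and is the step you would need to carry out in full.
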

The operator $Y(z)$ in this proposition is the fusion operator for the toroidal algebra $\mathcal{U}_{\hbar}({\ddot{\mathfrak{gl}_1}})$. In the proof of Proposition \ref{limitcapping} we give an explicit description for its action  on the Fock spaces:
\bean \label{yoperdef}
Y(z)=\exp{\left(\sum_{k=1}^{\infty}\frac{n_kK^{-k}\otimes K^{k}}{1-z^{-k}K^{-k}\otimes K^k}\alpha^0_{-k}\otimes \alpha^0_{k}\right)}
\eean
where $\alpha^0_k$ denote the generators of the horizontal Heisenberg subalgebra of $\mathcal{U}_{\hbar}({\ddot{\mathfrak{gl}_1}})$. These generators act on the Fock spaces  in a simple way (\ref{heisact}).\\

Using the factorization results summarized in the two previous  propositions,  in the limit $a\to 0$ of the Theorem \ref{CappingVertexIntro} we arrive at the following functional relation for the capped vertex functions of the Hilbert scheme:
\bean \label{funceq}
\tau \otimes 1=Y(z) \left(\widehat{\textup{V}}^{(\tau),(1)}(z) \otimes \dots \right)
\eean
where the left side does not have quantum corrections and therefore is known explicitly.  \\

Finally, since the operator $Y(z)$ given by (\ref{yoperdef}) is invertible and acts on the Fock spaces explicitly, we can solve the linear equation  (\ref{funceq}) for the first tensor component $\widehat{\textup{V}}^{(\tau),(1)}(z)$ which gives our main result.

\section{Acknowledgements} We thank H. Dinkins and A. Okounkov for their interest to this work and useful discussions. This work is supported by the NSF grants DMS - 2054527 and DMS-2401380.

\section{Background}
\subsection{Hilbert Schemes}

The Hilbert Scheme of $n$-points on $\mathbb{C}^2$ is the scheme parameterizing codimension $n$-ideals of $\mathbb{C}[x,y]$:
$$\text{Hilb}^n(\mathbb{C}^2) = \{\mathcal{J}\subset \mathbb{C}[x,y]: \dim (\mathbb{C}[x,y]/\mathcal{J})=n\}$$

Let $T\cong (\mathbb{C}^{\times})^2$ be the 2-dimensional algebraic torus acting on $\mathbb{C}^2$. This action scales coordinates:
 $$(x,y)\mapsto (t_1x,t_2y)$$

 And this extends to an action on $\text{Hilb}^n(\mathbb{C}^2) $, which scales the symplectic form $\omega\in H^2(\text{Hilb}^n(\mathbb{C}^2),\mathbb{C} )$ with character $\hbar=t_1t_2$. Let $A=\ker(\hbar)\subset T$ be the subtorus that fixes $\omega$.\\

 The fixed point set of $\text{Hilb}^n(\mathbb{C}^2)^T $ is a finite set, labeled by partitions of length $n$. For a partition $\lambda=(\lambda_1,\lambda_2,...,\lambda_k)$, with each $\lambda_i$ weakly decreasing, $\sum_{i=1}^k |\lambda|=n$, the corresponding fixed ideal is of the following form:
$$\mathcal{J}^T=\{y^{\lambda_1},xy^{\lambda_2},...,x^k\}$$
See Fig. 1 for an example:
 
\begin{figure}[h]
\centering

\begin{ytableau}
\none & y^2 & \none & \none \\
       \none & y & xy & \none \\
  \none & 1 & x & \none\\
  \none & \none & \none & \none
\end{ytableau}
\caption{An example of a torus fixed point for the Hilbert Scheme $\text{Hilb}^5(\mathbb{C}^2)$ corresponding to the partition $(3,2)$. The monomial ideal here is $\{y^3, xy^2, x^2\}$}
\end{figure}

 \subsection{Instantons}
 $\text{Hilb}^n(\mathbb{C}^2) $ appears as the rank 1 case of a more general moduli space: the moduli space of instantons of rank $r$. We describe this moduli space as the moduli of framed torsion free sheaves of rank $r$.\\

 Let $\mathcal{F}$ be a framed torsion free sheaf of rank $r$ over $\mathbb{P}^2$ with fixed second chern class $c_2(\mathcal{F})=n$. A framing of a sheaf is a choice of isomorphism 
 $$\phi: \left.\mathcal{F}\right|_{L_{\infty}}\rightarrow \mathcal{O}_{L_{\infty}}^{\oplus r}$$
 Where $L_{\infty}$ is the line at infinity in $\mathbb{C}^2$. The moduli space of these sheaves, denoted $\mathcal{M}(n,r)$ is known to arise as the Nakajima Quiver Variety of the framed double Jordan quiver. We quickly review the construction of this variety:\\

 Let $Q$ be the framed quiver with one vertex and one edge. Let $V=\mathbb{C}^n, W=\matC^r$. The representation space of $Q$ is
 $$Rep(Q)=\text{Hom}(W, V)\oplus \text{Hom}(V,V)$$
  This space has a natural $GL(V)$ action by conjugating the endomorphisms, and multiplying the vertical maps: $X\in \text{Hom}(V,V)$, $I\in \text{Hom}(W,V)$
$$X\mapsto gXg^{-1}, \ \ I\mapsto gI$$
This action induces a Hamiltonian action on 
$T^{*}Rep(Q)=Rep(Q)\oplus Rep(Q)^{*}$
with a moment map $$\mu: T^{*}Rep(Q)\rightarrow \mathfrak{g}^{*}$$
where $\mathfrak{g}=\text{Lie}(GL(V))$. With the choice of $GL(V)$ character $\theta: g \to \det(g)$ for the GIT quotient we have:
$$\mathcal{M}(n,r) = \mu^{-1}(0)//_{\theta} GL(V)$$
\begin{figure}
     \centering
\begin{tikzcd}
\mathbb{C}^n \arrow[loop, distance=3em, in=-10, out=50,looseness=1, "\textup{X}"] \arrow[loop, distance=3em, in=190, out= 120,looseness=10, "\textup{Y}"'] \arrow[d, bend left, "\textup{J}"] \\
\mathbb{C}^r \arrow[u, bend left, "\textup{I}"] 
\end{tikzcd}
\caption{The quiver giving rise to $\mathcal{M}(n,r)$}
 \end{figure}
In the special case of $r=1$ we recover the Hilbert scheme\cite{NakajimaLectures1,Na94}:
$\mathcal{M}(n,1)=\text{Hilb}^n(\mathbb{C}^2)$.
There is a natural torus action $A\cong (\mathbb{C}^{\times})^r$ on the framing $\mathbb{C}^r$. The $\text{i}^{th}$-component in the framing is scaled by $a_i$ under this action. Let $T=A\times (\mathbb{C}^{\times})^2$ with $\mathbb{C}^2$ scaling the plane coordinates with the same characters $t_i$ as above.

Let us fix a decomposition $W \cong \mathbb{C}^{r_1}\oplus \mathbb{C}^{r_2}$ where $r_1+r_2=r$ and 
let $\mathbb{C}_a^{\times}\subset A$ be a subtorus that acts on  $W \cong \mathbb{C}^{r_1}\oplus \mathbb{C}^{r_2}$
by scaling the second summand with a character $a$.  The fixed point set of this subtorus has the following form:
$$\mathcal{M}(n,r)^{\mathbb{C}^*}= \coprod_{n_1+n_2=n}\mathcal{M}(n_1,r_1) \times \mathcal{M}(n_2,r_2).$$
Acting by $T$ produces a fixed point set consisting of tuples of partitions of length $n$:

\begin{equation}
    \mathcal{M}(n,r)^T = \{(\lambda_1,...,\lambda_r): \sum |\lambda_i|=n \}
\end{equation}
The classes of fixed points $(\lambda_1,...,\lambda_n)$ form a basis in the localized $K$-theory of $\mathcal{M}(n,r)$. 


\subsection{Quasimaps and Vertex Functions}
%
Our main reference for this section is the notes \cite{Ok17}.

Let $\textsf{QM}^d$ denote the moduli space of degree $d$ quasimaps from $\mathbb{P}^1$ to $\mathcal{M}(n,r)$. Let $\mathbb{C}_q^{\times}$ be the standard torus acting on $\mathbb{P}^1$ by scaling homogeneous coordinates with weight $q$:

$$[x_1:x_2]\mapsto [qx_1:x_2]$$

There are two fixed points of this action: $p_1=0,p_2=\infty\in \mathbb{P}^1$, meaning that the only two nonsingular or regular points of the $\mathbb{C}_q^{\times}$-fixed quasimap may occur at these fixed points. Let $G=T\times \mathbb{C}_q^{\times}$ be the group acting on $\textsf{QM}^d$. There is a subset of the moduli space we will be primarily interested: the open subset of quasimaps of degree $d$ nonsingular at $p=0$ or $p=\infty$. Denote this open subset $\textsf{QM}^d_{ns \, p}\subset \textsf{QM}^d$. \\

The fixed point set $\left(\textsf{QM}^d_{ns\, p}\right)^G$ consists of pairs $(\lambda,d_{\lambda})$, where $\lambda=(\lambda_1,\lambda_2,...,\lambda_r)$ is a tuple of partitions such that $|\lambda|=n$ corresponding to a fixed point in $\mathcal{M}(n,r)$.  The second element in the tuple is the degree $d_{\lambda}$, which assigns a nonnegative integer $d_{\square}$ to each box  $\square\in \lambda$ in such a way that the data is organized into $r$-plane partitions \cite{Mac98}. These integers corresponding to the degree data must add up to $d$: $|d_{\lambda}|=\sum_{\square\in \lambda} d_{\square}=d$

There is an evaluation map from this open subset to $\mathcal{M}(n,r)$ mapping  a quasimap $f\in \textsf{QM}^d_{ns\, p}$  to  $f(p) \in \mathcal{M}(n,r)$. In general as $\textsf{QM}^d_{ns\, p}$ is not proper over $\mathcal{M}(n,r)$ there is no pushforward of this map to K-theory. We can avoid this issue by using localized K-theory instead. 

\begin{defn}(Section 7 of \cite{Ok17})
    The bare vertex of $\mathcal{M}(n,r)$ is the following generating function:
    
    $$\textup{V}(z):  = \sum_{d} \textsf{\textup{ev}}_{p_2,*}\left(\textsf{\textup{QM}}^d_{ns\, p_2},\hat{\mathcal{O}}^d_{vir} \right)z^d \in K_{G}\left(\mathcal{M}(n,r)\right)_{loc}[[z]] $$
where $\hat{\mathcal{O}}^d_{vir}$ is the  symmetrized virtual structure sheaf on $\textsf{\textup{QM}}^d_{ns\,p_1}$  
\end{defn}

For $p_1$, we do not have an evaluation map to to $\mathcal{M}(n,r)$ itself since we do not assume that $p_1$ is nonsingular. Rather we have an evaluation map to the quotient stack: $\textsf{ev}_{p_1}: \textsf{QM}^d_{ns\,p_2}\rightarrow \left[\mu^{-1}(0)/\prod GL(V_i)\right]$. For a class $\tau \in K_{G}(\left[\mu^{-1}(0)/\prod GL(V_i)\right])$  we have the following definition:

\begin{defn}(Section 7 of \cite{Ok17})\label{vertex}
    The bare vertex with descendent $\tau$ is the generating function:
\bean \label{desverte} \textup{V}^{(\tau)}(z):  = \sum_{d} \textsf{\textup{ev}}_{p_2,*}\left(\textsf{\textup{QM}}^d_{ns\, p_2},\hat{\mathcal{O}}^d_{vir} \otimes \textsf{\textup{ev}}_{p_1}^*(\tau)\right) z^d \in K_{G}\left(\mathcal{M}(n,r)\right)_{loc}[[z]]\eean 
\end{defn}
Interesting to us, the classes $\tau$ can be constructed as follows. We have the following inclusion of quotient stacks:
\bean \label{stackinlu}
\left[\mu^{-1}(0)/GL(V)\right] \subset  [T^{*}Rep(Q)/GL(V)]
\eean 
Since $T^{*}Rep(Q)$ is a vector space, which is contractible to a point we have
$$
K(\left[T^{*}Rep(Q)/GL(V)\right]) \cong K([pt/GL(V)])  \cong K_{GL(V)}(pt) = Rep(GL(V)) 
$$
where the last term is the ring of representations of $GL(V)$, i.e., is a ring of symmetric polynomials in $n$-variables. Thus, we can view such a symmetric polynomial as a K-theory class  on the quotient stack $\left[\mu^{-1}(0)/GL(V)\right]$ by pulling it back via inclusion (\ref{stackinlu}). The descendent (\ref{descdef}) which we analyze in this paper corresponds to
$$
\tau = \sum_{k=0}^{n} (-1)^k e_k u^k
$$
where $e_k$ is the $k$-th elementary symmetric polynomial.

Next, let $\textsf{QM}^d_{rel\, p_2}$ be the moduli space of degree $d$ quasimaps from $\mathbb{P}^1$ to $\mathcal{M}(n,r)$ {\it relative} to~$p_2$. In full analogy with the previous definition we have:

\begin{defn}(Section 7 of \cite{Ok17})
    The capped vertex with descendent $\tau$ is the generating function:
    \bean \label{cappedvert} \widehat{\textup{V}}^{(\tau)}(z):=\sum_{d} {\textsf{\textup{ev}}}_{p_2,*}\left(\textsf{\textup{QM}}^d_{rel\, p_2},\hat{\mathcal{O}}^d_{vir} \otimes \textsf{\textup{ev}}_{p_1}^*(\tau) \right)z^d \in K_{G}\left(\mathcal{M}(n,r)\right)_{}[[z]]\eean
\end{defn}
Note that, ${\textsf{\textup{ev}}}_{p_2}$ is proper for relative maps, so we land in non-localized equivariant $K$-theory \cite{Ok17}.\\

A remarkable rigidity theorem first proven by Okounkov in \cite{Ok17} is that for sufficiently large rank the capped vertex function obeys a property known as \textit{large frame vanishing}:

\begin{thm}(Theorem 7.5.23 in \cite{Ok17}) \label{LargeFrame}
    For every $\tau$ with $r$ sufficiently large, the quantum corrections to capped vertex function vanish
    $$\widehat{\textup{V}}^{(\tau)}(z)= \tau(\mathcal{V})\mathcal{K}^{1/2}$$
where $\mathcal{K}$ is the canonical bundle on $\mathcal{M}(n,r)$. 
\end{thm}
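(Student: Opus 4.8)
The plan is to prove the vanishing by a rigidity argument in $G$-equivariant $K$-theory, playing the torus $\mathbb{C}_q^\times$ scaling the source $\mathbb{P}^1$ against the framing torus $A\cong(\mathbb{C}^\times)^r$. The starting point is that, because $\textsf{ev}_{p_2}$ is proper on the relative space $\textsf{QM}^d_{rel\,p_2}$, every coefficient of $z^d$ in $\widehat{\textup{V}}^{(\tau)}(z)$ lands in the \emph{non-localized} group $K_G(\mathcal{M}(n,r))$. In particular, since $\mathbb{C}_q^\times$ acts only on the source and not on the target, the $q$-dependence of each such coefficient is that of a Laurent polynomial with no poles. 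This regularity is the engine of the argument: each coefficient is computed by localization on $\textsf{QM}^d$ as a sum of rational functions of $q$, yet is known \emph{a priori} to be pole-free, which is a strong constraint.

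First I would exploit rigidity in the source torus $\mathbb{C}_q^\times$. Being regular in $q$, the capped vertex has well-defined limits as $q\to 0$ and $q\to\infty$, and the symmetrized virtual structure sheaf $\hat{\mathcal{O}}^d_{vir}=\mathcal{O}^d_{vir}\otimes\mathcal{K}_{vir}^{1/2}$ is normalized precisely so that these two limits are interchanged under $q\leftrightarrow q^{-1}$. Decomposing the virtual tangent space $T^{vir}\textsf{QM}^d$ at a $G$-fixed quasimap $(\lambda,d_\lambda)$ into its $\mathbb{C}_q^\times$-weight spaces and reading off the attracting and repelling directions expresses each of these limits through the simpler bare data at $p_1,p_2$. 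The upshot is that the $q$-regular, $q\leftrightarrow q^{-1}$-symmetric structure forces the degree-$d$ contribution into a narrow band and ties it to the boundary behavior of $\hat{\mathcal{O}}^d_{vir}\otimes\textsf{ev}^*_{p_1}(\tau)$.

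The heart of the proof is the weight count in the framing torus $A$. The obstruction theory of $\textsf{QM}^d$ contains a framing contribution of the shape $W\otimes(\text{degree-}d\text{ part of the quasimap})$, whose $A$-characters $a_1,\dots,a_r$ enter with multiplicity growing linearly in the rank $r$; by contrast the descendent $\tau=\sum_k(-1)^k e_k(\mathcal{V})\,u^k$ depends only on the rank-$n$ tautological bundle $\mathcal{V}$ and contributes a bounded, $a$-independent shift measured against the polarization. I would show that for $d>0$ the localization summands therefore carry framing weights whose extent grows with $rd$, so that once $r$ passes the threshold determined by $\tau$ (here $r\ge 2$ suffices for the exterior-algebra descendents) there is no way for the sum over fixed points to reassemble into the pole-free, $q$-rigid class demanded by the previous two steps. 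Forced between growing framing weight and the rigidity band, every positive-degree term must vanish.

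It remains to identify the surviving $d=0$ term, which is immediate: the degree-zero quasimap space is $\mathcal{M}(n,r)$ itself (constant maps), both evaluation maps are the identity, and $\hat{\mathcal{O}}^0_{vir}$ reduces to $\mathcal{K}^{1/2}$, so the $d=0$ summand equals $\textsf{ev}^*_{p_1}(\tau)\otimes\mathcal{K}^{1/2}=\tau(\mathcal{V})\,\mathcal{K}^{1/2}$, as claimed. The main obstacle I anticipate is making the framing weight count rigorous and uniform in $d$: one must extract the precise $A$-weights of the polarization-twisted obstruction theory, quantify the bounded shift contributed by $\tau$ in the correct (ampleness/slope) normalization rather than its raw polynomial degree, and verify that the $\mathbb{C}_q^\times$-rigidity band is genuinely incompatible with the growing $a$-weights for every $d>0$. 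This is exactly the step where the hypothesis that $r$ be sufficiently large is consumed, and where the detailed structure of $\mathcal{K}_{vir}^{1/2}$ and the polarization enters.
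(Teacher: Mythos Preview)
The paper does not give its own proof of this statement: Theorem~\ref{LargeFrame} is stated with the attribution ``(Theorem 7.5.23 in \cite{Ok17})'' and is then used as a black box. There is nothing in the paper to compare your argument against.

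That said, your outline is a fair sketch of the rigidity argument Okounkov actually runs in \cite{Ok17}: properness of $\textsf{ev}_{p_2}$ on the relative space forces integrality in $q$, the symmetrized virtual structure sheaf makes the result balanced under $q\leftrightarrow q^{-1}$, and then one plays this off against the $A$-weight growth coming from the framing part of the obstruction theory to kill $d>0$. The main place your sketch is soft is exactly where you flag it: the precise bookkeeping of how the polarization, $\mathcal{K}_{vir}^{1/2}$, and the descendent $\tau$ enter the weight estimate, and the translation of ``$r$ sufficiently large'' into an explicit inequality (the ampleness condition $\deg\tau < r\cdot(\text{something})$ in Okounkov's formulation). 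For the present paper's purposes you would also need to check the numerics that make $r\ge 2$ suffice for $\tau=\Lambda^\bullet_u(\mathcal{V})$; this is asserted in the introduction but not derived here either.
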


\subsection{Capping Operators}
The vertex functions (\ref{desverte}) can be computed very explicitly using equivariant localization in equivariant K-theory. The computation of the capped vertex function (\ref{cappedvert}) is much more delicate problem. The intermediate ingredient we need for this is {\it the capping operator}.

\begin{defn}(Section 8 of \cite{Ok17})\label{cap}
    The capping operator is the generating function:
    \bean  \label{cap}
    \Psi(z):= \sum_d \textsf{\textup{ev}}_{p_1,*}\otimes \textsf{\textup{ev}}_{p_2,*}\left(\textsf{\textup{QM}}^d_{\substack{rel\, p_1\\ ns \,p_2}},\hat{\mathcal{O}}^d_{vir}\right)z^d \in K_{G}\left(\mathcal{M}(n,r)\right)^{\otimes 2}_{loc}[[z]]\eean 
where ${\textup{QM}}^d_{\substack{rel\, p_1\\ ns \,p_2}}$ denotes the moduli space of degree $d$ quasimaps from $\mathbb{P}^1$ to $\mathcal{M}(n,r)$ with non-singular conditions at $p_2 \in \mathbb{P}^1$ and relative conditions at $p_1 \in \mathbb{P}^1$.
\end{defn}

The following theorem gives a relation between the three previous definitions:

\begin{thm}(Section 7.4 of \cite{Ok17}) The capping operator maps the vertex function to the capped vertex function:
$$
\widehat{\textup{V}}^{(\tau)}(z) = \Psi(z) \textup{V}^{(\tau)}(z)
$$    
\end{thm}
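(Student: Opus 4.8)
The plan is to prove the factorization by a degeneration of the source curve $\mathbb{P}^1$, together with the gluing (degeneration) formula for the symmetrized virtual structure sheaf in equivariant $K$-theory. The geometric content is that the capped vertex, which counts quasimaps relative at $p_2$ with the descendent $\tau$ inserted at $p_1$, can be reconstructed by first running a bare vertex on a component carrying $p_1$ and then ``capping off'' its non-singular end with the relative geometry measured by $\Psi(z)$.

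First I would set up the degeneration. Consider a flat family of curves over $\mathbb{A}^1$ whose generic fibre is $\mathbb{P}^1$ and whose central fibre is a nodal union $C_1 \cup_\star C_2$ of two copies of $\mathbb{P}^1$ meeting transversally at a node $\star$, arranged so that $p_1 = 0$ lies on $C_1$ and the relative point $p_2 = \infty$ lies on $C_2$. This family can be taken $\mathbb{C}_q^\times$-equivariantly, and one keeps the relative divisor at $p_2$ constant through the degeneration, so that $p_2$ remains a relative point on $C_2$ while $\star$ becomes an ordinary (non-singular, gluing) node. By deformation invariance of $\hat{\mathcal{O}}^d_{vir}$ and $G$-invariance of the evaluation pushforwards, the invariant computed on the generic fibre --- namely $\widehat{\textup{V}}^{(\tau)}(z)$ --- equals the one computed on the central fibre.

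Next I would invoke the gluing formula for quasimaps on the nodal central fibre. The moduli space of quasimaps to $\mathcal{M}(n,r)$ on $C_1 \cup_\star C_2$, non-singular at $\star$, decomposes as a fibre product
$$
\textsf{QM}(C_1)_{ns\,\star} \times_{\mathcal{M}(n,r)} \textsf{QM}(C_2)_{rel\,p_2,\,ns\,\star}
$$
glued along the node-evaluation maps to $\mathcal{M}(n,r)$, with degrees adding, $d = d_1 + d_2$. On $C_1$ the data is exactly that defining the bare vertex $\textup{V}^{(\tau)}(z)$: the descendent $\tau$ pulled back from the quotient stack at $p_1 = 0$ and the free non-singular evaluation at $\star$. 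On $C_2$, the relative condition at $p_2$ together with the non-singular evaluation at $\star$ is precisely the data of the capping operator $\Psi(z)$, after relabelling $\star$ as its non-singular point and $p_2$ as its relative point. In equivariant $K$-theory, gluing along the diagonal $\Delta\colon \mathcal{M}(n,r) \hookrightarrow \mathcal{M}(n,r)\times \mathcal{M}(n,r)$ contributes the identity contraction with respect to the $K$-theoretic pairing; since $z^{d} = z^{d_1} z^{d_2}$, the two generating functions multiply, yielding $\widehat{\textup{V}}^{(\tau)}(z) = \Psi(z)\,\textup{V}^{(\tau)}(z)$ with $\Psi(z)$ contracting its non-singular index against the output index of $\textup{V}^{(\tau)}(z)$.

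The main obstacle is establishing the $K$-theoretic degeneration formula itself at the level of the symmetrized virtual structure sheaf: one must check that $\hat{\mathcal{O}}^d_{vir}$ is compatible with the degeneration (its restriction to the central fibre is the product of the two component sheaves twisted by the node-smoothing factor), and that this node contribution reduces to the diagonal, i.e.\ to multiplication by the identity operator. This requires the properness of the relative evaluation, to stay in non-localized $K$-theory on the $p_2$-side, and the rigidity statements that make the non-singular pushforwards well-defined in the localized theory, both of which are supplied by the general quasimap formalism of \cite{Ok17}. A secondary point to verify is that the stacky descendent insertion $\tau$ at $p_1$ survives the degeneration unchanged, which holds because $p_1$ is carried entirely on $C_1$ and never collides with the node.
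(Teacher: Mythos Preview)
The paper does not prove this theorem; it is quoted as a black box from \cite{Ok17} (Section~7.4) and used as input for the rest of the argument. So there is no proof in the paper to compare against.

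That said, your proposal is the correct outline and is precisely the argument Okounkov gives in \cite{Ok17}: degenerate the domain $\mathbb{P}^1$ into $C_1\cup_\star C_2$ with $p_1\in C_1$ and $p_2\in C_2$, apply the $K$-theoretic degeneration/gluing formula for $\hat{\mathcal{O}}_{vir}$, and identify the two resulting pieces with $\textup{V}^{(\tau)}(z)$ (on $C_1$, descendent at $p_1$, non-singular at $\star$) and $\Psi(z)$ (on $C_2$, relative at $p_2$, non-singular at $\star$). Your caveats about the compatibility of the symmetrized virtual structure sheaf with degeneration, the diagonal contraction at the node, and properness of the relative evaluation are exactly the technical points one has to cite from \cite{Ok17}. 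One small remark: with the paper's convention that $\Psi(z)$ has its relative point at $p_1$ and non-singular point at $p_2$, the identification on $C_2$ requires the relabelling you mention, and one should be slightly careful that the resulting operator acts on $\textup{V}^{(\tau)}(z)$ via the correct index (the non-singular evaluation of $\Psi$ pairs with the non-singular output of the bare vertex).
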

The operator $\Psi(z)$ can be computed explicitly as the fundamental solution of the \textit{quantum difference equation} \cite{OkSm}. This equation is described via objects coming from a certain quantum group acting on the equivariant $K$-theory of $\mathcal{M}(n,r)$.

\subsection{Elliptic Hall Algebra} This section will follow \cite{ScVa13}, see also section 7.1 of \cite{OkSm}.\\

Let $\textbf{Z}=\mathbb{Z}^2$, $\textbf{Z}^*=\textbf{Z}\setminus\{(0,0)\}$, $$\textbf{Z}^{+}=\{(i,j)\in \textbf{Z}, i>0, \text{ or } i=0, j>0\}$$ and $\textbf{Z}^{-}=-\textbf{Z}^{+}$
Let $\textbf{a}=(a_1,a_2)\in \textbf{Z}^{+}$, and define the degree to be the gcd of the integers $a_1,a_2$:
$\deg(\textbf{a})=\gcd(a_1,a_2)$.
Set $\epsilon_{\textbf{a}}=1$ if $\textbf{a}\in \textbf{Z}^{+}$, $\epsilon_{\textbf{a}}=-1$ if $\textbf{a}\in \textbf{Z}^{-}$. If $\textbf{a}, \textbf{b}$ are non-collinear elements, then set $\epsilon_{\textbf{a}, \textbf{b}}=\text{sgn}(\det(\textbf{a}, \textbf{b}))$.
Let $$n_k = \frac{(t_1^{k/2}-t_1^{-k/2})(t_2^{k/2}-t_2^{-k/2})(\hbar^{k/2}-\hbar^{-k/2})}{k}.$$

\begin{defn}
    The elliptic hall algebra $\mathcal{U}_{\hbar}({\ddot{\mathfrak{gl}_1}})$ is an associative unital algebra over $\mathbb{C}(t_1^{1/2},t_2^{1/2})$ generated by elements $e_{\textbf{a}}, K_{\textbf{a}}$ where $\textbf{a}\in {\bf Z}^{+}$ with the following relations:
    \begin{itemize}
        \item $K_{\textbf{a}}$ is central for all $\textbf{a}\in {\bf Z}^{+}$, and $K_{\textbf{a}+\textbf{b}}=K_{\textbf{a}}K_{\textbf{b}}$
        \item If $\textbf{a}, \textbf{b}$ are collinear, then $$[e_{\textbf{a}},e_{\textbf{b}}]=\delta_{\textbf{a}+ \textbf{b}}\frac{K_{\textbf{a}}^{-1}-K_{\textbf{a}}}{n_{\text{deg}(\textbf{a})}}$$
        \item If $\text{deg}(\textbf{a})=1$ and the triangle with vertices $\{(0,0),\textbf{a},\textbf{a}+\textbf{b}\}$ has no interior points then 
        $$[e_{\textbf{a}},e_{\textbf{b}}]=\epsilon_{\textbf{b},\textbf{a}}K_{\alpha(\textbf{a},\textbf{b})}\frac{\Psi_{\textbf{a},\textbf{b}}}{n_1}$$
        Where $$\alpha(\textbf{a},\textbf{b})=\begin{cases} 
      \dfrac{\epsilon_{\textbf{a}}(\epsilon_{\textbf{a}}(\textbf{a})+\epsilon_{\textbf{b}}(\textbf{b})-\epsilon_{\textbf{a}+\textbf{b}}(\textbf{a}+\textbf{b}))}{2} & \epsilon_{\textbf{a}, \textbf{b}}=1 \\
     & \\
      \dfrac{\epsilon_{\textbf{b}}(\epsilon_{\textbf{a}}(\textbf{a})+\epsilon_{\textbf{b}}(\textbf{b})-\epsilon_{\textbf{a}+\textbf{b}}(\textbf{a}+\textbf{b}))}{2} & \epsilon_{\textbf{a}, \textbf{b}}=-1
   \end{cases}$$
   And the elements $\Psi_{\textbf{a},\textbf{b}}$ are defined by 
   $$\sum_{k=0}^{\infty} \Psi_{k\textbf{a}}z^k = \exp\left(\sum_{l=1}^{\infty} n_l e_{l\textbf{a}}z^l \right).$$
    \end{itemize}
\end{defn}

\subsection{Heisenberg subalgebras and wall $R$-matrices}

For $w\in \mathbb{Q}\cup \{\infty\}$, let $d(w)$ and $n(w)$ be the denominator, and numerator of $w$, respectively, with $d(w)\geq 0$, and $d(w), n(w)$ coprime. If $w=\infty$, set $d(w)=0, n(w)=1$. 

The elements $$\alpha_k^w= e_{(d(w)k,n(w)k)}, \hspace{5mm} k\in \mathbb{Z}\setminus\{0\}$$
 generate a quantum Heisenberg subalgebra in $\mathcal{U}_{\hbar}({\ddot{\mathfrak{gl}_1}})$, which is called Heisenberg subalgebra with slope~$w$. \\

These subalgebras are equipped with the following upper triangular $R$-matrix: 
$$R_{w}^{+}=\exp\left(-\sum_{k=1}^{\infty} n_k \alpha_k^w \otimes \alpha_{-k}^w\right)$$
and lower triangular $R$-matrix 
$$R_{w}^{-}=\exp\left(-\sum_{k=1}^{\infty} n_k \alpha_{-k}^w \otimes \alpha_{k}^w\right)$$

\subsection{Hopf Structures}

The elliptic Hall algebra is a triangular hopf algebra, where different structures arise from a choice of slope. The coproduct was described in \cite{OkSm}, and has the following expression

$$\Delta_s:\mathcal{U}_{\hbar}({\ddot{\mathfrak{gl}_1}}) \rightarrow \mathcal{U}_{\hbar}({\ddot{\mathfrak{gl}_1}})\otimes \mathcal{U}_{\hbar}({\ddot{\mathfrak{gl}_1}}) $$
Explicitly this coproduct acts on the slope 0 Heisenburg subalgebra as follows, where we define $\Delta=\Delta_0$:
\begin{align*}
    &\Delta(\alpha^0_{-k}) = \alpha^0_{-k}\otimes 1 + K^{-k}\otimes \alpha^0_{-k}\\
    &\Delta(\alpha^0_{k}) = \alpha^0_{k}\otimes K^{k} + 1\otimes \alpha^0_{-k}\\
    &\Delta(K) = K\otimes K\\
\end{align*}
\subsection{Fock space Representations}
Recall that by Nakajima's geometric construction of Heisenberg algebra we have
 an isomorphism of graded vector spaces 
$$\bigoplus_{n=0}^{\infty}K_{T}\left(\text{Hilb}^n(\mathbb{C}^2\right)_{loc}= \textsf{\textup{Fock}}:=\mathbb{Q}[p_1,p_2,...]\otimes_{\mathbb{Z}}\mathbb{Q}(t_1,t_2)$$
where the grading of $\textsf{\textup{Fock}}$ is defined to be $\deg(p_k)=k$. As previously mentioned, the fixed point set of $\text{Hilb}^n(\mathbb{C}^2)$  is labeled by partitions of length $n$. The K-theory classes of torus fixed points are mapped by this isomorphism to the Macdonald polynomials $H_{\lambda}$.

The Fock space representation of $\mathcal{U}_{\hbar}({\ddot{\mathfrak{gl}_1}})$ with evaluation the parameter $a$ is defined by 
$$
ev_a: \mathcal{U}_{\hbar}({\ddot{\mathfrak{gl}_1}}) \longrightarrow \textrm{End}(\textsf{Fock})
$$
which is given explicitly in the basis of Macdonald polynomials $H_\lambda$ by
\bean \label{alphazer} \label{heisact} ev_a(\alpha^0_{k}): H_\lambda \to  \begin{cases}
    -k\dfrac{\partial}{\partial p_k}(H_\lambda) &k>0\\
    & \\ 
    \dfrac{-p_{-n} H_\lambda}{(t_1^{k/2}-t_1^{-k/2})(t_2^{k/2}-t_2^{-k/2})} & k<0
    \end{cases}\eean
by
$$
ev_{a}(\alpha^{\infty}_k) : H_{\lambda} \to  a^{-m} \dfrac{(-1)^k \textrm{sign}(k) }{1-t^{ k}_1} \Big(\sum_{i=0}^{\infty} t^{ k (\lambda_k-1)}_1 t_2^{ (k-1)} \Big) H_{\lambda}
$$
and by
$$
ev_a(K_{(1,0)}) : H_\lambda \to \hbar^{-1/2} H_\lambda, \ \ \ ev_a(K_{(0,1)}) : H_\lambda \to H_\lambda
$$

Since the horizontal $\alpha^{0}_k$ and vertical $\alpha^{\infty}_k$ Heisenberg subalgebras generate the whole algebra, these formulas define a representation of $\mathcal{U}_{\hbar}({\ddot{\mathfrak{gl}_1}})$ in the Fock space.  Further, we define the evaluation map $$ev^{(r)}:\mathcal{U}_{\hbar}({\ddot{\mathfrak{gl}_1}}) \rightarrow \text{End}({\textsf{Fock}}^{\otimes r})$$
by
\bean \label{evalrep}
ev^{(r)}(\alpha)=ev_{(a_1)}\otimes \cdots \otimes ev_{(a_r)}\left(\Delta^{(r)}(\alpha)\right).
\eean
i.e., by ${\textsf{Fock}}^{\otimes r}$ we always denote a tensor product of $r$ Fock modules with evaluation parameters $a_1,\dots, a_r$. 

Recall that by splitting the framing as $\mathcal{W}=a_1+\cdots + a_r$ via the action of torus $A$ we can obtain isomorphism of vector spaces:
\bean \label{tepro}
\bigoplus_{n=0}^{\infty} K_T{\left(\mathcal{M}(n,r)\right)_{loc}} \cong \textsf{\textup{Fock}}^{\otimes r} 
\eean 
The geometric action of $\mathcal{U}_{\hbar}({\ddot{\mathfrak{gl}_1}})$ on (\ref{tepro}) constructed in \cite{OkSm} coincides with the one defined by evaluation map (\ref{evalrep}).

\subsection{Wall crossing operators, and the quantum difference operator}

Consider the elements defined in section 7 of \cite{OkSm}

\bean  \label{wallcross}
B_w(z) = :\exp\left(\sum_{k=0}^{\infty} \frac{n_k \hbar^{-krd(w)/2}}{1-z^{-kd(w)}q^{kn(w)}\hbar^{-krd(w)/2}}\alpha_{-k}^w\alpha_k^w\right):  
\eean
called the wall crossing operator for wall $w\in \mathbb{Q}\cup \{\infty\}$. Here the symbol :: means to first act with all operators $\alpha_k^w$ for $k>0$, i.e., we assume normal ordering.

The quantum difference operator is defined as the operator given by the following product
$$\textsf{\textup{M}}(z) =\CO(1)\prod_{-1\leq w< 0}^{\leftarrow} B_{w}(z)$$

Here, the left facing arrow denotes the order of the product.  Using the Fock space representation of $\mathcal{U}_{\hbar}({\ddot{\mathfrak{gl}_1}})$ we can evaluate the operators $\textsf{\textup{M}}(z)$ as certain  operators acting in the Fock space. Moreover, it is clear from (\ref{wallcross}) that the action of $\textsf{\textup{M}}(z)$ preserves the degree in the Fock space, thus  the degree $n$ block of $\textsf{\textup{M}}(z)$ acts as a certain operator in $K_{T}(\mathcal{M}(n,r))$. The capping operator (\ref{cap}) for $\mathcal{M}(n,r)$ can be computed as the fundamental solution matrix of the quantum difference equation

\begin{thm}(Section 7 of \cite{OkSm}) 
 The capping operator (\ref{cap}) for $K_{T}(\mathcal{M}(n,r))$ is the unique fundamental solution matrix of the quantum $q$-difference equation
    $$\Psi^{(r)}(zq)\mathcal{O}(1)=\textsf{\textup{M}}(z)\Psi^{(r)}(z)$$
 normalized by  $\Psi^{(r)}(0)=Id$, where $Id$ is the identity matrix.    
\end{thm}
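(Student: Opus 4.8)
The plan is to decompose the statement into three parts: (i) a geometric derivation showing that $\Psi^{(r)}(z)$ satisfies \emph{some} first-order $q$-difference equation in the K\"ahler variable, (ii) the identification of the resulting difference operator with the explicit ordered product $\textsf{M}(z)=\mathcal{O}(1)\prod_{-1\le w<0}^{\leftarrow} B_w(z)$, and (iii) uniqueness of the normalized solution. Parts (i) and (ii) together establish existence; part (iii) is a formal power-series argument. Invertibility of the fundamental solution is automatic, since the normalization $\Psi^{(r)}(0)=Id$ makes $\Psi^{(r)}(z)=Id+O(z)$ invertible as a formal series in $z$.

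For (i), I would exploit the relationship between the K\"ahler shift $z\mapsto zq$ and the $\mathbb{C}_q^{\times}$-action on the source $\mathbb{P}^1$. Concretely, I would compare the moduli space of quasimaps relative at $p_1$ and non-singular at $p_2$ that defines $\Psi^{(r)}(z)$ with the degeneration of the domain curve in which a $\mathbb{P}^1$ bubbles off carrying the relative point $p_1$. The gluing (degeneration) axiom for quasimap invariants expresses the pushforward for the degenerate domain as a composition of the capping operator with a gluing contribution supported on the $\mathbb{C}_q^{\times}$-fixed loci; tracking the degree-generating variable $z$ through this gluing and invoking $\mathbb{C}_q^{\times}$-equivariant rigidity produces a relation of the form $\Psi^{(r)}(zq)\,\mathcal{O}(1)=\textsf{A}(z)\,\Psi^{(r)}(z)$ for a suitable operator $\textsf{A}(z)$. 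Equivalently, this is the statement that the shift operator in equivariant K-theory, which tensors by the line bundle $\mathcal{O}(1)$ and shifts $z\mapsto zq$, is intertwined by $\Psi^{(r)}$ with the quantum monodromy $\textsf{A}(z)$ of the $z$-connection; the explicit factor $\mathcal{O}(1)$ records the classical, degree-zero part of the gluing and is compatible with the vertex normalization beginning in $\mathcal{O}(1)$.

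For (ii), I would identify $\textsf{A}(z)$ with $\textsf{M}(z)$ using the geometric action of $\mathcal{U}_{\hbar}(\ddot{\mathfrak{gl}_1})$ on $\bigoplus_n K_T(\mathcal{M}(n,r))_{loc}\cong \textsf{Fock}^{\otimes r}$ constructed in \cite{OkSm}. The operator $\textsf{A}(z)$ is the quantum monodromy of the $z$-connection, which by the wall-and-chamber analysis of the K\"ahler parameter factorizes into a product of wall-crossing contributions indexed by the walls $w$ crossed in the fundamental domain $[-1,0)$. Each wall $w=n(w)/d(w)$ contributes the operator $B_w(z)$ of (\ref{wallcross}), assembled from the slope-$w$ Heisenberg subalgebra $\alpha^w_k$ and its wall $R$-matrices; this is precisely the factorization of the monodromy into the ordered product $\textsf{M}(z)$, giving $\textsf{A}(z)=\textsf{M}(z)$. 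For uniqueness (iii), I would expand $\Psi^{(r)}(z)=\sum_{k\ge 0}\Psi_k z^k$ with $\Psi_0=Id$. Since $B_w(0)=Id$ for every finite wall $w\in[-1,0)$ (the terms of (\ref{wallcross}) carrying $z^{-kd(w)}$ vanish as $z\to0$ because $d(w)\ge 1$), one has $\textsf{M}(0)=\mathcal{O}(1)$, so the order-$z^k$ relation reads $q^k\,\Psi_k\,\mathcal{O}(1)-\mathcal{O}(1)\,\Psi_k=(\text{terms in }\Psi_j,\ j<k)$. In the fixed-point basis $\mathcal{O}(1)$ is diagonal with eigenvalues $o_\lambda$ that are monomials in the equivariant parameters, so each matrix coefficient is determined by the invertible scalar $q^k o_\mu-o_\lambda$, nonzero for generic $q$ and equivariant parameters (non-resonance). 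By induction every $\Psi_k$ is uniquely fixed.

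The main obstacle is step (ii): proving that the abstract difference operator $\textsf{A}(z)$ arising from the degeneration coincides with the explicit ordered product of wall operators $B_w(z)$. This is the deepest input and rests on the full apparatus of \cite{OkSm}---the construction of the geometric $\mathcal{U}_{\hbar}(\ddot{\mathfrak{gl}_1})$-action, the identification of the quantum monodromy with the product of wall $R$-matrices, and the compatibility of the slope decomposition of the K\"ahler cone with the Heisenberg subalgebras $\alpha^w_k$. Steps (i) and (iii), by contrast, are respectively a standard application of the degeneration formula for quasimap counts and a routine non-resonance recursion.
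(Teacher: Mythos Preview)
The paper does not prove this theorem at all: it is stated with the attribution ``(Section 7 of \cite{OkSm})'' and used as a black-box input, with no argument given. So there is no ``paper's own proof'' to compare your proposal against; the authors simply cite the result.

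That said, your outline is a reasonable summary of how the argument in \cite{OkSm} is organized: (i) the degeneration/gluing formula for relative quasimaps produces a first-order $q$-difference equation for the capping operator, (ii) the identification of the resulting operator with the ordered product of wall-crossing operators $B_w(z)$ is the substantial content of \cite{OkSm}, and (iii) the uniqueness of a solution normalized to $Id$ at $z=0$ is the standard non-resonance recursion you describe. You correctly flag (ii) as the deep step and correctly observe that $B_w(0)=Id$ for finite walls since $d(w)\ge 1$ forces the coefficients in (\ref{wallcross}) to vanish as $z\to 0$, so that $\textsf{M}(0)=\mathcal{O}(1)$ and the recursion is governed by $q^k o_\mu - o_\lambda$. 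For the purposes of this paper, however, none of this is needed: the theorem is quoted, not proved.
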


\subsection{K-theoretic Stable Envelopes \label{stabsec}}
We briefly recall the basic definition and properties of K-theoretic Stable Envelopes. See Section 2.1 of \cite{OkSm}, or Sections 9.1-9.2 of \cite{Ok17} for a more detailed exposition.\\

As before let $T= A\times (\mathbb{C}^{\times})^2$ be the torus acting on $\mathcal{M}(n,r)$. Let $\hbar$ be the character of symplectic form and $A=\ker(\hbar) \subset T$ be the subtorus preserving the symplectic form. The K-theoretic stable envelope is a map between equivariant K-theories
$$\text{Stab}_{(r)}: K_{T}(\mathcal{M}(n,r)^A)\rightarrow K_{T}(\mathcal{M}(n,r))$$
which is uniquely determined by the following three choices.

First,  let  $\sigma \in \textrm{cochar}(A)$ be a generic cocharacter. The choice of $\sigma$  provides a decomposition 
$$
T_{\lambda}  \mathcal{M}(n,r) = N_{\lambda}^{+}\oplus N_{\lambda}^{-}
$$
of the tangent space at a torus fixed point $\lambda$ into attracting and repealing directions. Here $N_{\lambda}^{+}$ and $N_{\lambda}^{-}$ denote the subspaces with $A$-characters which are positive and, respectively, negative on $\sigma$. This choice also fixes the attracting sets to fixed points:
$$\text{Attr}(\lambda) =\{x\in \mathcal{M}(n,r): \lim\limits_{z\rightarrow 0} \sigma(z)\cdot x = \lambda\}$$

The full attracting set $\text{Attr}^f(\lambda)$ of a torus fixed point $\lambda$ is, by definition, the smallest closed subset of $X$ containing $\text{Attr}(\lambda)$ that is closed under $\text{Attr}(\cdot)$.

The second choice is a polarization, which is a class $T^{1/2}\in K_{T}(\mathcal{M}(n,r))$ satisfying 
$$
T\mathcal{M}(n,r)= T^{1/2}+\hbar\left(T^{1/2}\right)^{*}, 
$$
i.e., $T^{1/2}$ is an equivariant ``half'' of the tangent space.

Let $\mathcal{W}$ be the tautological bundle associated to the framing of $\mathcal{M}(n,r)$. This is a trivial rank $r$ vector bundle with the following restrictions to fixed points:
$$\mathcal{W}_{\lambda}=a_1+\cdots+ a_r \in K_{T}(pt)$$

Let $\mathcal{V}$ be the rank $n$ tautological bundle of $\mathcal{M}(n,r)$. The weights of $\mathcal{V}$ at a fixed point $\lambda=(\lambda_1,\lambda_2,...,\lambda_r)$ are given by 
$$\mathcal{V}_{\lambda}= \sum_{\square \in (\lambda_1,\lambda_2,...,\lambda_r)}\varphi_{\lambda_1,\lambda_2,...,\lambda_r}(\square) \in K_{T}(pt)$$
where \begin{equation}\label{weights}
\varphi_{\lambda_1,\lambda_2,...,\lambda_r}(\square)= a_{n(\square)}t_1^{y(\square)}t_2^{x(\square)}
\end{equation}
with $n(\square)=i$ if $\square\in \lambda_i$, and $x(\square), y(\square)$ denote the standard coordinates of a box $\square$ in the Young diagram $\lambda_i$. The function $\varphi_{\lambda_1,\lambda_2,...,\lambda_r}(\square)$ gives the weights associated to a box in a tuple of partitions.
For $\mathcal{M}(n,r)$ a polarization can be taken to be
\begin{equation}\label{tangent}
T^{1/2}=\mathcal{W}^{*}\otimes \mathcal{V}+\frac{1}{t_2}\mathcal{V}^{*}\otimes \mathcal{V}-\frac{1}{\hbar}\mathcal{V}^{*}\otimes \mathcal{V}-\mathcal{V}^{*}\otimes \mathcal{V}\end{equation}
    The symbol $*$ means taking dual in equivariant $K$-theory. This choice of polarization is canonical and exists for any Nakajima variety, see Section 2.2.7 of \cite{MO18}.

Finally, we require a choice of a ``slope'' which is a fractional line bundle $s\in \text{Pic}(\mathcal{M}(n,r))\otimes_{\mathbb{Z}}\mathbb{Q} \cong \mathbb{Q}$ which should be suitably generic, see Section 2.1 of \cite{OkSm} for details.

\begin{prop}(\cite{Ok17},\cite{OkSm}) Let $T, A$ be as above, then for an arbitrary choice of a character $\sigma$, polarization $T^{1/2}$, and slope $s$, there exists a unique map $$\text{Stab}_{(r)}: K_{T}(\mathcal{M}(n,r)^A)\rightarrow K_{T}(\mathcal{M}(n,r))$$ satisfying the following three axioms:

\begin{itemize}
    \item Support condition: $\textup{supp}(\textup{Stab}_{(r)})\subset \textup{Attr}^f$
    \item Degree condition: 
  
    $$\deg_A \left.\textup{Stab}_{(r)}\right|_{F_2 \times F_1} \otimes \left.s\right|_{F_1} \subset \deg_A \left.\textup{Stab}_{(r)}\right|_{F_2\times F_2} \otimes \left.s\right|_{F_2}$$
    Where $F_1, F_2$ are fixed components, and $\deg_A$ is the degree of a Laurent polynomial given by the Newton polygon:
    $$\deg_A \sum f_{\mu} z^{\mu} = \textup{Convex Hull}\left(\{\mu, f_{\mu}\neq 0\}\right)$$
    \item Normalization condition: $$\left.\textup{Stab}_{(r)}\right|_{F\times F} =(-1)^{rk T^{1/2}_{>0}}\left(\frac{\det N_{\lambda}^{-}}{\det T^{1/2}_{\neq 0}}\right)^{1/2}\otimes\Lambda^{\bullet}(N_{\lambda}^{-}) $$
\end{itemize}
    
\end{prop}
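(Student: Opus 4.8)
The plan is to prove the two halves separately: uniqueness by a triangularity argument, and existence by an inductive construction, with the genericity of the slope $s$ doing the work in both. Throughout I would order the components of $\mathcal{M}(n,r)^A$ by the partial order $\preceq$ generated by the attracting relation, declaring $F'\preceq F$ whenever $F'\subseteq \textup{Attr}^f(F)$, with the diagonal blocks $\left.\textup{Stab}_{(r)}\right|_{F\times F}$ singled out by the normalization axiom.

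For uniqueness, the support condition forces $\textup{Stab}_{(r)}$ to be triangular with respect to $\preceq$: the block $\left.\textup{Stab}_{(r)}\right|_{F_2\times F_1}$ vanishes unless $F_2\preceq F_1$, since $\textup{Stab}_{(r)}[\,F_1]$ is supported on $\textup{Attr}^f(F_1)$. Given two maps satisfying all three axioms, their difference $D$ then has vanishing diagonal blocks and is again supported on $\textup{Attr}^f$. Restricting $D[\,F_1]$ to the components $F_2\prec F_1$, the degree condition confines each $A$-Newton polygon $\deg_A \left.D\right|_{F_2\times F_1}$ to the window cut out by the (already fixed) diagonal block at $F_2$ and the slope offset $\left.s\right|_{F_1}-\left.s\right|_{F_2}$. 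A descending induction on $\preceq$, anchored at the diagonal where $D$ vanishes, together with the genericity of $s$ which keeps these windows off the finitely many walls, then forces every block to vanish, so $D=0$.

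For existence I would build $\textup{Stab}_{(r)}[\lambda]$ inductively in the same order. The leading term is dictated by the normalization formula $(-1)^{\mathrm{rk}\,T^{1/2}_{>0}}\bigl(\det N^-_\lambda/\det T^{1/2}_{\neq 0}\bigr)^{1/2}\otimes \Lambda^{\bullet}(N^-_\lambda)$, which I would spread over $\textup{Attr}^f(\lambda)$ using its (Lagrangian) structure sheaf; this candidate satisfies support and normalization but generally overshoots the degree condition on the lower strata. One then corrects it by adding combinations of the already-constructed lower classes $\textup{Stab}_{(r)}[\mu]$ with $\mu\prec\lambda$, choosing the coefficients so that each Newton polygon is pushed inside its prescribed window. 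Because $\mathcal{M}(n,r)$ is a nonabelian GIT quotient by $GL(V)$, the cleanest way to make these corrections explicit is abelianization: first write the stable envelope of the abelian quotient as an honest product over the tautological weights $\varphi_{\lambda_1,\dots,\lambda_r}(\square)$ of (\ref{weights}) using the polarization (\ref{tangent}), then descend to $\mathcal{M}(n,r)$ by the symmetrization procedure of \cite{Ok17} and verify the three axioms termwise.

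The main obstacle is existence, and inside it the verification of the degree condition in $K$-theory. In equivariant cohomology the degree bound is a plain inequality between polynomial degrees and the triangular correction is immediate, but here the condition is a containment of Newton polygons of Laurent polynomials in the $A$-characters, and the real content is that the window determined by $s$ has \emph{exactly} the right width: wide enough that the correction terms from lower strata fit, yet narrow enough to leave no residual freedom. Establishing this width — an index/interval count tied to $\mathrm{rk}\,T^{1/2}$ and to the geometry of $\textup{Attr}^f$ — and confirming that only finitely many slope walls must be avoided is the crux; everything else reduces to bookkeeping with the explicit weights $\varphi_{\lambda_1,\dots,\lambda_r}(\square)$.
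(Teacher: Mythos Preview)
The paper does not prove this proposition at all: it is stated as background, attributed to \cite{Ok17} and \cite{OkSm}, and used without proof. So there is no ``paper's own proof'' to compare against.

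That said, your sketch is the standard outline from those references: uniqueness via triangularity with respect to the attracting order plus the window argument on Newton polygons, and existence via an inductive correction procedure (with abelianization as the practical tool for Nakajima varieties). This is the right shape. One caution: your existence paragraph is more of a narrative than an argument --- the ``width exactly right'' claim you flag at the end is indeed the entire content, and in \cite{Ok17} it is handled not by a direct index count but by a rigidity/properness argument for the difference of two candidates (Section 9.2), which is closer in spirit to your uniqueness half than to a constructive window-fitting. If you were actually writing this up, that is where the work would go.
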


\section{Results for $r=2$ splitting}

Using localization in rank $r$ the relation between the capped and bare descendent vertex functions in the basis of torus fixed points for $\mathcal{M}(n,r)$ is
\begin{equation} \label{cappdef}
     \widehat{\textup{V}}^{(\tau)}(z) ={\Lambda^{\bullet}(T\mathcal{M}(n,r)^{\vee})}\Psi(z/(-q)^r) {\Lambda^{\bullet}(T\mathcal{M}(n,r)^{\vee})}^{-1}\textup{V}^{(\tau)}(z)
\end{equation}
 
Where ${\Lambda^{\bullet}(T\mathcal{M}(n,r)^{\vee})}$ is the normalized matrix of tangent weights for rank $r$.
In particular, for $X=\textup{Hilb}^n(\C^2)$ we have 
\bean \label{cappedrank1}
\widehat{\textup{V}}^{(\tau)}(z) ={\Lambda^{\bullet}(TX^{\vee})}\Psi(-z/q) {\Lambda^{\bullet}(TX^{\vee})}^{-1}\textup{V}^{(\tau)}(z)
\eean
The difference equation satisfied by $\Psi(z)$ is
\bean \label{qdeq}
\Psi(z/q^2) \CO(-1) = \textsf{M}(z) \Psi(z)
\eean

\subsection{Limit of multiplication by $\CO(-1)$, and of the matrix of tangent weights}

Let us fix a decomposition $r=r_1+r_2+\dots+r_m$, and the corresponding split of the framing space $W=\mathbb{C}^r =
\mathbb{C}^{r_1} \oplus \mathbb{C}^{r_2}\oplus \dots \oplus \mathbb{C}^{r_m}$. To such a split, we associate a cocharacter $\sigma: \mathbb{C}^{\times} \to A$, such that $\mathbb{C}^{\times}$
acts on the summands of $W$ with different weights, for instance, by scaling $\mathbb{C}^{r_i}$ with $a^i$, where $a$ is the coordinate on $\mathbb{C}^{\times}$.  
We note that for this choice 
\bean \label{torusfpc}
\mathcal{M}(n,r)^{\mathbb{C}^{*}}  = \coprod\limits_{n_1+\dots+n_m=n}\, \mathcal{M}(n_1,r_1) \times \dots \times \mathcal{M}(n_m,r_m)
\eean 
As we discussed in Section \ref{stabsec}, this choice provides a stable envelope map, which we denote by $\textup{Stab}_{(r_1,r_2,\dots,r_m)}$:
\bean \label{stabs}
\textup{Stab}_{(r_1,r_2,\dots,r_m)}: K_{T}\Big(\coprod\limits_{n_1+\dots+n_m=n}\, \mathcal{M}(n_1,r_1) \times \dots \times \mathcal{M}(n_m,r_m)\Big)\rightarrow K_T(\mathcal{M}(n,r))
\eean 
For example, the maximal split $r=1+\dots+1$ corresponds to a map
 $$\textup{Stab}_{(1,\dots, 1)}:K_T\Big(\coprod\limits_{n_1+\dots+n_r=n}\text{Hilb}^{n_1}(\mathbb{C}^2)\times \cdots \times  \text{Hilb}^{n_r}(\mathbb{C}^2)\Big)\rightarrow K_T(\mathcal{M}(n,r))$$
where $\text{Hilb}^{n_i}(\mathbb{C}^2) =\mathcal{M}(n_i,1)$.  
We also note that in this notation 
$$
\textup{Stab}_{(1)} : K_{T}(\text{Hilb}^n(\mathbb{C}^2)) \rightarrow  K_{T}(\text{Hilb}^n(\mathbb{C}^2)) 
$$
is trivial, i.e., the identity map. All such splittings are in agreement with each other via the following
\textit{triangle lemma}:
\begin{prop} (Prop 9.2.8 of \cite{Ok17}) The following diagram commutes:
\begin{footnotesize}
    \[\begin{tikzcd}[cramped]
	{K_T(\coprod\limits_{n_1+n_2=n} \mathcal{M}(n_1,r_1)\times \mathcal{M}(n_2,r_2))} &&& {K_T(\mathcal{M}(n,r)) } \\
	\\
	& {K_T\Big(\coprod\limits_{n_1+\dots+n_r=n}\textup{Hilb}^{n_1}(\mathbb{C}^2)\times \cdots \times  \textup{Hilb}^{n_r}(\mathbb{C}^2)\Big)}
	\arrow["{\textstyle \textup{Stab}_{(r_1,r_2)}}", from=1-1, to=1-4]
	\arrow["{\textstyle \textup{Stab}_{\underbrace{\small{(1,...,1)}}_{r}}}"', from=3-2, to=1-4]
	\arrow["{\textstyle \textup{Stab}_{\underbrace{(1,...,1)}_{r_1}}\otimes \textstyle \textup{Stab}_{\underbrace{(1,...,1)}_{r_2}}}", from=3-2, to=1-1]
\end{tikzcd}\]
\end{footnotesize}
\end{prop}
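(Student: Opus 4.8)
The plan is to deduce the commutativity from the uniqueness of stable envelopes recalled in Section~\ref{stabsec}: the three axioms (support, degree, normalization) pin down $\textup{Stab}_{(1,\dots,1)}$ uniquely, so it suffices to show that the composite obtained by going around the triangle, namely $\textup{Stab}_{(r_1,r_2)}\circ\big(\textup{Stab}_{(1,\dots,1)}\otimes\textup{Stab}_{(1,\dots,1)}\big)$, satisfies the same three axioms with respect to the data defining the maximal-split envelope. The geometric input that makes this work is a compatible choice of cocharacters: I would take the generic cocharacter $\sigma$ cutting out the full split $r=1+\dots+1$ to be a lexicographic perturbation $\sigma=\sigma_{(r_1,r_2)}+\epsilon\,\sigma'$ in which the coarse separation of the two blocks $\mathbb{C}^{r_1}\oplus\mathbb{C}^{r_2}$ dominates and $\sigma'$ refines each block into rank-one pieces. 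With this choice $\sigma_{(r_1,r_2)}$ lies on the wall bounding the chamber of $\sigma$, and a tangent weight at a fixed point is repelling for $\sigma$ precisely when it is repelling for $\sigma_{(r_1,r_2)}$, or null for $\sigma_{(r_1,r_2)}$ and repelling for $\sigma'$. Hence the negative normal bundle factors as $N^{-}_\lambda=N^{-}_{\mathrm{block}}\oplus N^{-}_{\mathrm{within}}$.

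With the cocharacters fixed, I would check the axioms in turn. For the support condition, the full attracting set of $\sigma$ is obtained by iterating the attracting-set construction first for the block split on $\mathcal{M}(n,r)$ and then within its $\mathbb{C}^{\times}_a$-fixed locus $\coprod \mathcal{M}(n_1,r_1)\times\mathcal{M}(n_2,r_2)$; by transitivity of full attracting sets under this cocharacter hierarchy, the image of the composite is supported in $\textup{Attr}^f$ for the maximal split. For the normalization condition, one restricts to a diagonal block $F\times F$ with $F$ a fully $T$-fixed component; since the leading diagonal of a product of triangular maps is the product of the leading diagonals, and since $\det$ and $\Lambda^{\bullet}$ are multiplicative over the direct sum $N^{-}_\lambda=N^{-}_{\mathrm{block}}\oplus N^{-}_{\mathrm{within}}$, the product of the two constituent normalization factors reproduces $(-1)^{\operatorname{rk}T^{1/2}_{>0}}\big(\det N^{-}_\lambda/\det T^{1/2}_{\neq 0}\big)^{1/2}\otimes\Lambda^{\bullet}(N^{-}_\lambda)$. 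Here I would use that the canonical polarization~(\ref{tangent}) restricts to the block fixed locus as the sum of the polarizations of the factors up to purely attracting/repelling terms, whose determinant contributions match the prescribed square roots.

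The degree condition is where the real work lies, and I expect it to be the main obstacle. One must run all three stable envelopes with a single compatible slope $s$ and verify that the $A$-degree Newton polygons $\deg_A$ combine correctly: the degree window of the composite is controlled by summing the windows of the block-level and within-block envelopes, and the nesting of windows required by the degree axiom for the maximal split must follow from the nesting for each factor together with the way $s$ restricts to the successive fixed loci. Tracking these windows through the perturbation $\sigma=\sigma_{(r_1,r_2)}+\epsilon\,\sigma'$ — and checking that no boundary terms spoil the strict inclusion — is the delicate point; the support and normalization checks are comparatively formal once the cocharacter hierarchy and multiplicativity are in place. Having verified all three axioms, uniqueness forces the composite to equal $\textup{Stab}_{(1,\dots,1)}$, which is exactly the commutativity of the diagram.
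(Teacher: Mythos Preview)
The paper does not supply its own proof of this proposition: it is quoted as Proposition~9.2.8 of \cite{Ok17} and used as a black box. So there is no argument in the paper to compare against; the authors simply invoke the triangle lemma as established in Okounkov's notes.

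That said, your strategy is exactly the one used to prove the result in the cited reference: verify that the composite $\textup{Stab}_{(r_1,r_2)}\circ(\textup{Stab}_{(1,\dots,1)}\otimes\textup{Stab}_{(1,\dots,1)})$ satisfies the three defining axioms for the maximal-split cocharacter and conclude by uniqueness. Your choice of a lexicographic perturbation $\sigma=\sigma_{(r_1,r_2)}+\epsilon\,\sigma'$ so that the block split lies on a face of the chamber of $\sigma$ is the standard setup, and the support and normalization checks you outline are correct and essentially formal, relying on transitivity of full attracting sets and multiplicativity of $\det$ and $\Lambda^{\bullet}$ under the splitting $N^-_\lambda=N^-_{\mathrm{block}}\oplus N^-_{\mathrm{within}}$.

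Your honesty about the degree axiom is well placed: this is indeed the substantive step, and your sketch stops short of carrying it out. The missing ingredient is the compatibility of the slope $s$ under restriction to successive fixed loci --- one must check that the fractional line bundle restricts correctly so that the Newton-polygon windows for the two intermediate envelopes nest inside the window for the full envelope. In \cite{Ok17} this is handled by working with a slope generic for the full torus $A$ and tracking how the degree bound in the $A$-direction decomposes along the cocharacter filtration; once the slope is fixed compatibly, the inclusion of windows follows from adding the inequalities for each stage. Without making this precise your argument remains a plan rather than a proof, but the plan is the right one.
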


Define $\Delta_{(r_1,\dots,r_m)}$ to be the diagonal of the stable envelope map (\ref{stabs}). By the normalization axiom, it is equal to
\begin{equation}\label{delta}
   \Delta_{(r_1,\dots,r_m)}=\hbar^{-n}\left(\frac{\det N^{-}}{\det T_{\neq 0}^{1/2}}\right)^{1/2}\otimes \Lambda^{\bullet}(N^{-})
\end{equation}
where $N^{-}$ denotes the repelling part of the normal bundle to the torus fixed point components (\ref{torusfpc}) inside $\mathcal{M}(n,r)$. \\

We are specifically interested in the rank $r=2$ case, and the splitting $r=1+1$. We assume that the torus $\mathbb{C}^{*}$ scales the first summand of the framing with character $a$ and acts trivially in the second summand.

By $\CO(-1)$ we denote the operator of multiplication by the corresponding line bundle in the equivariant K-theory. Specifically, in the basis of torus fixed points, this operator has the following eigenvalue:
\begin{equation} \label{part1}
\left.\CO(-1)\right|_{(\lambda_1,\lambda_2)}= a^{-|\lambda_1|}\prod_{k=1}^2 \prod_{(i,j)\in \lambda_k} t_1^{1-j}t_2^{1-i}
\end{equation} 
where $\lambda = (\lambda_1, \lambda_2)$ denote a torus fixed point.  By the Kunneth formula we have an isomorphism of equivariant $K$-theories
$$K_T(\mathcal{M}(n_1,r_1)\times \mathcal{M}(n_2,r_2))\cong K_T(\mathcal{M}(n_1,r_1)) \otimes K_T(\mathcal{M}(n_2,r_2)).$$
Accordingly, we will use the notations for diagonal matrices with the following eigenvalues:
\bean \label{parts2}
\left.\left(\CO(-1)\otimes 1\right)\right|_{\lambda}=\prod_{(i,j)\in \lambda_1} t_1^{1-j}t_2^{1-i}, \ \ \ \left.\left(1\otimes \CO(-1) \right)\right|_{\lambda}=\prod_{(i,j)\in \lambda_2} t_1^{1-j}t_2^{1-i} 
\eean

\begin{prop} \label{prop1}
We have the following limit
$$
\lim\limits_{a\to 0} \, \Delta_{(1,1)}^{-1}\mathcal{O}(-1)a^{2n}= \hbar^n \left\{{\hbar}^{n_2}\CO(-1)\otimes 1\right\}
$$

\end{prop}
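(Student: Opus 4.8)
The plan is to verify the identity as an equality of diagonal operators in the fixed-point basis $\{(\lambda_1,\lambda_2)\}$ of (\ref{torusfpc}). Since $\CO(-1)$ is diagonal with eigenvalues (\ref{part1}) and $\Delta_{(1,1)}=\textup{Stab}_{(1,1)}|_{F\times F}$ is diagonal with eigenvalues given by (\ref{delta}), the whole left-hand side is diagonal, and it suffices to compare eigenvalues at a single fixed point $(\lambda_1,\lambda_2)$ with $|\lambda_1|=n_1$, $|\lambda_2|=n_2$. Thus everything reduces to an explicit evaluation of $\det N^-$, $\det T^{1/2}_{\neq 0}$ and $\Lambda^\bullet(N^-)$, followed by the limit $a\to 0$.

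First I would compute the normal bundle to the component $\textup{Hilb}^{n_1}(\C^2)\times\textup{Hilb}^{n_2}(\C^2)$. Writing $\mathcal{W}=a+1$ and $\mathcal{V}=a\,\hat V_1+\hat V_2$ with $\hat V_i=\sum_{\square\in\lambda_i}t_1^{y(\square)}t_2^{x(\square)}$ the $a$-independent character of the $i$-th factor (so that $\mathcal{V}$ reproduces (\ref{weights}) with $a_1=a,\ a_2=1$), I substitute into the tangent character $T\mathcal{M}(n,2)=T^{1/2}+\hbar(T^{1/2})^{*}$ built from the polarization (\ref{tangent}) and collect terms by $a$-weight. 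The weight-zero part is the tangent space of the fixed component, and the nonzero-weight part is the normal bundle $N=N^+\oplus N^-$. A direct check gives the structural identity $N^+=\hbar\,(N^-)^{*}$, reflecting the symplectic pairing between the two factors (it uses $\hbar(1-t_1^{-1})(1-t_2^{-1})=(1-t_1)(1-t_2)$), and, splitting the polarization itself by $a$-weight, $N^-=T^{1/2}_{<0}\oplus\hbar\,(T^{1/2}_{>0})^{*}$. The latter collapses the ratio $(\det N^-/\det T^{1/2}_{\neq 0})^{1/2}$ in (\ref{delta}) into a single Laurent monomial in $a,t_1,t_2$ times a definite power of $\hbar$, so that $\Delta_{(1,1)}$ is known explicitly as a monomial times $\Lambda^\bullet(N^-)$.

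Next I would form $\Delta_{(1,1)}^{-1}\CO(-1)a^{2n}$ and pass to the limit. Each weight $w$ of $N^-$ carries a fixed sign of $a$-exponent, so as $a\to 0$ every factor of $\Lambda^\bullet(N^-)=\prod_{w}(1-w)$ tends either to $1$ or to its leading term $-w$; the prefactor $a^{2n}$, the factor $a^{-|\lambda_1|}$ in (\ref{part1}), and the $a$-powers of $\det N^-$ and $\det T^{1/2}_{\neq 0}$ are arranged precisely so that all powers of $a$ cancel and the limit is finite — this is exactly the normalization the proposition records. What remains is a pure monomial in $t_1,t_2$, which I would then reorganize using $\hbar=t_1t_2$ and the eigenvalue $\prod_{(i,j)\in\lambda_1}t_1^{1-j}t_2^{1-i}$ of $\CO(-1)\otimes 1$ from (\ref{parts2}).

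The crux is this final bookkeeping, namely isolating the exact power of $\hbar$. The target $\hbar^n\{\hbar^{n_2}(\CO(-1)\otimes 1)\}$ is \emph{asymmetric} in the two Hilbert-scheme factors, so the computation must produce an extra $\hbar^{n_2}$ on top of the overall $\hbar^{n}$; this asymmetry has to emerge from the $\hbar=t_1t_2$ content of the monomials $\det N^-$, $\det T^{1/2}_{\neq 0}$ and of $\lim_{a\to 0}\Lambda^\bullet(N^-)$, together with the $\hbar$ in $N^-=T^{1/2}_{<0}\oplus\hbar(T^{1/2}_{>0})^{*}$. The main obstacle is to show that after all cancellations the dependence on the detailed arm/leg combinatorics of $\lambda_1$ and $\lambda_2$ drops out, leaving only $n$, $n_2$, and the $\CO(-1)\otimes 1$ eigenvalue — this is what makes the right-hand side a clean monomial rather than a partition-dependent expression. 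Care with the branch of the square root and with the sign conventions for attracting/repelling directions will be essential here.
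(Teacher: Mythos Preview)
Your plan is correct in outline, but you are making the computation harder than it needs to be, and the ``main obstacle'' you flag is not actually present.

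The key simplification you miss is this: by definition of repelling, \emph{every} weight $w$ of $N^-$ has strictly negative $a$-exponent, so $w\to\infty$ as $a\to 0$. In the paper's convention $\Lambda^\bullet(N^-)=\prod_{w}(1-w^{-1})$, this gives $\Lambda^\bullet(N^-)\to 1$ immediately. There is no dichotomy ``some factors go to $1$, some to $-w$'' --- they all go to $1$. Consequently the only thing left to compute in $\Delta_{(1,1)}^{-1}$ is the monomial prefactor, and since determinants of $K$-theory classes are always Laurent monomials, no arm/leg combinatorics ever enters. Your worry that ``the dependence on the detailed arm/leg combinatorics of $\lambda_1$ and $\lambda_2$'' must miraculously cancel is therefore misplaced: it is never there to begin with.

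The paper also shortcuts the polarization computation: rather than carrying the full $T^{1/2}$ of (\ref{tangent}) with its $\mathcal{V}^*\otimes\mathcal{V}$ terms, it reduces $(\det N^-/\det T^{1/2}_{\neq 0})^{-1/2}$ directly to $\det T^{1/2}_{<0}$, and computes this as the simple monomial $\hbar^{n_2}a^{-n_2}(1\otimes\CO(1))$. Combined with $\CO(-1)=a^{-n_1}(\CO(-1)\otimes\CO(-1))$ and the $\hbar^n$ from (\ref{delta}), the result drops out in two lines. Your approach would reach the same endpoint after more bookkeeping, but the paper's route shows that neither the square-root branch nor the sign conventions you worry about are genuine obstacles.
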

\begin{proof}
    By (\ref{part1}) and (\ref{parts2}) we have: $\CO(-1)= {a^{-n_1}} (\CO(-1)\otimes \CO(-1))$ where $|\lambda_1|=n_1$.  We also compute:

    \begin{align*}
T^{1/2}&=\hbar\mathcal{W}^*\otimes \mathcal{V}= \hbar\left(V_1 a + V_2\right)\left(\frac{1}{a}+1\right)\\ &=\hbar\left(\sum_{(i,j)\in\lambda_1}\left(t_1^{j-1}t_2^{i-1}+at_1^{j-1}t_2^{i-1}\right)+\sum_{(i,j)\in\lambda_2}\left(t_1^{j-1}t_2^{i-1}+a^{-1}t_1^{j-1}t_2^{i-1}\right)\right)
    \end{align*}
The repelling part corresponds to the terms with negative power of $a$:
    $$T^{1/2}_{<0}= \hbar\sum_{(i,j)\in \lambda_2}a^{-1}t_1^{j-1}t_2^{i-1}$$
Thus, taking determinant we get: $$\det(T^{1/2}_{<0}) = \frac{\hbar^{n_2}}{a^{n_2}} \Big(1\otimes \CO(1) \Big)$$    
where $n_2=|\lambda_2|$. Thus, from (\ref{delta}) we compute:
$$\Delta_{(1,1)}^{-1}=\hbar^{n}\frac{\det(T^{1/2}_{<0})}{\Lambda^{\bullet}(N_{\lambda}^{-})}=\hbar^{-n}\frac{1}{\Lambda^{\bullet}(N_{\lambda}^{-})}{\left(\frac{\hbar^{n_2}}{a^{n_2}}1\otimes \CO(1)\right)}$$
Combining these, recalling that $n=n_1+n_2$, and simplifying we see
\begin{align*}
\Delta_{(1,1)}^{-1}\mathcal{O}(-1)a^{n}&=\hbar^{n}\frac{1}{\Lambda^{\bullet}(N_{\lambda}^{-})}{\frac{\hbar^{n_2}}{a^{n_2}} \left(1\otimes \CO(1)\right)}\, \dfrac{1}{a^{n_1}}\, \left(\CO(-1)\otimes \CO(-1) \right)a^{n}\\
      &=\hbar^{n}\frac{1}{\Lambda^{\bullet}(N_{\lambda}^{-})}{\left({\hbar^{n_2}}\CO(-1)\otimes 1\right)}
   \end{align*}
We also have $\Lambda^{\bullet}(N_{\lambda}^{-})=\prod_{w\in N_{\lambda}^{-}}(1-w^{-1})$ with repelling weights $w\to\infty$ as $a\to 0$.
Thus $\lim\limits_{a\rightarrow 0}\dfrac{1}{\Lambda^{\bullet}(N_{\lambda}^{-})}=1$. Combining all this together in the limit $a\rightarrow 0$ we arrive at the desired expression 
$$\lim\limits_{a\rightarrow 0}\Delta_{(1,1)}^{-1}\mathcal{O}(-1)a^{n}=\hbar^{n}{\left({\hbar^{n_2}}\CO(-1)\otimes 1\right)}$$

\end{proof}

\begin{prop} We have the following limit
   $$ \lim\limits_{a\to 0 }\, \Delta_{(1,1)}^{-1}\cdot \textup{Stab}_{(1,1)} = R_{w_0}^{+}$$
\end{prop}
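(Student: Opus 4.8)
The plan is to exploit the triangularity of the stable envelope together with the identification of its off-diagonal part with the wall $R$-matrices of $\mathcal{U}_{\hbar}({\ddot{\mathfrak{gl}_1}})$ established in \cite{OkSm}. By the normalization axiom the diagonal of $\textup{Stab}_{(1,1)}$ is exactly $\Delta_{(1,1)}$, so the operator $\Delta_{(1,1)}^{-1}\textup{Stab}_{(1,1)}$ is unipotent and upper-triangular with respect to the partial order on the fixed components $(\lambda_1,\lambda_2)$ induced by the attracting directions of $\sigma$. Note that $R_{w_0}^{+}=\exp(-\sum_k n_k\,\alpha_k^{w_0}\otimes\alpha_{-k}^{w_0})$ is likewise unipotent for the grading by the sizes $(|\lambda_1|,|\lambda_2|)$, since each $\alpha_k^{w_0}\otimes\alpha_{-k}^{w_0}$ with $k>0$ strictly lowers the first factor while raising the second. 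Thus the entire content of the statement lies in computing the off-diagonal entries of $\Delta_{(1,1)}^{-1}\textup{Stab}_{(1,1)}$ in the limit $a\to 0$ and matching them with $R_{w_0}^{+}$.

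First I would invoke the geometric meaning of the stable envelope as an intertwiner: under the identification $\bigoplus_n K_T(\mathcal{M}(n,2))_{loc}\cong \textsf{Fock}^{\otimes 2}$, the map $\textup{Stab}_{(1,1)}$ intertwines the geometric action of $\mathcal{U}_{\hbar}({\ddot{\mathfrak{gl}_1}})$ built from the coproduct $\Delta$ with the action on the fixed locus. By the results of \cite{OkSm} relating stable envelopes for Nakajima varieties to the monodromy of the quantum difference equation, the normalized stable envelope for a fixed slope $s$ factors as an ordered product of the wall operators $R_w^{\pm}$ over the walls $w$ separating the chamber of $\sigma$ from its neighbor, with the upper-triangular factors $R_w^{+}$ contributing the strictly upper-triangular corrections. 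Since $\Delta_{(1,1)}^{-1}\textup{Stab}_{(1,1)}$ is upper-triangular and unipotent, only the factors $R_w^{+}$ can appear.

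The remaining step is to show that the $a\to 0$ limit collapses this product to the single factor $R_{w_0}^{+}$ associated with the wall $w_0$ adjacent to the chosen slope. Here I would track the $a$-dependence of the matrix elements of the wall operators: each $R_w^{+}$ is assembled from the Heisenberg generators $\alpha_k^{w}$, whose Fock-space action (\ref{heisact}) through the coproduct carries explicit powers of the evaluation parameter $a$, and for every wall $w\neq w_0$ the associated correction is proportional to a strictly positive power of $a$ and hence vanishes as $a\to 0$, while the factor $R_{w_0}^{+}$ survives with a finite limit. Combined with the explicit form of $\Delta_{(1,1)}$ in (\ref{delta}), which governs the diagonal normalization in the same limit, this isolates exactly $R_{w_0}^{+}$. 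An alternative, more computational route is to write the matrix elements $\textup{Stab}_{(1,1)}|_{(\mu_1,\mu_2),(\lambda_1,\lambda_2)}$ via the localization formula, normalize by the diagonal, and evaluate the limit box-by-box, comparing the outcome directly with the Macdonald-basis action of $R_{w_0}^{+}$.

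The main obstacle will be the precise bookkeeping of the wall structure: establishing rigorously that among all walls only $w_0$ contributes a finite, nonzero limit requires a careful analysis of the degree condition in the definition of the stable envelope together with the slope $s$, and showing that the surviving off-diagonal matrix elements reassemble into the normal-ordered exponential $\exp(-\sum_k n_k\,\alpha_k^{w_0}\otimes\alpha_{-k}^{w_0})$ rather than merely into some triangular operator with the correct support. This term-by-term identification of the geometric limit with the algebraic $R$-matrix is where the real work lies, and it is most safely carried out by checking that the two operators agree on the generators of the horizontal Heisenberg subalgebra via the action (\ref{heisact}).
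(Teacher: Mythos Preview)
Your approach is essentially the same as the paper's: factor the stable envelope as $\textup{Stab}_{(1,1)}=\textup{Stab}_{+,\infty}\cdots R_{w_2}^{+}R_{w_1}^{+}R_{w_0}^{+}$ using \cite{OkSm}, identify $\textup{Stab}_{+,\infty}=\Delta_{(1,1)}$, and then argue that every $R_w^{+}$ with $w\neq w_0$ tends to the identity as $a\to 0$. The ``main obstacle'' you flag at the end is not an obstacle in practice: the paper simply cites Section~2.3.3 of \cite{OkSm} for the factorization and equation~(31) of \cite{OkSm} for the fact that the off-diagonal entries of the nonzero-wall $R$-matrices carry strictly positive powers of $a$, so no new term-by-term identification is required.
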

where $R_{w_0}^{+}$ denotes the zero-wall R-matrix. 
\begin{proof}
    By section 2.3.3 of \cite{OkSm} we can write the Stable Envelope as a product of $R$-matrices, as we fixed the positive chamber we have 
    $$\textup{Stab}_{(1,1)}=\textup{Stab}_{+,\infty}\cdots R_{w_2}^{+}R_{w_1}^{+}R_{w_0}^{+}$$
    By definition $\textup{Stab}_{+,\infty}=\Delta_{(1,1)}$ (section 2.3.3 of \cite{OkSm}). Thus, multiplying by the inverse of $\Delta_{(1,1)}$ leaves us with a product of wall $R$-matrices.     By equation (31) of \cite{OkSm} the wall $R$-matrices have 1 on the diagonal, and all of the off diagonal terms have positive powers of $a$ except for the zeroth wall. In the limit of $a\rightarrow 0$ all nonzero wall $R$-matrices go to 1, except the zeroth wall $R_{w_0}^{+}$.
\end{proof}

Let $J(z)$ denote the fusion operator \cite{OkSm}, which is expressed as the following sum:

\bean \label{fusi}
J(z)=\exp\left(-\sum_{k=1}^{\infty}\frac{n_kK^{-k}\otimes K^{k}}{1-z^{k}K^{-k}\otimes K^k}\alpha^{0}_{-k}\otimes \alpha^{0}_{k}\right)
\eean

\begin{prop} The wall 0 $R$-matrix is equal to the fusion operator at 0.
    $$R_{w_0}^{+}= J(0)$$
\end{prop}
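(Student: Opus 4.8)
The plan is to establish the identity by putting both operators into closed form as exponentials of bilinear expressions in the horizontal (slope-$0$) Heisenberg generators $\alpha^0_k$, and then to match the two exponents term by term in $k$. Since both $R_{w_0}^+$ and $J(0)$ are group-like exponentials of elements lying in the (completed) tensor square of the horizontal Heisenberg subalgebra, it suffices to identify their arguments; no Baker--Campbell--Hausdorff corrections enter because the bilinears appearing in a fixed factor commute in the relevant ordering.

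First I would specialize the fusion operator (\ref{fusi}) at $z=0$. Each summand of the exponent carries the coefficient $\tfrac{n_k K^{-k}\otimes K^k}{1-z^k K^{-k}\otimes K^k}$, and as $z\to 0$ the term $z^k K^{-k}\otimes K^k$ vanishes, so every denominator collapses to $1$ and the geometric series truncates to its leading term, giving
$$J(0)=\exp\left(-\sum_{k=1}^{\infty} n_k\,(K^{-k}\otimes K^k)\,(\alpha^0_{-k}\otimes\alpha^0_k)\right).$$
Next I would unpack $R_{w_0}^+$. The zero wall is the slope $w_0=0$, for which $d(w_0)=1$ and $n(w_0)=0$, so the slope-$w_0$ generators $\alpha^{w_0}_k=e_{(k,0)}$ coincide with the horizontal generators $\alpha^0_k$. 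Substituting $w=w_0=0$ into the definition of the upper-triangular wall $R$-matrix from the Heisenberg section yields
$$R_{w_0}^+=\exp\left(-\sum_{k=1}^{\infty} n_k\,\alpha^0_k\otimes\alpha^0_{-k}\right).$$

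The remaining step, which I expect to be the main obstacle, is to reconcile the two exponents, since they differ superficially in two ways: the central prefactors $K^{-k}\otimes K^k$ present in $J(0)$, and the apparent ordering of the tensor factors ($\alpha^0_{-k}\otimes\alpha^0_k$ versus $\alpha^0_k\otimes\alpha^0_{-k}$). The resolution should be pure convention-tracking against \cite{OkSm}. On the Fock representation the central element $K$ acts by a scalar independent of the evaluation parameter, so $K^{-k}\otimes K^k$ evaluates to $1$ and may be dropped; what then remains is to check that the ordering of the two tensor factors and the overall normalization used in (\ref{fusi}) agree with those fixing the zero-wall $R$-matrix that appeared in the previous proposition as $\lim_{a\to 0}\Delta_{(1,1)}^{-1}\cdot\textup{Stab}_{(1,1)}$. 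That ordering is pinned down by the positive chamber chosen there and by the explicit Fock-space presentation of the wall $R$-matrices recorded in \cite{OkSm} (the same equation cited in the preceding proof). Once these conventions are aligned, the two exponents agree term by term and the identity $R_{w_0}^+=J(0)$ follows; the algebraic content beyond convention-matching is just the collapse of the geometric series at $z=0$.
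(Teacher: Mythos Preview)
The paper's proof is a single sentence: it cites Proposition~8 of \cite{OkSm} and does no computation. Your proposal is more ambitious in that it tries to verify the identity directly from the closed formulas for $R_{w_0}^{+}$ and $J(z)$.

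The parts you carry out are correct: setting $z=0$ in (\ref{fusi}) collapses each denominator to $1$, and on the Fock representation $K_{(1,0)}$ acts by the scalar $\hbar^{-1/2}$ on each tensor factor, so $K^{-k}\otimes K^{k}$ acts by $\hbar^{k/2}\cdot\hbar^{-k/2}=1$ and can be dropped. What remains, as you correctly observe, is a discrepancy in the order of the tensor factors: the paper's displayed formula for $R_{w}^{+}$ has $\alpha^0_{k}\otimes\alpha^0_{-k}$ while the displayed $J(0)$ has $\alpha^0_{-k}\otimes\alpha^0_{k}$. You label this ``pure convention-tracking against \cite{OkSm}'' and stop.

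That is an honest stopping point, but it means your argument is not self-contained: the swap is not resolved by anything internal to this paper, and to pin it down you are forced back to the conventions in \cite{OkSm}---which is precisely what the paper's one-line proof already does. So in the end your route and the paper's route converge on the same external reference; the extra computation you do (the $z\to 0$ collapse and the cancellation of the central factors) is correct and illuminating, but it does not replace the citation.
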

\begin{proof}
    See Proposition 8 of \cite{OkSm}.
\end{proof}

\begin{cor}
We have the following limit of the stable envelope:
$$
\lim\limits_{a\to 0 }\, \Delta_{(1,1)}^{-1}\cdot \textup{Stab}_{(1,1)} =  J(0) 
$$
\end{cor}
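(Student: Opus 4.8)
The plan is to obtain this corollary by directly composing the two preceding propositions, so that no genuinely new computation is required beyond recognizing that the two equalities chain transitively. First I would invoke the proposition establishing
\[
\lim\limits_{a\to 0} \, \Delta_{(1,1)}^{-1}\cdot \textup{Stab}_{(1,1)} = R_{w_0}^{+},
\]
whose proof rests on the factorization of $\textup{Stab}_{(1,1)}$ into the product $\textup{Stab}_{+,\infty}\cdots R_{w_1}^{+}R_{w_0}^{+}$, the identification $\textup{Stab}_{+,\infty}=\Delta_{(1,1)}$, and the fact that every wall $R$-matrix other than the zeroth carries strictly positive powers of $a$ off the diagonal and hence collapses to the identity in the limit $a\to 0$.

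Next I would apply the immediately preceding proposition identifying the zeroth wall $R$-matrix with the fusion operator at the origin, namely $R_{w_0}^{+}=J(0)$, where $J(z)$ is the fusion operator of (\ref{fusi}); this identification is Proposition 8 of \cite{OkSm}. Combining the two equalities yields
\[
\lim\limits_{a\to 0} \, \Delta_{(1,1)}^{-1}\cdot \textup{Stab}_{(1,1)} = R_{w_0}^{+} = J(0),
\]
which is exactly the assertion of the corollary.

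Since all of the substantive content—the degree-in-$a$ bookkeeping for the wall $R$-matrices and the algebraic identification of Proposition 8 of \cite{OkSm}—has already been carried out in the two preceding propositions, I do not expect any real obstacle here. The only point requiring a moment of care is that both statements refer to the same operator under the same splitting $r=1+1$ and the same fixed positive chamber, so that the transitive chain is legitimate; this is immediate from the way those propositions are stated. Hence the corollary follows at once.
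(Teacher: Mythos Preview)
Your proposal is correct and follows exactly the paper's own argument: the corollary is obtained by chaining the two preceding propositions, first identifying the limit with $R_{w_0}^{+}$ and then invoking $R_{w_0}^{+}=J(0)$. The paper's proof is the single line ``Apply the previous two propositions,'' which is precisely what you have spelled out.
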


\begin{proof}
    Apply the previous two propositions. 
\end{proof}



\subsection{Limit of Capping Operator}
 Let $\Psi^{(r)}(z)$ be the rank $r$ capping operator in what follows. Let $N_{r}$ denote the tangent bundle over $\mathcal{M}(n,r)$. 

\begin{prop}\label{limitcapping}
We have the following limit of the rank 2 fundamental solution matrix:
\begin{align} \nonumber
    &J(z)J(0)^{-1} \times \lim\limits_{a\to 0} \, \Delta_{(1,1)}^{-1} \Lambda^{\bullet}(N_2) \Psi^{(2)}(z) \Lambda^{\bullet}(N_2)^{-1} \Delta_{(1,1)}\\
    &=\Lambda^{\bullet}(N_1)\Psi^{(1)}(z\hbar)\Lambda^{\bullet}(N_1)^{-1}\otimes \Lambda^{\bullet}(N_1)\Psi^{(1)}(z\hbar^{-1})\Lambda^{\bullet}(N_1)^{-1} 
\end{align}
were $J(z)$ is given by (\ref{fusi}). 
\end{prop}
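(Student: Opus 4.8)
The plan is to use the characterization of the capping operator as the unique fundamental solution of the quantum difference equation (\ref{qdeq}), normalized by $\Psi^{(r)}(0)=\mathrm{Id}$, and to show that the two sides of the asserted identity are normalized solutions of one and the same difference equation in the limit $a\to 0$. Write $C=\Delta_{(1,1)}^{-1}\Lambda^{\bullet}(N_2)$ for the conjugating operator and set $\tilde\Psi^{(2)}(z)=C\,\Psi^{(2)}(z)\,C^{-1}$. Conjugating (\ref{qdeq}) by $C$ gives
\be
\tilde\Psi^{(2)}(z/q^2)\,\bigl(C\,\CO(-1)\,C^{-1}\bigr)=\bigl(C\,\textsf{M}(z)\,C^{-1}\bigr)\,\tilde\Psi^{(2)}(z),
\ee
so the whole problem reduces to the $a\to0$ limits of the two conjugated operators on the right. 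The limit of $C\,\CO(-1)\,C^{-1}$ is controlled by Proposition \ref{prop1}, which already isolates the $\hbar$-powers responsible for the spectral shifts; the substantial part is the limit of the conjugated quantum difference operator $C\,\textsf{M}(z)\,C^{-1}$.

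For this second step I would analyze $\textsf{M}(z)=\CO(1)\prod^{\leftarrow}_{-1\le w<0}B_w(z)$ wall by wall. Each operator $B_w(z)$ acts on $\textsf{Fock}^{\otimes 2}$ through the coproduct of the elliptic Hall algebra, so expanding $\Delta(\alpha^w_{-k}\alpha^w_k)$ separates into diagonal terms supported on a single tensor factor and cross terms of the shape $\alpha^w_{-k}\otimes\alpha^w_k$. I expect the diagonal contributions, after conjugation by $C$ and the limit, to reassemble into a tensor product of two rank $1$ difference operators evaluated at the shifted arguments $z\hbar$ and $z\hbar^{-1}$ (each conjugated by $\Lambda^{\bullet}(N_1)$), the shifts being forced by the $\hbar$-powers of Proposition \ref{prop1} together with the $\hbar^{-krd(w)/2}$ dependence on the rank appearing in (\ref{wallcross}). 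The cross terms are the source of the fusion operator: using the explicit coproduct formulas for $\alpha^0_{\pm k}$ and the identification $\lim_{a\to0}\Delta_{(1,1)}^{-1}\textup{Stab}_{(1,1)}=J(0)$ from the preceding corollary, they should collapse in the limit onto the operator (\ref{fusi}).

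The final step is to combine these limits and to recognize $J(z)J(0)^{-1}$ as the correct intertwiner. The decisive algebraic input is the difference relation satisfied by the fusion operator: $J(z)$ intertwines the limiting rank $2$ difference operator with the factorized tensor product of the two shifted rank $1$ difference operators, while the extra factor $J(0)^{-1}$ removes the $z=0$ value contributed by the cross terms, so that $J(z)J(0)^{-1}\lim_{a\to0}\tilde\Psi^{(2)}(z)$ is the identity at $z=0$, exactly as $\tilde\Psi^{(1)}(z\hbar)\otimes\tilde\Psi^{(1)}(z\hbar^{-1})$ is. With the two objects shown to solve the same tensor product difference equation and to share the normalization at $z=0$, uniqueness of the normalized fundamental solution yields the claim. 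The main obstacle is precisely this second step: one must track the normal ordering in (\ref{wallcross}), the central factors $K^{\pm k}$ in $\Delta(\alpha^0_{\pm k})$, and the pole $1-z^{k}K^{-k}\otimes K^k$ of (\ref{fusi}) with enough care to confirm that the cross terms reproduce exactly $J(z)J(0)^{-1}$ with the stated shifts, rather than a variant differing by a sign, by the inversion $z\mapsto z^{-1}$, or by a transposition of the two tensor factors.
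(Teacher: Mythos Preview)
Your approach is genuinely different from the paper's. The paper's proof is essentially two lines: it quotes from \cite{Sm16} the ready-made factorization
\[
\lim_{a\to 0}\tilde\Psi^{(2)}(z)=Y(z)\,\tilde\Psi^{(1)}(z\hbar)\otimes\tilde\Psi^{(1)}(z\hbar^{-1}),
\]
with $Y(z)$ given explicitly, and then verifies the elementary identity $J(z)J(0)^{-1}=Y(z)^{-1}$ by manipulating the exponent of (\ref{fusi}). No analysis of the difference equation or of the wall-crossing operators is performed here; all of that is outsourced to the reference.

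What you are sketching is, in effect, a direct proof of the \cite{Sm16} result via the uniqueness of normalized solutions of the quantum difference equation. The strategy is sound in principle---conjugate (\ref{qdeq}), pass to the limit, and match initial conditions---and this is indeed the kind of argument that underlies the cited reference. But as you yourself flag, the substantive content lies entirely in the ``second step'': computing $\lim_{a\to 0}C\,\textsf{M}(z)\,C^{-1}$ wall by wall and identifying the off-diagonal contribution with the fusion operator. You have not carried this out; you only state expectations (``I expect the diagonal contributions\dots'', ``they should collapse\dots''). So relative to the paper your proposal is not an alternative proof but an outline of how one might reprove the quoted input. If you want a self-contained argument, the wall-by-wall computation---including the handling of the nonzero walls, which must become trivial in the limit for the same reason the wall $R$-matrices do in the preceding propositions---must actually be executed; otherwise, the honest move is to cite \cite{Sm16} as the paper does and reduce the proposition to the identity $J(z)J(0)^{-1}=Y(z)^{-1}$.
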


\begin{proof} Consider the following modified capping operators \bean \label{tildnotation} \Tilde{\Psi}^{(2)}(z)= \Delta_{(1,1)}^{-1} \Lambda^{\bullet}(N_2) \Psi^{(2)}(z) \Lambda^{\bullet}(N_2)^{-1} \Delta_{(1,1)}, \ \ \ \Tilde{\Psi}^{(1)}(z)=   \Psi^{(1)}(z) 
\eean 
obtained by conjugating our capping operator by the normal weights. In \cite{Sm16} it was shown that for these normalizations we have:
\bean \label{smequa}
\lim\limits_{a\rightarrow 0} \Tilde{\Psi}^{(2)}(z) = Y(z) \Tilde{\Psi}^{(1)}(z\hbar) \otimes \Tilde{\Psi}^{(1)}(z\hbar^{-1})
\eean
where $Y(z)$ is an operator acting on K-theory as the following element of  $\mathcal{U}_{\hbar}({\ddot{\mathfrak{gl}_1}})$:
$$Y(z)=\exp\left(-\sum_{k=1}^{\infty}\frac{n_k K^{-k}\otimes K^{k}}{1-z^{-k}K^{-k}\otimes K^k} \alpha^{0}_{-k}\otimes \alpha^{0}_{k}\right)$$
Since all the terms in the exponent of (\ref{fusi}) commute with each other, it is elementary to check that
\bean \label{JJop}
J(z)J(0)^{-1}=\exp{\left(\sum_{k=1}^{\infty}\frac{n_kK^{-k}\otimes K^{k}}{1-z^{-k}K^{-k}\otimes K^k}\alpha^{0}_{-k}\otimes \alpha^{0}_{k}\right)}=J(z^{-1})^{-1} = Y(z)^{-1}
\eean 
Thus (\ref{smequa}) becomes:
$$
J(z)J(0)^{-1} \times \lim\limits_{a\to 0} \,\Tilde{\Psi}^{(2)}(z)=\Tilde{\Psi}^{(1)}(z\hbar)\otimes \Tilde{\Psi}^{(1)}(z\hbar^{-1})$$
which is what we want to prove. 
\end{proof}

\subsection{Limits of Vertex Functions}

Let $\mathcal{V}$ denote the tautological bundle of rank $n$ on $\mathcal{M}(n,2)$. Let $c_k(\mathcal{V})$ be the diagonal matrix of multiplication by $c_k(\mathcal{V})$ in the basis of fixed points of $\mathcal{M}(n,2)$. Abusing notations as before,  we denote by $c_k(\mathcal{V}) \otimes 1$  the operator acting in the K-theory of $\mathcal{M}(n,2)^{\mathbb{C}^{*}} = \coprod\limits_{n_1+n_2=n} \textrm{Hilb}^{n_1}(\mathbb{C}^2) \times \textrm{Hilb}^{n_2}(\mathbb{C}^2)$ are the operator of multiplication by $c_k(\mathcal{V_1})$ where  denote the tautological bundle for the first component $\textrm{Hilb}^{n_1}(\mathbb{C}^2)$.  Again, we assume that $c_k(\mathcal{V}) \otimes 1$ is a diagonal matrix considering this operator in the basis of torus fixed points. With these notations we have:

\begin{prop} \label{prop4}
We have the following limit
\bean \label{limc}
\lim\limits_{a\to 0} \, c_k(\mathcal{V})= c_k(\mathcal{V}) \otimes 1
\eean 
\end{prop}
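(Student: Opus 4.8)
The plan is to compute the limit $a\to 0$ of the Chern class $c_k(\mathcal{V})$ directly in the basis of torus fixed points, using the explicit description of the weights of the tautological bundle $\mathcal{V}$ given in (\ref{weights}). Since multiplication by $c_k(\mathcal{V})$ is diagonal in the fixed-point basis, the whole statement reduces to an eigenvalue computation at each fixed point $\lambda=(\lambda_1,\lambda_2)$ of $\mathcal{M}(n,2)^{\mathbb{C}^*}=\coprod_{n_1+n_2=n}\mathrm{Hilb}^{n_1}(\mathbb{C}^2)\times\mathrm{Hilb}^{n_2}(\mathbb{C}^2)$, so no operator-theoretic subtleties arise.

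First I would write the restriction of $\mathcal{V}$ to the fixed point $\lambda=(\lambda_1,\lambda_2)$ as the sum of weights over all boxes in both partitions. By (\ref{weights}), a box $\square\in\lambda_i$ contributes the weight $a_{n(\square)}t_1^{y(\square)}t_2^{x(\square)}$; for the splitting $r=1+1$ where $\mathbb{C}^*$ scales the first summand of the framing with character $a$ and acts trivially on the second, the boxes of $\lambda_1$ carry an extra factor of $a$ while the boxes of $\lambda_2$ do not. Thus the Chern roots of $\mathcal{V}_\lambda$ split into two groups: those coming from $\lambda_1$, each carrying a factor $a$, and those coming from $\lambda_2$, with no $a$-dependence. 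Since $c_k$ is the $k$-th elementary symmetric function of the Chern roots, in the expansion every monomial that involves at least one root from $\lambda_1$ carries a strictly positive power of $a$. The plan is then to observe that as $a\to 0$ all such terms vanish, and the only surviving contribution is the elementary symmetric function of the Chern roots coming purely from $\lambda_2$.

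At this point I must reconcile this with the claimed right-hand side $c_k(\mathcal{V})\otimes 1$, which by the convention stated just before the proposition is multiplication by $c_k(\mathcal{V}_1)$ with $\mathcal{V}_1$ the tautological bundle of the \emph{first} Hilbert scheme factor $\mathrm{Hilb}^{n_1}(\mathbb{C}^2)$. This is precisely where I expect the main obstacle: one has to match the surviving group of Chern roots (the $a$-free ones, which naively correspond to $\lambda_2$) against the stated component, and this hinges on the orientation convention for the cocharacter $\sigma$ and on which summand of the framing is deemed ``first'' in the identification (\ref{limdecomp}). In other words, the genuine content is a bookkeeping check that the group of roots surviving the limit is exactly the one labeled as the first tensor factor under the conventions fixed earlier in the section; the statement is consistent only because the $a$-scaling and the ordering of the tensor factors are chosen compatibly. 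I would therefore carefully track the sign of the cocharacter weights (which boxes carry positive versus negative powers of $a$) to confirm that the surviving elementary symmetric function is indeed $c_k(\mathcal{V}_1)$ on the intended component.

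Finally, having identified the limit of each diagonal eigenvalue with the corresponding eigenvalue of $c_k(\mathcal{V})\otimes 1$, I would conclude that the two diagonal operators agree in the limit, which is the desired equality (\ref{limc}). The argument is essentially a weight computation plus a convention check, and apart from the bookkeeping described above it requires no deep input beyond the explicit fixed-point description of $\mathcal{V}$.
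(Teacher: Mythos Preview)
Your proposal is correct and follows essentially the same approach as the paper: both arguments compute the eigenvalue of $c_k(\mathcal{V})$ at a fixed point $(\lambda_1,\lambda_2)$ as the elementary symmetric function of the Chern roots $\varphi_{\lambda_1,\lambda_2}(\square)$, observe that the roots carrying a positive power of $a$ vanish in the limit, and identify what remains with $c_k(\mathcal{V}_1)$.

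Your caution about the bookkeeping is well placed. The paper's own proof in fact writes the framing decomposition as ``$r=r_1+ar_2$'', i.e.\ with $a$ scaling the \emph{second} summand, so that the boxes of $\lambda_2$ carry the factor $a$ and vanish, leaving precisely $c_k(\mathcal{V}_1)=c_k(\mathcal{V})\otimes 1$. This is the opposite labeling from the one stated earlier in the section (and the one you quote), so the discrepancy you anticipated is real and is resolved simply by adopting the convention used in the paper's proof of this proposition; no further argument is needed beyond fixing that orientation.
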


\begin{proof} 
The eigenvalue of $c_k(\mathcal{V})$ at a torus fixed point $(\lambda_1 \lambda_2)\in \mathcal{M}(n,2)^{T}$ equals:
$$\left.c_k(\mathcal{V})\right|_{(\lambda_1 \lambda_2)}=e_k(x_{\square}= \varphi_{\lambda_1,\lambda_2}(\square))$$
where $e_k$ is the $k$-the elementary symmetric polynomial evaluated at (\ref{weights}). By our rank decomposition $r=r_1+ar_2$ we can see that the only factor that carries the $a$ term are the boxes in the second rank 1 space. This corresponds to a limit of $\varphi_{\lambda_1,\lambda_2}(\square)= a_{n(\square)}t_1^{y(\square)}t_2^{x(\square)}$ as follows:

    $$\lim\limits_{a\rightarrow 0}\varphi_{\lambda_1,\lambda_2}(\square) =\begin{cases} 
      \varphi_{\lambda_1}(\square)  & \square\in \lambda_1 \\
      0 & \square\in\lambda_2
   \end{cases}$$
Thus $\lim\limits_{a\to 0} \, \left.c_k(\mathcal{V})\right|_{(\lambda_1 \lambda_2)}= \lim\limits_{a\to 0} \, \left.c_k(\mathcal{V})\right|_{\lambda_1}$, which gives the statement of the proposition. 
\end{proof}

Let $\textup{V}^{\tau,(r)}(z)$ denote the rank $r$ bare vertex function with descendent $\tau$.
\begin{prop}\label{vertexlimit}
We have the following limit of vertex functions:
$$
\lim\limits_{a\to 0} \, \CO(-1)\textup{V}^{c_k(\mathcal{V}),(2)}(z)= \CO(-1)\textup{V}^{c_k(\mathcal{V}),(1)}(-z t_1 t_2 q) \otimes \CO(-1)\textup{V}^{1,(1)}(- z/(t_1t_2 q))
$$
\end{prop}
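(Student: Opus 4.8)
The plan is to evaluate both sides by $T\times\matC_q^\times$-equivariant K-theoretic localization on the quasimap moduli space and to follow the dependence on the framing weight $a$ as $a\to0$. The $G$-fixed loci of $\textsf{QM}^{d}_{ns\,p_2}$ for $\mathcal{M}(n,2)$ are indexed by pairs $\big((\lambda_1,\lambda_2),d_\lambda\big)$, where $(\lambda_1,\lambda_2)$ is a $T$-fixed point of $\mathcal{M}(n,2)$ and $d_\lambda=\{d_\square\}$ is degree data with $|d_\lambda|=\sum_\square d_\square=d$. Localization writes $\textup{V}^{c_k(\mathcal{V}),(2)}(z)$, in the fixed-point basis, as a sum over such loci whose summand is $z^{|d_\lambda|}$ times the descendent $c_k(\mathcal{V})$ evaluated on the degree-deformed tautological fiber at $p_1$ times the contribution $\hat a\big(T^{vir}\big)$ of the symmetrized virtual structure sheaf, where $T^{vir}$ is the virtual tangent space to $\textsf{QM}$ at the fixed quasimap, assembled from the polarization (\ref{tangent}). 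First I would split the box set according to $\lambda_1$ and $\lambda_2$, writing $d_\lambda=d^{(1)}\sqcup d^{(2)}$ so that $z^{|d_\lambda|}=z^{|d^{(1)}|}z^{|d^{(2)}|}$, and decompose $T^{vir}$ by its $a$-weight into a $(\lambda_1,\lambda_1)$-block, a $(\lambda_2,\lambda_2)$-block, and an off-diagonal cross block.

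The two diagonal blocks are $a$-independent (in $\mathcal{V}^{*}\otimes\mathcal{V}$ the framing weight cancels, and in $\mathcal{W}^{*}\otimes\mathcal{V}$ it cancels against $\mathcal{W}^{*}$), and they reproduce verbatim the rank-$1$ virtual tangent spaces of the quasimap moduli of $\text{Hilb}^{n_1}(\matC^2)$ and $\text{Hilb}^{n_2}(\matC^2)$ carrying the degree data $d^{(1)}$ and $d^{(2)}$; hence their $\hat a$-contributions are exactly the summands of the two rank-$1$ bare vertices. For the descendent, Proposition \ref{prop4} (applied to the tautological fiber at $p_1$) gives $\lim_{a\to0}c_k(\mathcal{V})=c_k(\mathcal{V})\otimes 1$: the whole Chern class localizes onto the surviving component, while the vanishing component receives the trivial descendent $\tau=1$. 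This is what produces the asymmetric pair $\textup{V}^{c_k(\mathcal{V}),(1)}\otimes\textup{V}^{1,(1)}$ and identifies the first tensor factor with the component whose box weights stay finite.

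It remains to treat the cross block together with the prefactor $\CO(-1)$. By the eigenvalues (\ref{part1})--(\ref{parts2}), $\CO(-1)$ differs from $\CO(-1)\otimes\CO(-1)$ by an explicit, \emph{degree-independent} power of $a$; this factor is exactly what cancels the degree-independent $a$-weights emitted by the cross block, so that the $a\to0$ limit is finite. This is the role of the normalization flagged in the introduction. The \emph{degree-dependent} part of the cross block is the heart of the matter: the off-diagonal pieces of $\mathcal{V}^{*}\otimes\mathcal{V}$ and $\mathcal{W}^{*}\otimes\mathcal{V}$ couple the two components, so their characters are of the form $a^{\pm1}q^{m}\hbar^{s}$, with $q$ entering only through the degree data $d^{(1)},d^{(2)}$. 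I would compute $\lim_{a\to0}\hat a\big(T^{vir}_{\mathrm{cross}}\big)$ and show that, after the $\CO(-1)$ renormalization, its net effect on the Kähler grading is to multiply the $(d^{(1)},d^{(2)})$-summand by $(-\hbar q)^{|d^{(1)}|-|d^{(2)}|}$. Absorbing these factors into $z$ turns $z^{|d^{(1)}|}z^{|d^{(2)}|}$ into $(-z\hbar q)^{|d^{(1)}|}\big(-z/(\hbar q)\big)^{|d^{(2)}|}$, so that the double sum factorizes into $\CO(-1)\textup{V}^{c_k(\mathcal{V}),(1)}(-zt_1t_2q)\otimes\CO(-1)\textup{V}^{1,(1)}\big(-z/(t_1t_2q)\big)$, which is the claim.

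The main obstacle is precisely this last step: extracting the $a\to0$ limit of the symmetrized cross-term index and pinning down the exact shift $-\hbar q=-t_1t_2q$, including its sign, the power of the source character $q$ (which appears only via the degree data), and the power of the symplectic character $\hbar$. This demands careful bookkeeping of the square-root (symmetrized) virtual structure sheaf and of the polarization (\ref{tangent}) on the off-diagonal weights, and a check that every residual power of $a$ cancels against that supplied by $\CO(-1)$. By contrast, the diagonal blocks and the descendent are routine once Proposition \ref{prop4} is in hand.
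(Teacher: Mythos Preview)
The paper does not give a proof here at all: its entire argument is the sentence ``This is Theorem 4 in \cite{Sm16}.'' So there is no in-paper proof to compare your proposal against; you are supplying what the paper outsources.

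Your outline is the standard route and is, in broad strokes, how such factorization statements are established (and presumably how the cited reference proceeds): localize on $\big(\textsf{QM}^d_{ns\,p_2}\big)^G$, split the virtual tangent space according to the decomposition $\lambda=(\lambda_1,\lambda_2)$ into two $a$-independent diagonal blocks reproducing the rank-$1$ localization contributions and an off-diagonal cross block carrying all of the $a$-dependence, invoke Proposition~\ref{prop4} to localize the descendent onto the first factor, and use the $\CO(-1)$ prefactor to kill the degree-zero $a$-divergence of the cross block. That skeleton is correct, and your identification of which tensor factor carries $c_k(\mathcal{V})$ versus the trivial descendent is right.

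You are also honest about where the work lies: you have not actually computed $\lim_{a\to0}\hat a\big(T^{vir}_{\mathrm{cross}}\big)$ and verified that, after the $\CO(-1)$ renormalization, its sole surviving effect is the monomial $(-\hbar q)^{|d^{(1)}|-|d^{(2)}|}$. This is not a formality. The cross block mixes $\mathcal{W}^*\otimes\mathcal{V}$ and the off-diagonal pieces of $(1/t_2-1/\hbar-1)\,\mathcal{V}^*\otimes\mathcal{V}$, each carrying $q$-shifts from the degree data on \emph{both} $\lambda_1$ and $\lambda_2$; one must check that the $a\to0$ limit of the symmetrized $\hat a$-class collapses all of this to a single monomial in $z$, with no leftover $t_1,t_2$-dependent factor and with the correct sign from the square root in $\hat{\mathcal{O}}_{vir}$. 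Until that index computation is written out, what you have is a plausible plan rather than a proof. If you want to complete it, the cleanest bookkeeping is to separate the $d=0$ (constant-quasimap) cross contribution---which the $\CO(-1)$ exactly cancels by Proposition~\ref{prop1}---from the purely $q$-dependent ratio, and then evaluate the latter weight-by-weight using the explicit polarization (\ref{tangent}).
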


\begin{proof}
This is Theorem 4 in \cite{Sm16}.
\end{proof}

\subsection{A relation between capped descendents}

Let us consider the following descendent:
\begin{equation}\label{descendent}
\tau_n(u)= (u+x_1) \dots (u+x_n) = \sum\limits_{i=0}^{n} u^{n-i} c_{i}(\mathcal{V})
\end{equation}
where $c_{i}(\mathcal{V})$ denotes the $i$-the Chern class of the tautological bundle $\mathcal{V}$, and $x_1,...,x_n$ the Chern roots of $\mathcal{V}$ over $\mathcal{M}(n,r)$. Explicitly, it is the $i$-th symmetric polynomial in the Chern roots. 
We will also need the dual generating function
\begin{equation} \label{dualdes}
\bar{\tau}_n(u)= (1+u x_1^{-1}) \dots (1+u x_n^{-1}) = \sum\limits_{i=0}^{n} u^{i} c_{i}(\mathcal{V}^{*})
\end{equation}
where $c_{i}(\mathcal{V}^{*})$ is the $i$-th Chern class of the dual bundle $\mathcal{V}^{*}$.  We note that $\CO(1)=x_1\dots x_n$ and therefore:
$$
\tau_n(u) = \CO(1) \bar{\tau}_n(u)
$$
We recall that the operator $\textsf{M}(z)$ is the operator of quantum multiplication by $\CO(1)$ and the above identity can be upgraded to capped descendent vertices:
\bean \label{cappedrelat}
\Hat{\textup{V}}^{\tau_n(u)}(z) = \textsf{M}(z q) \Hat{\textup{V}}^{\bar{\tau}_n(u)}(z)
\eean
Our goal is to compute the generating function
\bean \label{compons}
F(z,t_1,t_2, q,u,y)= \sum\limits_{n=0}^{\infty}\, \Hat{\textup{V}}^{\bar{\tau}_n(u)}(z) \, y^n
\eean

\subsection{Limiting linear equation}
Let $\Hat{\textup{V}}^{\tau(u)}(z)$ denote the capped vertex function with descendents considered as an element of the Fock space, i.e., its degree $n$ component is given by  $\Hat{\textup{V}}^{\tau_n(u)}(z)$. 

\begin{prop}
In the tensor square $\textsf{\textup{Fock}}^{\otimes 2}$ we have the following identity 
\bean \label{limiteq}
\widehat{\textup{V}}^{(\tau(u))}(-z\hbar q) \otimes 1 = (J(z)J(0)^{-1}) \Big(\bar{\tau}(u) \otimes 1\Big)
\eean
where $\bar{\tau}(u) $ is the dual descendents (\ref{dualdes})
\end{prop}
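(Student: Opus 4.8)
The plan is to assemble the identity (\ref{limiteq}) from the factorization results already proven, by taking the $a \to 0$ limit of the rank-$2$ relation between capped and bare vertex functions and carefully tracking the conjugating operators. The starting point is equation (\ref{cappdef}) in rank $r=2$, which reads $\widehat{\textup{V}}^{(\tau)}(z) = \Lambda^\bullet(T\mathcal{M}(n,2)^\vee)\,\Psi^{(2)}(z/q^2)\,\Lambda^\bullet(T\mathcal{M}(n,2)^\vee)^{-1}\,\textup{V}^{(\tau)}(z)$. Here I would identify $\Lambda^\bullet(TM(n,2)^\vee)$ with the operator $\Lambda^\bullet(N_2)$ up to the diagonal factors appearing in the definition (\ref{tildnotation}) of $\Tilde\Psi^{(2)}$, so that the conjugation by tangent weights is exactly the one packaged into the modified capping operator. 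First I would rewrite the rank-$2$ capped vertex with $\mathcal{O}(-1)$ normalization inserted on both sides, using the shift $z \mapsto -z/q$ dictated by (\ref{cappedrank1}) together with the rescaling $z \mapsto z/q^2$ from the difference equation (\ref{qdeq}); the net effect is that the capping operator argument and the vertex argument must be matched to the $z\hbar$ and $z\hbar^{-1}$ shifts appearing in Propositions \ref{vertexlimit} and \ref{limitcapping}.

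Next I would take the limit $a \to 0$ of the entire rank-$2$ expression, feeding in the three factorization statements in the order: the capping operator limit (\ref{smequa})--(\ref{JJop}) from Proposition \ref{limitcapping}, giving the factor $J(z)J(0)^{-1} = Y(z)^{-1}$; the bare vertex limit from Proposition \ref{vertexlimit}, which splits $\mathcal{O}(-1)\textup{V}^{(2)}(z)$ into the tensor product $\mathcal{O}(-1)\textup{V}^{(1)}(-zt_1t_2q) \otimes \mathcal{O}(-1)\textup{V}^{1,(1)}(-z/(t_1t_2q))$; and the diagonal-operator limits from Proposition \ref{prop1} and Proposition \ref{prop4}, which control the $\mathcal{O}(-1)$ and Chern-class descendent factors. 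Since the descendent $\tau_n(u)$ of (\ref{descendent}) is a polynomial $\sum_i u^{n-i} c_i(\mathcal{V})$ in Chern classes, Proposition \ref{prop4} lets me conclude that in the limit the descendent collapses to $\bar\tau(u) \otimes 1$ acting only on the first tensor factor, which accounts for the right-hand side of (\ref{limiteq}). The second tensor factor of the bare vertex, namely $\mathcal{O}(-1)\textup{V}^{1,(1)}$ with trivial descendent, is precisely the series that rebuilds into a copy of the rank-$1$ capped vertex with trivial descendent on the second leg — this is why the statement is phrased as $\widehat{\textup{V}}^{(\tau(u))}(-z\hbar q) \otimes 1$ with the explicit $1$ recording that the second factor carries no descendent.

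The main obstacle will be bookkeeping the K\"ahler-parameter shifts and the various $\hbar$-powers so that everything lands on the single argument $-z\hbar q$ on the left of (\ref{limiteq}). The shifts compound: the $-z/q$ of (\ref{cappedrank1}), the $z\hbar^{\pm1}$ splitting of the capping operator, the $-zt_1t_2q$ and $-z/(t_1t_2q)$ of the bare vertex, and the relation between $\widehat{\textup{V}}^{\tau}$ and $\widehat{\textup{V}}^{\bar\tau}$ via $\textsf{M}(zq)$ in (\ref{cappedrelat}) must all be reconciled, using $\hbar = t_1t_2$. I would verify that when the rank-$2$ capped vertex is refactored as $\Tilde\Psi^{(2)} \cdot (\text{bare vertex})$, the first tensor factor of the product $\Tilde\Psi^{(1)}(z\hbar)\otimes\Tilde\Psi^{(1)}(z\hbar^{-1})$ recombines with the first bare-vertex factor $\mathcal{O}(-1)\textup{V}^{(1)}(-zt_1t_2q)$ to reproduce a genuine rank-$1$ capped vertex at the shifted argument $-z\hbar q$, while the residual operator $Y(z)^{-1} = J(z)J(0)^{-1}$ gets moved to the other side to act on $\bar\tau(u)\otimes 1$.

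Finally, I would confirm that the operator $J(z)J(0)^{-1}$ appearing on the right of (\ref{limiteq}) is exactly the factor produced by Proposition \ref{limitcapping}, rather than its inverse, by tracking which side of the capping-operator factorization the fusion operator lands on after the vertex recombination. Because $J(z)J(0)^{-1}$ is an explicit invertible operator on $\textsf{\textup{Fock}}^{\otimes 2}$ whose action is computable via (\ref{heisact}), the resulting linear equation (\ref{limiteq}) will then be solvable for the first tensor factor $\widehat{\textup{V}}^{(\tau(u))}(-z\hbar q)$, which is the key step toward the main formula (\ref{mainformulaintro}).
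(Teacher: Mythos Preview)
Your overall architecture matches the paper's, but there is one genuine gap: you never invoke large framing vanishing (Theorem~\ref{LargeFrame}) in rank $r=2$, and that is the linchpin of the whole argument. The class $\bar\tau(u)\otimes 1$ on the right of (\ref{limiteq}) does not arise because ``the descendent collapses in the limit'' via Proposition~\ref{prop4}; that proposition governs the $a\to 0$ limit of the Chern-class \emph{operator} $c_k(\mathcal{V})$, not of a descendent insertion sitting inside a vertex function. What actually happens is that Theorem~\ref{LargeFrame} forces the rank-$2$ capped vertex $\widehat{\textup{V}}^{(c_k(\mathcal{V}^*)),(2)}(z)$ to be classical, equal to $\mathcal{O}(-1)c_k(\mathcal{V}^*)$ with no $z$-corrections, and \emph{only then} does Proposition~\ref{prop4} compute its $a\to 0$ limit as $c_k(\mathcal{V}^*)\otimes 1$. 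Without this step, the left side of (\ref{cappdef}) in rank $2$ is an unknown power series in $z$ and nothing you cite determines its limit.

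The paper makes this explicit: it opens with (\ref{rank2cap}), where the left side has already been reduced to $\mathcal{O}(-1)c_k(\mathcal{V}^*)$ by Theorem~\ref{LargeFrame}. Taking $a\to 0$ then gives $c_k(\mathcal{V}^*)\otimes 1$ on the left; on the right, your Propositions~\ref{prop1}, \ref{limitcapping}, \ref{vertexlimit} reassemble---after one use each of the difference equation (\ref{qdeq}) and the relation (\ref{cappedrelat}) in the first tensor leg---into $(J(z)J(0)^{-1})^{-1}\bigl(\widehat{\textup{V}}^{(c_k(\mathcal{V}))}(-z\hbar q)\otimes 1\bigr)$. The second leg collapses to $1$ because the rank-$1$ capped vertex with trivial descendent is itself classical, again by large framing vanishing. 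Your shift bookkeeping and the role of (\ref{cappedrelat}) are otherwise on target; the missing ingredient is precisely the reduction of the rank-$2$ capped vertex to its classical value at the very first step.
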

\begin{proof}
From Theorem \ref{LargeFrame} in for rank $r=2$ we have 
\bean \label{rank2cap}
\CO(-1)  c_k(\mathcal{V}^*)=\Lambda^{\bullet}(T\mathcal{M}(n,2)^{\vee})\Psi(z/q^2) \Lambda^{\bullet}(T\mathcal{M}(n,2)^{\vee})^{-1} \textup{V}_O^{(c_k(\mathcal{V}^*))}(z)
\eean 
where $\textup{V}_O^{(c_k(\mathcal{V}^*))}(z)$ denotes the non-normalized vertex function whose expansion in the fixed point basis has the form:
$$\textup{V}_O^{(c_k(\mathcal{V}^*))}(z)=\CO(1) c_k(\mathcal{V}^*) +O(z).$$
The tangent weights decompose as follows $\Lambda^{\bullet}(T\mathcal{M}(n,2)^{\vee})=\Lambda^{\bullet}(TH^{\vee})\oplus  \Lambda^{\bullet}(N_2)$, where $\Lambda^{\bullet}(TH^{\vee})$ are the tangent weights of the Hilbert scheme $\mathcal{M}(n,2)^{A}=\coprod\limits_{n_1+n_2=n} \textup{Hilb}^{n_1}(\mathbb{C}^2) \times  \textup{Hilb}^{n_2}(\mathbb{C}^2) $ and $\Lambda^{\bullet}(N_2)$ are the weights of its normal bundle  inside $\mathcal{M}(n,2)$. Thus:
$$\CO(-1) c_k(\mathcal{V}^*)=\Lambda^{\bullet}(TH^{\vee})  \Lambda^{\bullet}(N_2)  \Psi^{(2)}(z/q^2) \Lambda^{\bullet}(N_2)^{-1}\Lambda^{\bullet}(TH^{\vee})^{-1}\CO(-1)\textup{V}^{(c_k(\mathcal{V}^*))}(z)$$
Or, equivalently 
\begin{align*}
  \CO(-1)   c_k(\mathcal{V}^*)&=\Lambda^{\bullet}(TH^{\vee})  \Lambda^{\bullet}(N_2)\Delta_{(1,1)}^{-1}\Delta_{(1,1)}  \Psi^{(2)}(z/q^2)\\
    &\times\Lambda^{\bullet}(N_2)^{-1}\Lambda^{\bullet}(TH^{\vee})^{-1}\Delta_{(1,1)}\Delta_{(1,1)}^{-1} \CO(-1)\textup{V}^{(c_k(\mathcal{V}^*))}(z)
\end{align*}
Multiplying by $\CO(1)$, and noting that this operator commutes with $\Lambda^{\bullet}(TH^{\vee})$ and $\Delta_{(1,1)}$  we obtain:
\begin{align*}
   c_k(\mathcal{V}^*)&=\Lambda^{\bullet}(TH^{\vee})  \Delta_{(1,1)}\CO(1) a^{-2n}  \Lambda^{\bullet}(N_2)\Delta_{(1,1)}^{-1}\Psi^{(2)}(z)\Lambda^{\bullet}(N_2)^{-1}\Delta_{(1,1)}\\
    &\times\Lambda^{\bullet}(TH^{\vee})^{-1}\Delta_{(1,1)}^{-1} \CO(-1)a^{2n}\textup{V}^{(c_k(\mathcal{V}^*))}(z)
\end{align*}
where we divided and multiplied factors by $a^{2n}$. 
In our notation (\ref{tildnotation}) this simplifies to
\begin{align*}
 c_k(\mathcal{V}^*)&=\Lambda^{\bullet}(TH^{\vee})  \Delta_{(1,1)}\CO(1) a^{-2n}  \Tilde{\Psi^{(2)}}(z/q^2) \Lambda^{\bullet}(TH^{\vee})^{-1}\Delta_{(1,1)}^{-1} \CO(-1)a^{2n}\textup{V}^{(c_k(\mathcal{V}^*))}(z)
\end{align*}
To compute the limit $a\rightarrow 0$, note that the Hilbert Scheme weights $\Lambda^{\bullet}(TH^{\vee})$ do not depend on $a$. Next, by Proposition \ref{prop1}
$$\lim\limits_{a\to 0} \,\Delta_{(1,1)}\CO(1) a^{-2n} =\hbar^{-n} \left\{{\hbar}^{-n_2}\CO(1)\otimes 1\right\}$$
and by Proposition \ref{limitcapping} the limit of the modified capping operator was computed as 

$$ \lim\limits_{a\to 0} \,\Tilde{\Psi}^{(2)}(z/q^2)=\left(J(z)J(0)^{-1} \right)^{-1}\Tilde{\Psi}^{(1)}(-z\hbar/q^2)\otimes \Tilde{\Psi}^{(1)}(-z\hbar^{-1}/q^2)$$
Finally by Proposition \ref{vertexlimit}

$$
\lim\limits_{a\to 0} \, \textup{V}^{c_k(\mathcal{V}),(2)}(z)= \textup{V}^{c_k(\mathcal{V}),(1)}(- z \hbar q) \otimes \textup{V}^{1,(1)}(- z\hbar^{-1} /q)
$$
where we have factored out the $\CO(-1)$ and thus our vertex function begins with $1$ and has a limit. Combining all these terms together, and using (\ref{limc}) in the limit $a\to 0$ we arrive at 
\begin{align*}
   c_k(\mathcal{V}^*)\otimes 1= &\Lambda^{\bullet}(TH^{\vee}) \times (J(0) J(-z)^{-1})^{-1} \times \tilde{\Psi}^{(1)}(-z\hbar q^{-2})\otimes \tilde{\Psi}^{(1)}(-z\hbar^{-1} q^{-2}) \\
    &\times \Lambda^{\bullet}(TH^{\vee})^{-1}\times \textup{V}^{(c_k(\mathcal{V}^*)),(1)}(-z \hbar q ) \otimes \textup{V}^{1,(1)}(-z\hbar^{-1}/q )
\end{align*}
In the first tensor component of the last expression we have
\begin{align*}
\tilde{\Psi}^{(1)}(-z\hbar q^{-2})\frac{1}{\Lambda^{\bullet}(TH^{\vee})}\textup{V}^{(c_k(\mathcal{V}^*)),(1)}(-z \hbar q )=  \mathsf{M}^{(1)}(-z \hbar) \tilde{\Psi}^{(1)}(-z\hbar)\frac{1}{\Lambda^{\bullet}(TH^{\vee})}\textup{V}^{(c_k(\mathcal{V}^*)),(1)}(-z \hbar q )
\end{align*}
where we used QDE (\ref{qdeq}). Using (\ref{cappdef}) for $r=1$ we also write it as: 
\begin{align*}
\mathsf{M}^{(1)}(-z \hbar) \tilde{\Psi}^{(1)}(-z\hbar)\frac{1}{\Lambda^{\bullet}(TH^{\vee})}\textup{V}^{(c_k(\mathcal{V}^*)),(1)}(-z \hbar q ) =\mathsf{M}^{(1)}(-z \hbar) \widehat{\textup{V}}^{(c_k(\mathcal{V}^*)),(1)}(-z \hbar q ) & \\
=\widehat{\textup{V}}^{(c_k(\mathcal{V})),(1)}(-z \hbar q )
\end{align*}
where the last equality is  (\ref{cappedrelat}). 
$$ \tilde{\Psi}^{(1)}(-z\hbar q^{-2})\frac{1}{\Lambda^{\bullet}(TH^{\vee})}\textup{V}^{(c_k(\mathcal{V})),(1)}(-z \hbar q )= \frac{1}{\Lambda^{\bullet}(TH^{\vee})}\, \widehat{\textup{V}}^{(c_k(\mathcal{V})),(1)}(-z \hbar q)$$
Similarly, in the second tensor component by (\ref{cappdef}) for $r=1$ we have
$$ \tilde{\Psi}^{(1)}(-z\hbar^{-1}
q^{-2})\frac{1}{\Lambda^{\bullet}(TH^{\vee})}\textup{V}^{1,(1)}(-z \hbar^{-1}/q ) = \widehat{\textup{V}}^{1,(1)}(-z \hbar^{-1}/q )=1$$
where we used that the capped vertex for descendents $\tau=1$ is trivial by large framing vanishing Theorem any rank.   Combining all this together, in the limit $a\to 0$ we obtain the following relation:
$$(J(z)J(0)^{-1})^{-1}\left(\widehat{\textup{V}}^{(c_k(\mathcal{V}))}(-z\hbar q) \otimes 1\right)=c_k(\mathcal{V}^{*}) \otimes 1,$$
or
$$
\widehat{\textup{V}}^{(c_k(\mathcal{V}))}(-z\hbar q) \otimes 1 = (J(z)J(0)^{-1}) \Big(c_k(\mathcal{V}^{*}) \otimes 1\Big)
$$
Summing over $k$ gives the desired identity. 
\end{proof}

\section{Derivation of Descendents formula}
We find a formula for capped descendent vertex $\widehat{\textup{V}}^{(\tau(u))}(z)$ by solving the identity (\ref{limiteq}) for the first tenor component. This can be done as follows.

Let $H_{\lambda}$ denote the Macdonald polynomial in Haiman's normalization. Our parameters are related to $q,t$ via the relation $q=1/t_1^2, t=1/t_2^2$. 

\begin{prop}\label{macidentity} The following identity holds in the Fock space (\ref{fockdef}):
      $$ \sum_{\lambda}\frac{H_{\lambda}}{\Lambda^{\bullet}(T_{\lambda}\text{Hilb}^{|\lambda|}(\mathbb{C}^2))}y^{|\lambda|}=\exp\left(\sum\limits_{k=1}^{\infty} \dfrac{ y^k \hbar^{2 k} p_{k}}{k (1-t_1^{2k})(1-t_2^{2k})}  \right) $$
\end{prop}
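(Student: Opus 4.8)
The plan is to recognize the right-hand side as the Macdonald reproducing (Cauchy) kernel specialized to a single auxiliary variable, and then to match the resulting sum against the left-hand side using precisely the normalization that defines Haiman's $H_\lambda$. First I would rewrite the target exponential in Macdonald variables. Using the dictionary $q=t_1^{-2},\,t=t_2^{-2}$ and $\hbar=t_1t_2$ stated just before the proposition, one has $\tfrac{\hbar^{2k}}{(1-t_1^{2k})(1-t_2^{2k})}=\tfrac{1}{(1-q^k)(1-t^k)}$, so the right-hand side equals
\[
\exp\!\Big(\sum_{k\ge 1}\frac{y^k\,p_k}{k\,(1-q^k)(1-t^k)}\Big)=\operatorname{Exp}\!\Big[\tfrac{y\,X}{(1-q)(1-t)}\Big],
\]
where $\operatorname{Exp}$ is the plethystic exponential and $X$ is the alphabet carrying the Fock-space power sums $p_k$. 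This is exactly the Macdonald Cauchy kernel $\operatorname{Exp}[XY/((1-q)(1-t))]$ evaluated at the one-variable alphabet $Y=\{y\}$, for which $p_k[Y]=y^k$.

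Next I would invoke the plethystic Cauchy identity for the integral-form modified Macdonald polynomials $\widetilde H_\mu$,
\[
\operatorname{Exp}\!\Big[\tfrac{XY}{(1-q)(1-t)}\Big]=\sum_{\mu}\frac{\widetilde H_\mu[X]\,\widetilde H_\mu[Y]}{w_\mu},\qquad w_\mu=\prod_{\square\in\mu}\bigl(q^{a(\square)}-t^{l(\square)+1}\bigr)\bigl(t^{l(\square)}-q^{a(\square)+1}\bigr),
\]
where $w_\mu$ is the Macdonald norm. Specializing $Y=\{y\}$ and writing $\widetilde H_\mu(y,0,0,\dots)=k_\mu\,y^{|\mu|}$ for the explicit single-variable specialization $k_\mu$, the right-hand side becomes $\sum_\mu \tfrac{k_\mu}{w_\mu}\widetilde H_\mu[X]\,y^{|\mu|}$. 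It then remains to identify, for each $\mu$, the coefficient $\tfrac{k_\mu}{w_\mu}\widetilde H_\mu$ with the geometric coefficient $\tfrac{H_\mu}{\Lambda^{\bullet}(T_\mu\text{Hilb})}$ of the left-hand side. Since Haiman's $H_\mu$ is a scalar multiple of $\widetilde H_\mu$, this reduces to the single scalar identity
\[
H_\mu=\frac{k_\mu\,\Lambda^{\bullet}(T_\mu\text{Hilb}^{|\mu|}(\mathbb{C}^2))}{w_\mu}\,\widetilde H_\mu ,
\]
which I would verify from the explicit arm–leg tangent character $T_\mu\text{Hilb}=\sum_{\square\in\mu}\bigl(t_1^{-l(\square)}t_2^{a(\square)+1}+t_1^{l(\square)+1}t_2^{-a(\square)}\bigr)$, its Euler class $\Lambda^{\bullet}(T_\mu\text{Hilb})=\prod_{\square}(1-t_1^{-l}t_2^{a+1})(1-t_1^{l+1}t_2^{-a})$, and the known formulas for $w_\mu$ and $k_\mu$ under $q=t_1^{-2},\,t=t_2^{-2}$.

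The main obstacle is exactly this last normalization identity: one must confirm that the conversion factor relating the integral-form $\widetilde H_\mu$ to Haiman's $H_\mu$ is precisely $k_\mu\,\Lambda^{\bullet}(T_\mu\text{Hilb})/w_\mu$. This is where the specific ``Haiman normalization'' of the proposition enters in an essential way, and it is a bookkeeping check rather than an analytic difficulty: the tangent Euler class $\Lambda^{\bullet}(T_\mu\text{Hilb})$ agrees with the arm–leg norm $w_\mu$ only up to an explicit monomial in $t_1,t_2$ (equivalently an $\hbar$-power, consistent with $\det T_\mu\text{Hilb}=\hbar^{|\mu|}$), and this monomial must cancel against $k_\mu$ and the defining scalar of $H_\mu$ so that the surviving factor reassembles into the $\hbar^{2k}$ appearing in the statement. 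I would verify the matching directly from the closed forms of $k_\mu$ (the principal specialization) and $w_\mu$, cross-checking on the partitions of size $1$ and $2$, after which the Cauchy identity yields the claimed equality term by term in $y$.
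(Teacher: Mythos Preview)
Your approach is correct and is essentially the same as the paper's: both recognize the right-hand side as the Macdonald Cauchy/kernel identity specialized to a one-variable alphabet and then match coefficients against Haiman's normalization. The paper's proof is simply a one-line citation (``This is the kernel identity for Macdonald scalar product. See \cite{Me16}, with $Y=1$''), so the normalization bookkeeping you spell out is exactly the content being deferred to that reference.
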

\begin{proof}
   This is the kernel identity for Macdonald scalar product. See \cite{Me16}, with $Y=1$.
\end{proof}
\begin{prop}\label{mellitT}
We have the following generating function given by \ref{descendent}:
\bean  \label{genvert0}
\sum\limits_{n=0}^{\infty}  \bar{\tau}_n(u)  y^n  = \exp\left( \sum\limits_{k=1}^{\infty} \dfrac{y^k p_{k}}{k (1-t_1^{2k})(1-t_2^{2k})} \, \hbar^{2 k}  \Big( 1 -  u^k \Big) \right)
\eean
\end{prop}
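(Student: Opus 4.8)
The plan is to compute the image of the class $\bar\tau_n(u)$ in the Fock space by equivariant localization and then recognize the resulting sum as the general two-alphabet Macdonald kernel identity, of which Proposition \ref{macidentity} is the specialization at the trivial second alphabet. Since the fixed-point classes satisfy $[\lambda]|_\mu = \delta_{\lambda\mu}\Lambda^\bullet(T_\lambda\text{Hilb}^{|\lambda|}(\matC^2))$, any class $\alpha$ expands as $\alpha = \sum_\lambda \frac{\alpha|_\lambda}{\Lambda^\bullet(T_\lambda\text{Hilb})}[\lambda]$, so under $[\lambda]\mapsto H_\lambda$ its Fock image is $\sum_\lambda \frac{\alpha|_\lambda}{\Lambda^\bullet(T_\lambda\text{Hilb})}H_\lambda$. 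Applying this to $\alpha = \bar\tau_n(u)$ and using $\mathcal{V}^*|_\lambda = \sum_\square \varphi_\lambda(\square)^{-1}$ gives
\[
\sum_{n=0}^\infty \bar\tau_n(u)\,y^n = \sum_\lambda \frac{\prod_{\square\in\lambda}\bigl(1 + u\,\varphi_\lambda(\square)^{-1}\bigr)}{\Lambda^\bullet(T_\lambda\text{Hilb}^{|\lambda|}(\matC^2))}\,H_\lambda\,y^{|\lambda|}.
\]
At $u=0$ the box product is $1$ and this reduces exactly to Proposition \ref{macidentity}; the task is therefore to carry the extra factor $\prod_\square(1+u\varphi^{-1})$ through the same identity.

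I would then proceed in three steps. First, identify the box product with a plethystic specialization of the Macdonald polynomial: by the Garsia–Haiman–Tesler evaluation formula, $H_\lambda[1-u] = \prod_{\square\in\lambda}\bigl(1 - u\,m(\square)\bigr)$ for the plethystic alphabet $1-u$, where $m(\square)$ is the content monomial of the box, and under the dictionary $q = t_1^{-2}$, $t = t_2^{-2}$ this monomial matches $-\varphi_\lambda(\square)^{-1}$, so that $H_\lambda[1-u] = \prod_\square(1+u\,\varphi_\lambda(\square)^{-1}) = \bar\tau_{|\lambda|}(u)|_\lambda$. Second, invoke the general form of the kernel identity behind Proposition \ref{macidentity} (the reproducing kernel for the pairing whose Gram factor is $\Lambda^\bullet(T_\lambda\text{Hilb})$, cf.\ \cite{Me16}), namely
\[
\sum_\lambda \frac{H_\lambda\,H_\lambda[Y]}{\Lambda^\bullet(T_\lambda\text{Hilb})}\,y^{|\lambda|} = \exp\Bigl(\sum_{k=1}^\infty \frac{y^k\,\hbar^{2k}\,p_k}{k(1-t_1^{2k})(1-t_2^{2k})}\,p_k[Y]\Bigr),
\]
which recovers Proposition \ref{macidentity} at $Y=1$. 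Third, set $Y = 1-u$, so the left side becomes the localization sum above, and evaluate the second-alphabet power sums by $p_k[1-u] = 1 - u^k$. This yields the exponent $\sum_k \frac{y^k\hbar^{2k}(1-u^k)p_k}{k(1-t_1^{2k})(1-t_2^{2k})}$, which is exactly \eqref{genvert0}.

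The main obstacle is the first of these steps: pinning down the exact parameter-and-sign dictionary so that the geometric box product $\prod_\square(1+u\,\varphi_\lambda(\square)^{-1})$ is literally $H_\lambda[1-u]$. The weights $\varphi_\lambda(\square) = t_1^{y(\square)}t_2^{x(\square)}$ carry integer exponents, whereas the Macdonald content monomials appear naturally as $q^{a'(\square)}t^{l'(\square)} = t_1^{-2a'(\square)}t_2^{-2l'(\square)}$, so reconciling the factor of two (and the sign turning $1-u\,m$ into $1+u\,\varphi^{-1}$) requires careful bookkeeping of Haiman's normalization and of the $\hbar^{2k}$ prefactor; this is also the only place where the precise statement of the kernel identity from \cite{Me16} enters beyond its $Y=1$ form. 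Once the dictionary is fixed, the remaining steps are the formal reproducing-kernel manipulation and the elementary evaluation $p_k[1-u]=1-u^k$. As a fallback if the closed specialization formula proves awkward, one can instead write $\prod_\square(1+u\,\varphi^{-1}) = \exp\bigl(\sum_k \tfrac{(-1)^{k-1}u^k}{k}\sum_\square \varphi_\lambda(\square)^{-k}\bigr)$ and track the insertions $\sum_\square \varphi^{-k}$ directly against the Heisenberg action, but the kernel-identity route is cleaner.
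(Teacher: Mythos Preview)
Your approach is correct and is, in effect, the content behind the paper's proof, which consists solely of the citation ``See corollary 6.4 in \cite{Me16}''. The paper does not argue anything; it defers entirely to Mellit. What you have written is essentially an unpacking of that cited result: the two-alphabet Macdonald Cauchy kernel, specialized to the plethystic alphabet $Y=1-u$, together with the box-product evaluation $H_\lambda[1-u]=\prod_{\square}(1-u\cdot m(\square))$. So the mathematical route is the same, just made explicit rather than outsourced.

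The one part you correctly flag as unfinished --- matching $\prod_\square(1+u\,\varphi_\lambda(\square)^{-1})$ to $H_\lambda[1-u]$ under the dictionary $q=t_1^{-2}$, $t=t_2^{-2}$ and Haiman's normalization --- is genuine bookkeeping rather than a conceptual gap; once the $\hbar^{2k}$ factor is absorbed into the normalization of the kernel (as in Proposition~\ref{macidentity}), the exponents line up. Your fallback via expanding the logarithm of the box product and tracking the power-sum insertions against the Heisenberg action would also work and is closer to how one might verify the identity directly from \eqref{alphazer}, but as you say the kernel-identity route is cleaner and is what \cite{Me16} actually does.
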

\begin{proof}
 See corollary 6.4 in \cite{Me16}   
\end{proof}
\noindent Rescaling  $p_i\to p_i/a_i$ and sending $a\to \infty$, from this proposition we obtain the following twisted version of the above normalized Macdonald generating series:
\begin{equation}\label{OSum}
\sum\limits_{n=0}^{\infty} \mathcal{O}_{\textup{Hilb}^n(\mathbb{C}^2)} \,  y^n  = \exp\left( \sum\limits_{k=1}^{\infty} \dfrac{(-1)^k y^k \hbar^{2 k} p_{k}}{k (1-t_1^{2k})(1-t_2^{2k})}   \right)
\end{equation}
Thus, in the tensor square of the Fock space $\textsf{\textup{Fock}}^{\otimes 2}$ we have
$$
    \label{twogen}
\sum\limits_{n_1,n_2=0}^{\infty} 
\, (\tau_{n_1}(u) \otimes \mathcal{O}_{\textup{Hilb}^{n_2}(\mathbb{C}^2)}) y^{n_1+n_2} =
\exp\left( \sum\limits_{k=1}^{\infty} \dfrac{y^k \hbar^{2 k} ( ( 1 -  u^k ) p^{(1)}_{k}+(-1)^k p^{(2)}_k) }{k (1-t_1^{2k})(1-t_2^{2k})}    \right)
$$ 
where the superscripts in $p^(i)_{k}$, $i=1,2$ denote the components of the first and the second factors in the tensor product $\textsf{\textup{Fock}}^{\otimes 2}$. 

From (\ref{JJop}) and $0$-slope Heisenberg algebra action on the Fock space (\ref{alphazer}), we find that $J(z)J(0)^{-1}$ acts  in $\textsf{\textup{Fock}}^{\otimes 2}$ as follows:
$$
J(z)J(0)^{-1}: \, p^{(1)}_k \to p^{(1)}_k, \ \ \  p^{(2)}_k \to p^{(2)}_k  + (-1)^k z^k \hbar^{k} \dfrac{\hbar^k - \hbar^{-k}}{1-z^k} p^{(1)}_k 
$$
Thus, applying this operator to (\ref{twogen}) we obtain
$$
J(z)J(0)^{-1} \sum\limits_{n_1,n_2=0}^{\infty} 
\, (\tau_{n_1}(u) \otimes \mathcal{O}_{\textup{Hilb}^{n_2}(\mathbb{C}^2)}) y^{n_1+n_2}=
$$
$$
 = 
\exp\left( \sum\limits_{k=1}^{\infty} \dfrac{y^k  }{k (1-t_1^{2k})(1-t_2^{2k})} ( ( 1 -  u^k ) \hbar^{2 k} p^{(1)}_{k}+(-1)^k\hbar^{2 k} p^{(2)}_k+z^k \hbar^{ k} \dfrac{\hbar^k - \hbar^{-k}}{1-z^k} p^{(1)}_k  )     \right)
$$
Taking the first component in the tensor product corresponds to $p^{(2)}_k=0$, which gives:
$$
\exp\left( \sum\limits_{k=1}^{\infty} \dfrac{y^k  }{k (1-t_1^{2k})(1-t_2^{2k})} \Big( ( 1 -  u^k ) \hbar^{2 k} p^{(1)}_{k}+ \hbar^{k} z^k \dfrac{\hbar^k - \hbar^{-k}}{1-z^k} p^{(1)}_k  \Big)     \right)
$$
For the generating function (\ref{compons}) we thus proved the following theorem:
\begin{thm}\label{mainformula}
\bean \label{finres}
F(-z  \hbar q,t_1,t_2, q,u,y) =  \nonumber \exp\left( \sum\limits_{k=1}^{\infty} \dfrac{y^k  }{k (1-t_1^{2k})(1-t_2^{2k})} ( ( 1 -  u^k ) \hbar^{2 k} p^{(1)}_{k}+z^k \hbar^{ k} \dfrac{\hbar^k - \hbar^{-k}}{1-z^k} p^{(1)}_k  )\right)
\eean
\end{thm}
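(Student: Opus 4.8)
The plan is to solve the linear relation \eqref{limiteq} for its first tensor factor. By definition \eqref{compons}, $F$ is the generating series in $y$ of the degree-$n$ capped descendent vertices, so once the right-hand side $(J(z)J(0)^{-1})\bigl(\bar\tau(u)\otimes 1\bigr)$ of \eqref{limiteq} is written as an explicit element of $\textsf{Fock}^{\otimes 2}$, extracting its first tensor component gives $F$ evaluated at the shifted argument $-z\hbar q$. Thus the whole computation reduces to (i) making the input class on the right of \eqref{limiteq} explicit as a generating series, (ii) making the operator $J(z)J(0)^{-1}$ explicit on $\textsf{Fock}^{\otimes 2}$, and (iii) applying it and projecting.

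First I would assemble the generating series of the input. Proposition~\ref{mellitT} gives the generating function of the dual descendents $\bar\tau_n(u)$ in the first Fock factor as a plethystic exponential that is linear in the power sums $p_k^{(1)}$, with coefficient proportional to $(1-u^k)\hbar^{2k}$; equation \eqref{OSum} supplies the structure-sheaf series $\sum_n \CO_{\textup{Hilb}^n(\mathbb{C}^2)}\,y^n$ in the second factor, linear in $p_k^{(2)}$ with coefficient proportional to $(-1)^k\hbar^{2k}$. Since both factors are exponentials of expressions linear in power sums, their tensor product over $\textsf{Fock}^{\otimes 2}$ is again a single plethystic exponential whose exponent is the sum of the two, with the variables $p_k^{(1)}$ and $p_k^{(2)}$ decoupled.

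Next I would make $J(z)J(0)^{-1}$ fully explicit. Starting from the closed form \eqref{JJop} and the slope-$0$ Heisenberg action \eqref{alphazer}, in which $\alpha^0_{-k}$ acts by multiplication by a constant multiple of $p_k$ and $\alpha^0_k$ acts by $-k\,\partial/\partial p_k$, the operator $\alpha^0_{-k}\otimes\alpha^0_k$ becomes (multiplication in the first factor) $\otimes$ (differentiation in the second factor). Exponentiating a sum of such commuting first-order operators implements an affine substitution that fixes $p_k^{(1)}$ and sends $p_k^{(2)}\mapsto p_k^{(2)}+(-1)^k z^k\hbar^k\,\tfrac{\hbar^k-\hbar^{-k}}{1-z^k}\,p_k^{(1)}$. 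Applying this substitution to the exponent from the previous step produces the three-term exponent, and projecting onto the first tensor factor, i.e. pairing with the vacuum $1$ in the second factor (equivalently setting $p_k^{(2)}=0$), annihilates the decoupled $p_k^{(2)}$ term while leaving behind the transferred quantum-correction term proportional to $p_k^{(1)}$. By \eqref{limiteq} this first component is the generating series $F(-z\hbar q,t_1,t_2,q,u,y)$, which is the claimed formula.

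The main obstacle is the explicit evaluation of the $J(z)J(0)^{-1}$ action in the third step, namely pinning down the coefficient $(-1)^k z^k\hbar^k\,\tfrac{\hbar^k-\hbar^{-k}}{1-z^k}$ exactly. This requires carefully combining the normalization constant $n_k$, the central factor $K^{-k}\otimes K^k$ (which acts by the appropriate power of $\hbar^{1/2}$), the geometric denominator $1-z^{-k}K^{-k}\otimes K^k$, and the representation constants in \eqref{alphazer}, while tracking the signs $(-1)^k$ and the evaluation parameters. The surrounding steps — that the exponential of the first-order operator acts as the stated substitution, and that the projection to the first factor converts the second-factor linear term into the $z$-dependent contribution — are routine once this coefficient is secured.
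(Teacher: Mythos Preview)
Your proposal is correct and follows essentially the same route as the paper's own derivation in Section~5: you form the two-factor generating series from Proposition~\ref{mellitT} and \eqref{OSum}, recognize via \eqref{JJop} and \eqref{alphazer} that $J(z)J(0)^{-1}$ acts as the affine substitution $p_k^{(2)}\mapsto p_k^{(2)}+(-1)^k z^k\hbar^k\tfrac{\hbar^k-\hbar^{-k}}{1-z^k}\,p_k^{(1)}$, and then set $p_k^{(2)}=0$ to extract the first tensor component. Your identification of the coefficient computation as the only nontrivial bookkeeping step is also exactly where the paper places the weight of the argument.
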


\begin{cor}
The capped vertices $\Hat{\textup{V}}^{(c_k)}(z q)$ do not depend on $q$. 
\end{cor}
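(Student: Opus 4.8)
The plan is to deduce the statement directly from Theorem~\ref{mainformula}, exploiting the single structural feature of its right-hand side (\ref{finres}): it contains no occurrence of the parameter $q$. First I would unpack the left-hand side. By (\ref{compons}) the generating function is $F(w,\ldots)=\sum_n \widehat{\textup{V}}^{\bar\tau_n(u)}(w)\,y^n$ in the K\"ahler variable $w$, and by (\ref{dualdes}) the descendent $\bar\tau_n(u)=\sum_{i=0}^{n} u^i\,c_i(\mathcal{V}^{*})$ is a polynomial in $u$ whose coefficients are precisely the single Chern-class descendents. Hence the coefficient of $u^k y^n$ in $F(w,\ldots)$ is the (degree-$n$) capped vertex $\widehat{\textup{V}}^{(c_k)}(w)$, and for the specialization $w=-z\hbar q$ appearing in Theorem~\ref{mainformula} this coefficient is $\widehat{\textup{V}}^{(c_k)}(-z\hbar q)$.

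Next I would invoke the elementary fact that a power series in $y$ and $u$ is free of $q$ precisely when each of its coefficients is. Since (\ref{finres}) presents $F(-z\hbar q,\ldots)$ as a series in $y,t_1,t_2,u$ and the power-sum variables $p_k$, with $q$ entirely absent, every coefficient $\widehat{\textup{V}}^{(c_k)}(-z\hbar q)$ is independent of $q$ when regarded as a function of $z$.

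The remaining step is a $q$-free change of K\"ahler variable. The displayed shift is $-z\hbar q$, whereas the corollary concerns the shift $zq$; the two differ by the factor $-\hbar=-t_1t_2$, which carries no $q$. Thus precomposing the $q$-free map $z\mapsto\widehat{\textup{V}}^{(c_k)}(-z\hbar q)$ with the substitution $z\mapsto -z/\hbar$, itself introducing no $q$-dependence, turns the argument $-z\hbar q$ into $zq$ while preserving $q$-independence; we conclude that $\widehat{\textup{V}}^{(c_k)}(zq)$ is a $q$-free function of $z$. The only point demanding care, and the one I would state explicitly, is exactly this bookkeeping of the argument: the $q$-independence is a property of the vertex evaluated at a specific $q$-multiple of the K\"ahler parameter, and it is the $q$-freeness of the rescaling factor $-\hbar$ --- rather than any cancellation inside the vertex itself --- that licenses the passage from the shift in Theorem~\ref{mainformula} to the shift in the corollary. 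Since the genuine analytic content was already expended in establishing Theorem~\ref{mainformula}, no further obstacle arises.
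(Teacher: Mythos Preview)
Your proof is correct and follows exactly the approach the paper intends: the corollary is stated immediately after Theorem~\ref{mainformula} with no separate proof, precisely because the right-hand side of (\ref{finres}) is manifestly $q$-free and the coefficient extraction plus the $q$-free rescaling $z\mapsto -z/\hbar$ you spell out is the entire content. Your explicit bookkeeping of the shift from $-z\hbar q$ to $zq$ is the only point the paper leaves to the reader, and you handle it correctly.
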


The coefficients of $y$ Taylor expansion of this generating function are explicitly rational functions of the quantum parameter $z$. Thus, we also obtain:
\begin{cor}\label{rationalcor} The capped vertices 
$\widehat{\textup{V}}_{\textup{Hilb}^n(\mathbb{C}^2)}^{c_k}(z)$ with descendents given by Chern classes $c_k$ are Taylor series expansions of rational functions in the quantum parameter $z$. 
\end{cor}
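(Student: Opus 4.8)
The plan is to read the claim directly off the closed-form expression in Theorem~\ref{mainformula}. First I would observe that the exponent appearing in (\ref{finres}) is a sum over $k\ge 1$ of terms each of which is a rational function of $z$: the entire $z$-dependence enters through the single factor $\dfrac{z^k}{1-z^k}$, which lies in $\matQ(z)$. Consequently, viewing the exponent as $\sum_{k\ge 1} y^k A_k$, each coefficient $A_k$ is a fixed element of $\matQ(t_1,t_2,z,q,u)$ tensored with the power-sum generator $p_k^{(1)}$ of $\textsf{Fock}$, with denominators drawn from $\{\,1-z^k\,\}$. Thus the generating series is \emph{manifestly} rational in $z$ at the level of each monomial in $y$ and in the $p_k$.

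Next I would extract the coefficient of $y^n$ from the plethystic exponential. The coefficient of $y^n$ in $\exp\!\big(\sum_k y^k A_k\big)$ is a universal polynomial (a sum over partitions of $n$) in the finitely many $A_k$ with $1\le k\le n$; since $\matQ(t_1,t_2,z,q,u)$ is a field, closed under finite sums and products, this coefficient is again rational in $z$. By the definitions (\ref{partfundef}) and (\ref{compons}), this $y^n$-coefficient equals $\widehat{\textup{V}}^{\bar\tau_n(u)}(-z\hbar q)$ expanded in the power-sum basis of $\textsf{Fock}$, so after the invertible rational substitution $z\mapsto -z/(\hbar q)$ we conclude that $\widehat{\textup{V}}^{\bar\tau_n(u)}(z)$ is rational in $z$. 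To isolate a single Chern-class descendent I would then read off the coefficient of $u^i$ using $\bar\tau_n(u)=\sum_i u^i c_i(\mathcal{V}^*)$ from (\ref{dualdes}); because $u$ enters (\ref{finres}) only polynomially through the factors $1-u^k$, this extraction touches only finitely many terms and preserves $z$-rationality, giving the result for $c_k(\mathcal{V}^*)$. The relation $\tau_n(u)=\CO(1)\bar\tau_n(u)$ together with (\ref{cappedrelat}), where $\textsf{M}(zq)$ acts with matrix entries rational in $z$, transfers the conclusion to the descendents $c_k(\mathcal{V})$.

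The one point deserving care—and the step I expect to be the crux—is the passage from formal power series in $z$ to genuine rational functions. A priori each $\widehat{\textup{V}}$ lives only in $\matQ(t_1,t_2)[[z]]$, so one must check that taking the $y^n$ (and $u^i$) coefficient truncates the otherwise infinite plethystic product to a \emph{finite} expression, ensuring no hidden infinite resummation in $z$. This finiteness is forced by the grading of $\textsf{Fock}$ by $n$: only the factors $A_k$ with $k\le n$ can contribute to degree $n$, and each such $A_k$ is already a single element of $\matQ(t_1,t_2,z,q,u)$ with denominator among $\{1-z^k\}_{k\le n}$. I would spell this finiteness out explicitly, since it is precisely what upgrades the manifest $z$-rationality of the generating series to the per-$n$ rationality asserted in the corollary.
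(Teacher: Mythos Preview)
Your proposal is correct and follows essentially the same approach as the paper: the paper's entire argument is the single observation that the $y$-coefficients of (\ref{finres}) are manifestly rational in $z$, and you have supplied the careful bookkeeping (finiteness of the $y^n$ and $u^i$ extractions) that makes this precise. Your final step through (\ref{cappedrelat}) and the rationality of $\textsf{M}(zq)$ is extra relative to the paper, which does not distinguish $c_k(\mathcal{V})$ from $c_k(\mathcal{V}^*)$ at this point, but it is correct and harmless.
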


On behalf of all authors, the corresponding author states that there is no conflict of interest.
\nocite{*}
\bibliographystyle{plain}
\bibliography{cappedvertexhilb.bib}

\noindent
Jeffrey Ayers\\
Department of Mathematics, University
of North Carolina at Chapel Hill\\ Chapel Hill, NC 27599-3250, USA\\
jeff97@live.unc.edu\\

\noindent
Andrey Smirnov\\
Department of Mathematics, University
of North Carolina at Chapel Hill\\ Chapel Hill, NC 27599-3250, USA\\
asmirnov@email.unc.edu
\end{document}